\newtheorem{fact}{Fact}
\newtheorem{theorem}{Theorem}[section]
\newtheorem{lemma}{Lemma}[section]
\newtheorem{definition}{Definition}[section]
\newtheorem{assumption}{Assumption}[section]
\newcommand{\floor}[1]{\lfloor #1 \rfloor}
\begin{document}

\title{Faster Min-Plus Product for Monotone Instances}
\author{
	Shucheng Chi \thanks{Tsinghua University, \href{}{chisc21@mails.tsinghua.edu.cn}}
	\and
	Ran Duan \thanks{Tsinghua University, \href{}{duanran@mail.tsinghua.edu.cn}} 
	\and
	Tianle Xie \thanks{Tsinghua University, \href{}{xtl21@mails.tsinghua.edu.cn}}
	\and 
	Tianyi Zhang  \thanks{Tel Aviv University, \href{}{tianyiz21@tauex.tau.ac.il}}
}
\date{}

\maketitle

\begin{abstract}
	In this paper, we show that the time complexity of monotone min-plus product of two $n\times n$ matrices is $\tilde{O}(n^{(3+\omega)/2})=\tilde{O}(n^{2.687})$, where $\omega < 2.373$ is the fast matrix multiplication exponent [Alman and Vassilevska Williams 2021]. That is, when $A$ is an arbitrary integer matrix and $B$ is either row-monotone or column-monotone with integer elements bounded by $O(n)$, computing the min-plus product $C$ where $C_{i,j}=\min_k\{A_{i,k}+B_{k,j}\}$ takes $\tilde{O}(n^{(3+\omega)/2})$ time, which greatly improves the previous time bound of $\tilde{O}(n^{(12+\omega)/5})=\tilde{O}(n^{2.875})$ [Gu, Polak, Vassilevska Williams and Xu 2021]. Then by simple reductions, this means the following problems also have $\tilde{O}(n^{(3+\omega)/2})$ time algorithms:
	\begin{itemize}
	    \item $A$ and $B$ are both bounded-difference, that is, the difference between any two adjacent entries is a constant. The previous results give time complexities of $\tilde{O}(n^{2.824})$ [Bringmann, Grandoni, Saha and Vassilevska Williams 2016] and $\tilde{O}(n^{2.779})$ [Chi, Duan and Xie 2022].
	    \item $A$ is arbitrary and the columns or rows of $B$ are bounded-difference. Previous result gives time complexity of $\tilde{O}(n^{2.922})$ [Bringmann, Grandoni, Saha and Vassilevska Williams 2016].
	    \item The problems reducible to these problems, such as language edit distance, RNA-folding, scored parsing problem on BD grammars. [Bringmann, Grandoni, Saha and Vassilevska Williams 2016].
	\end{itemize}
	Finally, we also consider the problem of min-plus convolution between two integral sequences which are monotone and bounded by $O(n)$, and achieve a running time upper bound of $\tilde{O}(n^{1.5})$. Previously, this task requires running time $\tilde{O}(n^{(9+\sqrt{177})/12}) = O(n^{1.859})$ [Chan and Lewenstein 2015].

	
		
\end{abstract}

\thispagestyle{empty}
\clearpage
\setcounter{page}{1}
\pagestyle{plain}

\section{Introduction}
The min-plus product $C = A\star B$ between two $n\times n$ matrices $A, B$ is defined as $C_{i, j} = \min_{1\leq k\leq n}\{A_{i,k} + B_{k, j} \}$. The straightforward algorithm for min-plus product runs in $O(n^3)$ time, and a long line of research has been dedicated to breaking this cubic barrier. The currently fastest algorithm by Williams~\cite{williams2018faster} for min-plus product runs in time $n^3 / 2^{\Theta(\sqrt{\log n})}$, and it remains a major open question whether a truly sub-cubic running time of $O(n^{3-\epsilon})$ can be achieved for some constant $\epsilon > 0$. In fact, it is widely believed that truly sub-cubic time algorithms do not exist according to the famous APSP hardness conjecture from the literature of fine-grained complexity~\cite{williams2018some}.

Although min-plus product is hard in general cases, when the input matrices have certain structures, truly sub-cubic time algorithms are known. For example, when all matrix entries are bounded in absolute value by $W$, min-plus product can be computed in time $\tilde{O}(Wn^\omega)$~\cite{alon1997exponent}. Matrices with more general structural properties are studied in recent years. In paper \cite{bringmann2019truly}, the authors introduced the notion of bounded-difference matrices.
\begin{definition}\label{bd-def}
	An integral matrix is called \textbf{bounded-difference}, if each pair of adjacent elements differ by at most a constant $\delta$. Formally, a bounded-difference $n\times n$ matrix $X$ satisfies that for any pair of indices $1\leq i, j\leq n$, we have:
	$$|X_{i, j} - X_{i, j+1}|\leq \delta$$
	$$|X_{i, j} - X_{i+1, j}|\leq \delta$$
\end{definition}
The importance of this special type of min-plus product between bounded-difference matrices is demonstrated by its connection to sub-cubic algorithms for other problems (for example, language edit distance~\cite{bringmann2019truly}, RNA folding~\cite{bringmann2019truly}, and tree edit distance~\cite{mao2021breaking}). 
As their main technical result, the authors of~\cite{bringmann2019truly} gave the first sub-cubic time algorithm for computing min-plus product between two $n\times n$ bounded-difference matrices in time $\tilde{O}(n^{2.824})$. This upper bound was improved significantly to $\tilde{O}(n^{2 + \omega/3})$ by a very recent work~\cite{chi2022faster}; here $\omega$ refers to the fast matrix multiplication exponent~\cite{alman2021refined}.

Following \cite{bringmann2019truly}, less restricted types of matrices are studied in~\cite{williams2020truly,gu2021faster}. In their work~\cite{williams2020truly}, Williams and Xu considered the case where one of the input matrices is monotone.
\begin{definition}\label{monotone-def}
	An $n\times n$ integral matrix is called \textbf{row-monotone}, or simply \textbf{monotone}, if all entries are nonnegative integers bounded by $O(n)$ and each row of this matrix is non-decreasing, that is, if $X$ is monotone, then for $i,j$, $0\leq X_{i, j} = O(n), X_{i, j}\leq X_{i, j+1}$. Similarly we can define \textbf{column-monotone} matrix. 
\end{definition}
It was shown in~\cite{gu2021faster} that min-plus product in the bounded-difference setting can be reduced to the monotone setting in quadratic time, so this monotone setting is at least as hard in general. With this definition, Williams and Xu~\cite{williams2020truly} studied the monotone min-plus product problem where $A$ is an arbitrary integral matrix and $B$ is monotone, which has an application in the batch range mode problem, and they presented a sub-cubic algorithm with running time $\tilde{O}(n^{(15+\omega)/6})$. This upper bound was later improved to $\tilde{O}(n^{(12+\omega)/5})$ in a recent work \cite{gu2021faster}.

Other than matrix pairs, the concept of min-plus also applies to sequence pairs. Given two sequences $A, B$ with $n$ entries, their min-plus convolution $C = A\diamond B$ can be defined as $C_{k} = \min_{i=1}^{k-1}\{A_i + B_{k-i}\}, \forall 2\leq k\leq 2n$. Chan and Lewenstein~\cite{chan2015clustered} studied fast algorithms for min-plus convolution when input sequences $A, B$ are monotone.
\begin{definition}\label{seq-monotone-def}
	An integral sequence of length $n$ is called \textbf{monotone}, if this sequence is monotonically increasing, plus that all entries are nonnegative and bounded by $O(n)$.
\end{definition}
When both sequences $A, B$ are monotone, Chan and Lewenstein~\cite{chan2015clustered} showed that min-plus convolution can be computed in sub-quadratic time $\tilde{O}(n^{(9+\sqrt{177})/12}) = O(n^{1.859})$. This problem is important due to its connections with other problems like histogram indexing and necklace alignment~\cite{amir2014hardness,chan2015clustered,bremner2006necklaces}. 

\subsection{Our results}
The main result of this paper is a faster algorithm for min-plus matrix product in the monotone setting.
\begin{theorem}\label{matrix}
	There is a randomized algorithm that computes min-plus product $A\star B$  with expected running time $\tilde{O}(n^{(3+\omega)/2})$, where $A$ is an $n\times n$ integral matrix while $B$ is an $n\times n$ monotone matrix.
\end{theorem}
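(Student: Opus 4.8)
The plan is to follow a \textbf{localize-then-refine} strategy, and it is worth noting up front that only the $O(n)$ bound on the entries of $B$ — not its monotonicity — is needed until the very last step.

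First I would normalize $A$. Put $a_i=\min_k A_{i,k}$ and $A'_{i,k}=A_{i,k}-a_i$; since $0\le B_{k,j}=O(n)$, any $k$ with $A'_{i,k}$ exceeding this bound can never attain the minimum for any output cell in row $i$, so those entries may be set to $+\infty$. Now $A'$ has entries in $[0,O(n)]\cup\{+\infty\}$, the true product is $C_{i,j}=a_i+C'_{i,j}$ with $C'_{i,j}=\min_k\{A'_{i,k}+B_{k,j}\}$, and every entry of $C'$ lies in $[0,O(n)]$ (and, since each row of $B$ is nondecreasing, $C'$ is itself row-monotone — a fact used later). I would also preprocess each row of $B$ so that, for any value $v$, the first or last column $j$ with $B_{k,j}\ge v$ (resp.\ $\le v$) can be found in $O(1)$ time.

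Next, localization. Fix a parameter $w$, to be chosen as $w=\Theta(n^{(\omega-1)/2})$, and form the coarsened matrices $\bar A_{i,k}=\lfloor A'_{i,k}/w\rfloor$ and $\bar B_{k,j}=\lfloor B_{k,j}/w\rfloor$, whose entries lie in $[0,O(n/w)]\cup\{+\infty\}$. Computing $\bar C=\bar A\star\bar B$ is a min-plus product with entries bounded by $O(n/w)$, so it takes $\tilde O\big((n/w)\,n^{\omega}\big)$ time by the classical bounded-entry algorithm~\cite{alon1997exponent}; since $n/w=O(n^{(3-\omega)/2})$ this is genuinely sub-cubic. Because $w\bar A_{i,k}\le A'_{i,k}<w\bar A_{i,k}+w$ and likewise for $B$, a short calculation gives $w\,\bar C_{i,j}\le C'_{i,j}<w\,\bar C_{i,j}+2w$: every output entry is now pinned down to a known window of width $2w$, and the witness $k^*$ of $C'_{i,j}$ must satisfy $\bar A_{i,k^*}+\bar B_{k^*,j}\in\{\bar C_{i,j},\bar C_{i,j}+1\}$.

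The final, and hardest, step is exact recovery: given the windows, compute each $C'_{i,j}$ exactly in total time $\tilde O(w\,n^2)$, which together with the localization cost yields $\tilde O\big((n/w)n^{\omega}+wn^2\big)=\tilde O(n^{(3+\omega)/2})$ for the stated $w$. Here row-monotonicity of $B$ is essential: for every $k$ and every level $q$, the set of columns $j$ with $\bar B_{k,j}=q$ is a contiguous interval. Fixing a row $i$, I would, for each $k$, split $[n]$ into the $O(n/w)$ maximal intervals on which both $\bar B_{k,\cdot}$ and the (also monotone, $O(n/w)$-step) function $\bar C_{i,\cdot}$ are constant; only intervals on which $\bar A_{i,k}+\bar B_{k,j}\in\{\bar C_{i,j},\bar C_{i,j}+1\}$ can yield a witness, and on such an interval the quantity $A'_{i,k}+B_{k,j}$ is nondecreasing in $j$ and confined to a window of width $O(w)$. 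Accumulating these contributions into the per-cell running minimum — using a value-bucketed structure over the $O(w)$-wide windows, charging $O(n/w)$ intervals to each of the $O(n)$ pairs $(i,k)$ (so $\tilde O(n^3/w)$ for the bookkeeping, which is dominated by $n^{(3+\omega)/2}$) plus $\tilde O(wn^2)$ for touching window positions — recovers $C'$ and hence $C$. I expect the bulk of the technical effort, and the real obstacle, to be making this accumulation both correct (handling the two admissible coarse levels separately, and the transitions between intervals) and as cheap as $\tilde O(wn^2)$ rather than $\tilde O(wn^{\omega})$ — this is precisely where the monotone structure is exploited, and where the improvement over the previous $\tilde O(n^{(12+\omega)/5})$ bound~\cite{gu2021faster} comes from; randomization would enter here (or in witness extraction) to keep the per-cell work near-linear in expectation.
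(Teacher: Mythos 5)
Your setup — normalize $A$, coarsen by a window $w$, compute the coarse product, and observe that any witness $k^*$ must satisfy $\bar A_{i,k^*}+\bar B_{k^*,j}\in\{\bar C_{i,j},\bar C_{i,j}+1\}$ — matches the paper's first phase (the paper computes the coarse product combinatorially with segment trees using monotonicity, whereas you use the bounded-entry FMM algorithm; either is fine for that step). The gap is in the recovery phase, which you yourself flag as "the real obstacle" but do not actually solve, and the sketch you give does not work as stated. The number of candidate triples $(i,j,k)$ passing the coarse test can be $\Theta(n^3)$ (e.g.\ $A'\equiv 0$ and $B_{k,j}=j$ for all $k$: every $k$ is a candidate for every cell), so any method that touches candidates individually is cubic. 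Grouping them into $O(n^3/w)$ intervals does not rescue this, because within an interval the update $C'_{i,j}\leftarrow\min\{C'_{i,j},\,A'_{i,k}+B_{k,j}\}$ has a value that \emph{varies with $j$} (by up to $w$), so it is not a constant range-min update; splitting further into constant runs of $B_{k,\cdot}$ brings the count back to $\Theta(n^3)$ on the same example. No "value-bucketed structure" resolves this: you are asked to take the pointwise minimum of $\Theta(n^3/w)$ shifted monotone staircases, and it is not clear how to do that in $\tilde O(wn^2)$.

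The paper's proof needs three ingredients that are absent from your proposal. First, the candidates are aggregated not combinatorially but by encoding the low-order part $A'_{i,k}-w\bar A_{i,k}$ as an exponent of $x$ and the coarse part as an exponent of $y$, and computing a standard $(+,\times)$ product of polynomial matrices with fast matrix multiplication; the minimum exponent of $x$ among terms with the right $y$-degree recovers the answer. Second, to keep the $y$-degree small the coarse parts are reduced modulo a uniformly random prime $p$, which introduces \emph{erroneous terms} ($k$ with $\bar A_{i,k}+\bar B_{k,j}\equiv\bar C_{i,j}+b \pmod p$ but not equal); these must be explicitly enumerated and subtracted, and their expected number is only $\tilde O(n^3/p)$ by a divisor-counting argument — this is where the randomization actually lives, not in "witness extraction." Third, even with this machinery a single coarsening level costs $\tilde O(p\,w\,n^\omega)$ for the polynomial FMM, which with your $w=\Theta(n^{(\omega-1)/2})$ is far超 cubic; the paper only reaches $\tilde O(n^{(3+\omega)/2})$ via a recursive bit-scaling scheme ($A^{(l)}_{i,j}=\lfloor (A_{i,j}\bmod p)/2^l\rfloor$ for $l=h,\dots,0$) in which the $x$-degree at each level is $0$ or $1$, together with maintaining the erroneous-term sets $T^{(l)}_b$ as segments across levels (this is where row-monotonicity of $B$ and $C$ bounds the segment count by $\tilde O(n^{3-\alpha})$). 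Without these ideas the running time you can actually justify is at best the paper's non-recursive bound $\tilde O(n^{2+\omega/3})$, not $\tilde O(n^{(3+\omega)/2})$.
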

This improves on the previous upper bound of $\tilde{O}(n^{(12+\omega)/5})$~\cite{gu2021faster}; as a corollary, by a reduction from the bounded-difference setting to the monotone setting, this also implies that min-plus matrix product between two bounded-difference matrices can be computed in time $\tilde{O}(n^{(3+\omega)/2})$, which improves upon the recent upper bound of $\tilde{O}(n^{2+\omega/3})$~\cite{chi2022faster}.

By adapting our techniques to the monotone min-plus convolution problem, we can achieve the following result: 
\begin{theorem}\label{conv}
	There is a randomized algorithm that computes min-plus convolution between two monotonically increasing integral sequences $A, B$, where entries of $A, B$ are nonnegative integers bounded by $O(n)$, and the expected running time of this algorithm is $\tilde{O}(n^{1.5})$.
\end{theorem}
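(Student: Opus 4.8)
The plan is to mimic the matrix algorithm of Theorem \ref{matrix} in the convolution setting, exploiting the fact that a one-dimensional convolution has only $O(n)$ ``diagonals'' of interaction rather than $n^2$ entries. First I would fix a parameter $t$ (to be chosen as roughly $\sqrt n$) and partition each sequence into blocks of length $t$. Writing $A_i = A^{(a)}\cdot t + \alpha$ with $0 \le \alpha < t$ and similarly $B_{k-i} = B^{(b)}\cdot t + \beta$, the key observation is that since both sequences are monotone and bounded by $O(n)$, the ``high-order'' parts $A^{(a)}, B^{(b)}$ take only $O(n/t)$ distinct values and are themselves monotone; moreover, for a fixed output index $k$, as the split point $i$ ranges over a block, the pair of block-indices $(a,b)$ of the summands is essentially determined up to $O(1)$ choices by monotonicity. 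This is exactly the structural feature that makes the monotone case tractable: the ``witness'' block pair $(a,b)$ minimizing $A^{(a)} + B^{(b)}$ for each $k$ can be swept in near-linear time, and one only needs to handle the residual low-order contributions $\alpha + \beta$ within the $O(1)$ relevant block pairs.

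The key steps, in order, would be: (1) Reduce, by the standard bucketing trick, to computing for each output index $k$ the quantity $\min_i \{A_i + B_{k-i}\}$ restricted to split points $i$ in a single block of $A$ and a single block of $B$; there are $O((n/t)^2)$ such block-pair subproblems but, crucially, only $O(n/t)$ of them are ``active'' for any given $k$. (2) For the high-order parts, precompute for every $k$ and every block-offset the minimum of $A^{(a)} t + B^{(b)} t$ over admissible $(a,b)$; because $A^{(a)}$ and $B^{(b)}$ are monotone step functions with $O(n/t)$ steps, this is a Monge-like / SMAWK-style sweep costing $\tilde O(n \cdot n/t)$. (3) For the low-order parts, observe that within a block the residuals $\alpha_i = A_i - A^{(a)}t$ lie in $[0,t)$, so the relevant min-plus convolution of two blocks has ``small'' entries and can be computed via the bounded-entries algorithm $\tilde O(W n^\omega)$ of \cite{alon1997exponent} applied at block granularity, or more simply by brute force in $O(t^2)$ per block pair; summing over $O(n/t)$ active pairs per $k$ would be too slow, so here I would instead batch all block pairs sharing the same high-order difference and invoke fast matrix multiplication once, paying $\tilde O((n/t)^{\omega} \cdot t)$ or similar. (4) Combine high-order and low-order contributions and take the overall minimum over the $O(1)$ candidate block pairs per $k$, which costs $\tilde O(n)$. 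Balancing the sweep cost $\tilde O(n^2/t)$ against the matrix-multiplication cost (of order $t \cdot (n/t)^{\omega}$, or whatever the precise block subproblem requires) and choosing $t \approx \sqrt n$ yields the claimed $\tilde O(n^{1.5})$ bound; note that since $\omega < 3$, the matrix-multiplication term is dominated at this choice of $t$, and the bottleneck is the $\tilde O(n^2/t) = \tilde O(n^{1.5})$ sweep.

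The main obstacle I anticipate is handling the low-order residual convolutions correctly and within budget. Unlike the matrix case, where one can afford to spend $\tilde O(n^{\omega})$-type cost on a single large matrix product, here the analogous objects are many small convolutions, and a naive treatment (one bounded-entries convolution per active block pair) costs $\tilde O(n/t \cdot t \cdot t^{\omega-1})$ per output region and does not balance well. The fix is to exploit that, after subtracting off high-order parts, the residual problem is itself a min-plus convolution with entries in $[0, O(t))$ on sequences of length $O(n)$ but with only $O(n/t)$ ``relevant'' contributing positions per output index — so one should recurse, or apply the small-entries convolution algorithm at the coarsened scale, rather than at the original scale. Getting the bookkeeping of which block pairs are active for which $k$ to line up with a single call to fast (rectangular) matrix multiplication — and verifying that the monotonicity of $A$ and $B$ indeed limits the active set to $O(1)$ per $k$ after the high-order minimization is resolved — is the delicate part, and is where the argument will most closely parallel, and borrow lemmas from, the proof of Theorem \ref{matrix}.
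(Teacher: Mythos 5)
There is a genuine gap, and it sits at the heart of your steps (1) and (4): the claim that after resolving the high-order minimization only $O(1)$ block pairs per output index $k$ remain relevant is false. Take $A_i = B_i = i$; then for every $k$, all $\Theta(n/t)$ block pairs $(a,b)$ with $a+b$ near $\lfloor k/t\rfloor$ achieve the high-order minimum up to an additive $O(1)$, so the candidate set you must still examine for the low-order residues has size $\Theta(n/t)$ per output index, i.e.\ $\Theta(n^2/t)$ overall, and each candidate carries a nontrivial low-order subproblem. Monotonicity does not prune this set to $O(1)$; it only guarantees that the candidates form $O(n/t)$ contiguous segments per output index, which is a structural fact the paper does exploit, but for a different purpose (bounding the number of \emph{erroneous} terms, not the number of true candidates). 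Your proposed fix --- batching block pairs with the same high-order value into one fast matrix multiplication --- is exactly where the real work lies and is left unspecified; note also that for a convolution the natural aggregation tool is FFT on a trivariate polynomial (with a variable $z^i$ tracking position), not an $(n/t)\times(n/t)$ matrix product, and a direct FFT that keeps the full high-order value in an exponent would have degree $\Theta(n/t)$ in that variable, which is too large.

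The paper's resolution, which your proposal is missing, has two ingredients. First, it encodes the low-order residue in the exponent of $x$ and the high-order approximation in the exponent of $y$ \emph{reduced modulo a random prime $p$}, so a single FFT of small degree suffices; the price is that mod-$p$ collisions introduce spurious candidates, which must be explicitly enumerated and subtracted, and the random choice of $p$ guarantees only $\tilde O(n^{2-\alpha})$ of them in expectation, grouped into segments using monotonicity. Second, even this yields only $\tilde O(n^{1.6})$ in one shot; to reach $\tilde O(n^{1.5})$ the paper runs a bit-scaling recursion over $l=h,\dots,0$ with $A^{(l)}_i=\lfloor (A_i\bmod p)/2^l\rfloor$, so that at each of $O(\log n)$ levels the $x$-degree is $0$ or $1$, the FFT costs only $\tilde O(n^{1+\alpha})$, and the erroneous segments are maintained incrementally across levels; balancing against the $\tilde O(n^{2-\alpha})$ bookkeeping cost at $\alpha=1/2$ gives $n^{1.5}$. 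Your single-level block decomposition with $t\approx\sqrt n$ cannot balance to $n^{1.5}$ once the $O(1)$-candidates claim is removed.
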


In the appendix, we also generalize Theorem~\ref{matrix} to column-monotone $B$:
\begin{theorem}\label{app-matrix}
	There is a randomized algorithm that computes min-plus product $A\star B$  with expected running time $\tilde{O}(n^{(3+\omega)/2})$, where $A$ is an $n\times n$ integral matrix while $B$ is an $n\times n$ column-monotone matrix.
\end{theorem}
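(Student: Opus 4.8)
The goal of Theorem~\ref{app-matrix} is to reduce the column-monotone case to the row-monotone case already handled in Theorem~\ref{matrix}. The plan is to exploit the transpose symmetry of min-plus product. Recall that if $C = A \star B$, then $C^T = B^T \star A^T$, where $(B^T \star A^T)_{j,i} = \min_k\{(B^T)_{j,k} + (A^T)_{k,i}\} = \min_k\{B_{k,j} + A_{i,k}\}$, which is exactly $C_{i,j}$. So computing $A \star B$ is equivalent to computing $B^T \star A^T$ and transposing the result.

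Now I would observe that if $B$ is column-monotone, meaning each column of $B$ is non-decreasing with nonnegative integer entries bounded by $O(n)$, then $B^T$ has each \emph{row} non-decreasing with the same bound, i.e. $B^T$ is row-monotone in the sense of Definition~\ref{monotone-def}. Meanwhile $A^T$ is just an arbitrary integral $n \times n$ matrix, since $A$ was arbitrary. Therefore the instance $B^T \star A^T$ has the left matrix arbitrary? No---wait, here the \emph{right} matrix is $A^T$ which is arbitrary and the \emph{left} matrix $B^T$ is monotone. That is the mirror image of the setting in Theorem~\ref{matrix}, which requires the arbitrary matrix on the left and the monotone matrix on the right. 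So I would need one more symmetry: min-plus product also has a ``reversal'' symmetry obtained by reversing the order of indices. Concretely, define $\widetilde{X}_{i,j} = X_{n+1-i, n+1-j}$; then $(\widetilde{A} \star \widetilde{B})_{i,j} = \min_k\{\widetilde{A}_{i,k} + \widetilde{B}_{k,j}\}$. Reversing the summation index $k \mapsto n+1-k$ shows this equals $\min_k\{A_{n+1-i, k} + B_{k, n+1-j}\} = C_{n+1-i, n+1-j}$, so $\widetilde{A \star B} = \widetilde{A} \star \widetilde{B}$. Crucially, if a matrix is row-monotone (each row non-decreasing), then reversing both index orders makes each row non-increasing, which is not what we want; but reversing only the column order keeps rows monotone. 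I would combine transposition with an appropriate index reversal so that the monotone matrix lands on the right while preserving the monotonicity direction and the $O(n)$ bound.

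Putting it together: given $A$ arbitrary and $B$ column-monotone, form $A' = $ (the matrix $A^T$ with its columns reversed) and $B' = $ (the matrix $B^T$ with its rows reversed). Then $B'$ is row-monotone with entries bounded by $O(n)$, $A'$ is an arbitrary integral matrix, and $A' \star B'$ equals $(A \star B)^T$ up to a fixed index reversal. Invoke Theorem~\ref{matrix} on the pair $(A', B')$ to compute $A' \star B'$ in expected time $\tilde{O}(n^{(3+\omega)/2})$, then undo the transposition and reversals in $O(n^2)$ time to recover $A \star B$. The preprocessing and postprocessing are all $O(n^2)$, so the overall expected running time is dominated by the call to Theorem~\ref{matrix}, namely $\tilde{O}(n^{(3+\omega)/2})$.

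The only genuinely delicate point is bookkeeping the directions of monotonicity and summation under the composition of transpose and reversal, making sure that the final matrix fed into Theorem~\ref{matrix} really is non-decreasing along rows (not non-increasing) and that its entries are still nonnegative and $O(n)$---which they are, since transposition and index permutation do not change the multiset of entries. There is no real algorithmic obstacle here; the proof is essentially a symmetry argument, and I would present it as such, deferring it to the appendix as the authors indicate.
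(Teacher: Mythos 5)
Your reduction does not go through: the symmetries you invoke cannot move the monotone factor from the left of the $\star$ to the right. The only identity that exchanges the two factors of a min-plus product is transposition, $(A\star B)^T=B^T\star A^T$, and it necessarily transposes each factor as well; index reversals such as $\widetilde{X}_{i,j}=X_{n+1-i,n+1-j}$ act factor-wise and never exchange the factors. To see the obstruction cleanly, write $C_{i,j}=\min_k\{A_{i,k}+B_{k,j}\}$ and classify the hypothesis by which of the three indices $i,k,j$ the structured matrix is monotone in: row-monotone $B$ (Theorem~\ref{matrix}) is monotonicity in the \emph{outer} index $j$, while column-monotone $B$ (Theorem~\ref{app-matrix}) is monotonicity in the \emph{inner} index $k$. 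Transposition sends ``right factor monotone in the outer index'' to ``left factor monotone in the outer index'' and ``monotone in the inner index'' to ``monotone in the inner index''; permutations of rows, of columns, or of the inner index preserve which index is inner. Hence the class ``monotone in the inner index'' is invariant under every operation at your disposal and can never be converted into an instance of Theorem~\ref{matrix}. Your final construction exposes the problem: you feed $(A',B')$, with $A'$ built from $A^T$ and $B'$ from $B^T$, into Theorem~\ref{matrix} as left and right factors, so you are computing a permuted copy of $A^T\star B^T$ — but min-plus product is not commutative, and $A^T\star B^T\neq(A\star B)^T=B^T\star A^T$ in general. (Indeed the inner index of your product $A'\star B'$ ranges over a permuted copy of the \emph{output} index $j$ of the original instance, so it computes an unrelated quantity.) The paper's remark that ``$(A\star B)^T=B^T\star A^T$ also solves the case that $A$ is row- or column-monotone'' is precisely the statement that transposition only converts between ``$B$ structured'' and ``$A$ structured'', not between row- and column-monotone.

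What the paper actually does is give a genuinely separate algorithm in the appendix. It first normalizes $A$ to be non-increasing along rows (reduction (c) in Section~\ref{sec:pre}: when $B$ is column-monotone, an entry $A_{i,k+1}>A_{i,k}$ can never be a witness, so it may be overwritten by $A_{i,k}$), and then reruns the recursive machinery of Section~\ref{sec:recursive} with the segment structure reorganized: segments become intervals $(i,j,[k_0,k_1])$ of the \emph{inner} index $k$ on which $A^{(l)}_{i,k}$, $A^*_{i,k}$, $B^{(l)}_{k,j}$, $B^*_{k,j}$ are all constant, instead of intervals of the output index $j$. The bound on the number of erroneous segments (Lemma~\ref{app-mono-tb}) and the correctness argument must be redone for this segment structure. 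A new proof is genuinely required here; a pure symmetry argument cannot substitute for it.
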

Since $(A\star B)^T=B^T\star A^T$, these also solve the case that $A$ is row-monotone or column-monotone and $B$ is arbitrary.

\subsection{Technical overview}
In this subsection, we take an overview of our algorithm for monotone matrix min-plus product. The basic algorithmic framework follows the main idea of the previous work \cite{chi2022faster} but with some important modification so that it can achieve a running time of $\tilde{O}(n^{2+\omega/3})$ for monotone min-plus product instead of bounded-difference min-plus product. To push it down to $\tilde{O}(n^{(3+\omega)/2})$ as stated in Theorem~\ref{matrix}, we need to follow a certain recursive paradigm. For simplicity, let us assume for now that $\omega=2$.\\

\noindent\textbf{The basic algorithm.} Similar to \cite{gu2021faster}, as the first step we take the approximation matrices $\tilde{A}, \tilde{B}$ of the input $A, B$, which are defined as $\tilde{A}_{i, j} = \floor{A_{i, j} / n^{1/3}}$ and $\tilde{B}_{i, j} = \floor{B_{i, j} / n^{1/3}}$, respectively, and then compute $\tilde{C} = \tilde{A}\star\tilde{B}$ using an elementary combinatorial method which takes time $\tilde{O}(n^{8/3})$. (See Section~\ref{mono-basic}.)

The approximation matrix $\tilde{C}$ gives a necessary condition for witness indices $k$ such that $A_{i, k} + B_{k, j} = C_{i, j}$: if the equality holds, then it must be the case that $\tilde{A}_{i, k} + \tilde{B}_{k, j} - \tilde{C}_{i, j} = O(1)$. Using this fact, build the following two polynomial matrices $A(x, y), B(x, y) $ on variables $x, y$:
$$A_{i, k}(x, y) = x^{A_{i, k} - n^{1/3}\cdot \tilde{A}_{i, k}}\cdot y^{\tilde{A}_{i, k}}$$
$$B_{k, j}(x, y) = x^{B_{k, j} - n^{1/3}\cdot \tilde{B}_{k, j}}\cdot y^{\tilde{B}_{k, j}}$$
Suppose we can directly compute $C(x, y) = A(x, y)\cdot B(x, y)$ under the standard notion of $(+, \times)$ of matrix product. Then, to search for the true value $C_{i, j} = \min_k\{A_{i, k} + B_{k, j}\}$, we only need to look at terms $x^c y^d$ of polynomial $C_{i, j}(x, y)$ such that $|d - \tilde{C}_{i, j}| = O(1)$, and determine $C_{i, j}$ to be the minimum over all values of $c + n^{1/3}d$.

Unfortunately, computing $C(x, y) = A(x, y)\cdot B(x, y)$ is very costly in general since the degrees of $y$ can be very large. To reduce the $y$-degrees, the idea is to take $p$-modulo on the exponent of $y$, where $p = \Theta(n^{1/3})$ is a random prime number. Formally, construct two polynomial matrices $A^p(x, y), B^p(x, y)$ as following:
$$A_{i, k}^p(x, y) = x^{A_{i, k} - n^{1/3}\cdot \tilde{A}_{i, k}}\cdot y^{\tilde{A}_{i, k}\mod p}$$
$$B_{k, j}^p(x, y) = x^{B_{k, j} - n^{1/3}\cdot \tilde{B}_{k, j}}\cdot y^{\tilde{B}_{k, j}\mod p}$$
In this way, matrix product $C^p = A^p\cdot B^p$ only requires running time $\tilde{O}(n^{8/3})$. The problem with this approach is that, when we go over all the terms $x^cy^d$ of polynomial $C_{i, j}(x, y)$ such that $|d - \tilde{C}_{i, j}\mod p| = O(1)$, $c+n^{1/3}d$ might be an underestimate of $C_{i, j}$; in fact, it could be the case that for some index $k$, we have:
$$\begin{aligned}
&c = A_{i, k} - n^{1/3}\cdot \tilde{A}_{i, k} + B_{k, j} - n^{1/3}\cdot \tilde{B}_{k, j}\\
&d \equiv \tilde{A}_{i, k} + \tilde{B}_{k, j} \mod p\\
&d \neq \tilde{A}_{i, k} + \tilde{B}_{k, j}
\end{aligned}$$
To resolve this issue, we should first enumerate all triples $i, j, k$ such that $d \equiv \tilde{A}_{i, k} + \tilde{B}_{k, j} \mod p$ and $d \neq \tilde{A}_{i, k} + \tilde{B}_{k, j}$, and then subtract the erroneous terms $x^cy^d$ from $C_{i, j}(x, y)$. To upper bound the total running time, the key point is that when $p$ is a random prime, the probability that $d \equiv \tilde{A}_{i, k} + \tilde{B}_{k, j} \mod p$ is at most $\tilde{O}(1/p)$ when $d\neq \tilde{A}_{i, k} + \tilde{B}_{k, j}$, and therefore the expected number of erroneous terms is bounded by $\tilde{O}(n^3/p) = \tilde{O}(n^{8/3})$.\\

\noindent\textbf{Improvement by recursion.} To push the upper bound exponent from $8/3$ to $2.5$, we again follow the idea in \cite{chi2022faster} of using recursions. Roughly speaking, we will apply a numerical scaling technique on the input matrices $A, B$, and the key technical point is that throughout different numerical scales we need to carefully maintain all erroneous terms.

More specifically, take a random prime $p$ in $[n^{0.5}, 2n^{0.5}]$, and define $A^{(l)}_{i, j} = \floor{(A_{i, j}\mod p)/2^l}$, $B^{(l)}_{i, j} = \floor{(B_{i, j} \mod p) / 2^l}$, $C^{(l)} = \floor{(C_{i, j}\mod p) / 2^l}$, then we will iteratively compute all $C^{(l)}$ with $l = h, h-1, h-2, \cdots, 0$, for some parameter $h$; note that in general $C^{(l)}\neq A^{(l)}\star B^{(l)}$, so computing $C^{(l)}$ would also require information from the original input matrices $A, B$. Once we have $C^{(0)} = C\mod p$, we can deduce the true value of $C$ from the approximation matrix $C^* = A^*\star B^*$, where $A^*_{i, j} = \floor{A_{i, j} / p}$ and $B^*_{i, j} = \floor{B_{i, j} / p}$; note that computing $C^*$ takes time $\tilde{O}(n^{2.5})$.

To compute $C^{(l)}$, the algorithm uses $C^{(l+1)}$ as the approximation matrix. Namely, similar to the basic algorithm, let us construct two $n\times n$ polynomial matrices $A^p, B^p$ on variables $x, y$ in the following way:
$$A^p_{i, k} = x^{A^{(l)}_{i, k} - 2A^{(l+1)}_{i, k}}\cdot y^{ A^{(l+1)}_{i, k}}$$
$$B^p_{k, j} = x^{B^{(l)}_{k, j} - 2B^{(l+1)}_{k, j}}\cdot y^{B^{(l+1)}_{k, j}}$$
Then, compute the standard $(+, \times)$ matrix multiplication $C^p = A^p\cdot B^p$ using fast matrix multiplication. The advantage of numerical scaling is that the degree of $x$ is $0$ or $1$, so polynomial matrix multiplication only takes time $\tilde{O}(n^{2.5})$.

To retrieve $C^{(l)}_{i, j}$, we will prove that $C^{(l)}_{i, j}$ must be equal to some $A^{(l)}_{i, k} + B^{(l)}_{k, j}$ such that the following two conditions hold:
\begin{itemize}
\item $|A^{(l+1)}_{i, k} + B^{(l+1)}_{k, j} - C^{(l+1)}_{i, j}| = O(1)$.
\item $A^*_{i, k} + B^*_{k, j} = C^*_{i, j}$.
\end{itemize}

So, we only need to look at the monomials in $C^p_{i, j}$ whose $y$-degree differs from $C^{(l+1)}_{i, j}$ by at most $O(1)$. However, before this we need to subtract all erroneous terms from $C^{p}_{i, j}$, which are all of those triples $(i, j, k)\in [n]^3$ such that:
\begin{itemize}
	\item $|A^{(l+1)}_{i, k} + B^{(l+1)}_{k, j} - C^{(l+1)}_{i, j}| = O(1)$.
	\item $A^*_{i, k} + B^*_{k, j} \neq C^*_{i, j}$.
\end{itemize}
To efficiently enumerate these triples, the key idea is to maintain them iteratively for all $l = h, h-1, \cdots, 0$ as well, along with the approximation matrices $C^{(l)}$. A technical issue is that the total number of such triples might be as large as $\Theta(n^3)$. This is where we utilize the monotone property (of $B$ and $C$) by grouping consecutive triples into segments such that the total number of segments is bounded by $O(n^3 / p) = O(n^{2.5})$.

\section{Preliminaries}\label{sec:pre}
\noindent\textbf{Notations.} For any integers $a, m$, let $(a\mod m)$ refer to the unique value $b\in \{0, 1, 2, \cdots, m-1\}$ such that $a\equiv b\mod m$. For any positive integer $x$, $[x]$ refers to the set $\{1, 2, 3, \cdots, x\}$. For a matrix $A$ and a real number $x$, $A+x$ means adding $x$ to every element of $A$.\\

\noindent\textbf{Segment trees.} Let $X = \{x_1, x_2, \cdots, x_N \}$ be an integral sequence of $N$ elements which undergoes updates and queries. Each update operation specifies an interval $[i, j]$ and an integer value $u$, then for each $i\leq l\leq j$, $x_l$ is updated as $x_l\leftarrow \min\{x_l, u \}$. Each query operation inspects the current value of an arbitrary element $x_i$. Using standard segment tree data structures~\cite{de2000more},
both update and query operations are supported in $O(\log N)$ deterministic worst-case time.\\

\noindent\textbf{Matrix multiplication.} We denote with $O(n^\omega)$ the arithmetic complexity of multiplying two $n\times n$ matrices. Currently the best bound is $\omega < 2.37286$~\cite{alman2021refined,le2014powers,williams2012multiplying}.\\

\noindent\textbf{Polynomial matrices.} Our algorithm will work with multivariate polynomials. For bivariate polynomials on variables $x, y$, suppose the maximum degrees of $x, y$ are bounded in absolute value by $d_1, d_2$, respectively (we allow their degrees to be negative). Given two polynomials $p, q\in \mathbb{Z}[x, y]$, we can add and subtract $p, q$ in $O(d_1d_2)$ time, and multiply $p, q$ in $\tilde{O}(d_1d_2)$ time using fast-Fourier transformations~\cite{Schnhage2005}. Similar bounds hold for polynomials on three variables $x, y, z$ as well.

We will also work with polynomial matrices from $(\mathbb{Z}[x, y])^{n\times n}$. Products between two matrices in $(\mathbb{Z}[x, y])^{n\times n}$ can be performed as usual, but since each arithmetic operation takes time $\tilde{O}(d_1d_2)$, the cost of matrix multiplication takes time $\tilde{O}(d_1d_2n^\omega)$. To do this, we can reduce it to multiplication of polynomial univariate matrices: replace $y = x^{10d_1}$ and multiply the two univariate matrices, and then take $10d_1$-modulo on the degrees to recover the original degrees of $x, y$ of each element.\\

\noindent\textbf{Distribution of primes.} Let $\pi(x)$ be the prime-counting function that gives the number of primes less than or equal to $x$. According to the famous prime number theorem~\cite{jameson2003prime},
$\pi(x)\sim x / \ln(x)$. As a corollary, for any large enough integer $N$, the number of primes in the range $[N, 2N]$ is at least $\Omega(N / \log N)$.\\

\noindent\textbf{Assumptions and Reductions.} When computing the min-plus product of $A$ and $B$, it is easy to see the following operations will not affect the complexity of computation:
\begin{enumerate}[(a)]
    \item\label{adjust1} We can add the same value to all elements in a row of $A$ or add the same value to all elements in a column of $B$. To recover the original result $A\star B$ from the new result $C$, simply subtract the same value in the corresponding row of $C$ or subtract the same value in the corresponding column of $C$, resp.
    \item\label{adjust2} We can add the same value $\delta$ to all elements in $i-$th column of $A$ \textbf{and} subtract $\delta$ from all elements in $i-$th row of $B$. The min-plus product remain unchanged.
    \item If $B$ is column-monotone, we can make $A$ row-monotone (reverse order), since when $B_{k,j}\leq B_{k+1,j}$, if $A_{i,k}< A_{i,k+1}$, then $A_{i,k+1}+B_{k+1,j}$ cannot be a candidate of $C_{i,j}$, so we can make $A_{i,k+1}\leftarrow A_{i,k}$.
    \item\label{bounded-monotone}\cite{gu2021faster} If $B$ is $\delta$-row-bounded-difference, that is, $|B_{i,j}-B_{i,j+1}|\leq \delta$, then we can add $j\cdot\delta$ to the $j-$th column of $B$ to make $B$ row-monotone, so row-bounded-difference can be reduced to row-monotone. Similarly, if $B$ is $\delta$-column-bounded-difference, it can be reduced to column-monotone (with the change of $A$ by~\ref{adjust2}).
    \item\label{A-bound} If all elements in $B$ are between 0 and $c\cdot n$ for some constant $c$, by~\ref{adjust1}, we can adjust rows of $A$ so that the first column of $A$ are all set to $c\cdot n$, then all elements of $A$ can be made in the range $[0,2c\cdot n]$.
\end{enumerate}

Also, it is easy to get the following fact:
\begin{fact}\label{fact:monotone}
In $C=A\star B$, if $B$ is row-monotone, then $C$ is also row-monotone.
\end{fact}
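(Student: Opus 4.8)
The plan is to prove the statement directly from the definition of the min-plus product, using the row-monotonicity of $B$ in an entrywise fashion. Fix a row index $i$ and two adjacent columns $j$ and $j+1$; the goal is to show $C_{i,j}\le C_{i,j+1}$. First I would note that row-monotonicity of $B$ gives $B_{k,j}\le B_{k,j+1}$ for every $k\in[n]$, and therefore $A_{i,k}+B_{k,j}\le A_{i,k}+B_{k,j+1}$ for every $k$. Taking the minimum over $k$ on both sides preserves the inequality, so $C_{i,j}=\min_k\{A_{i,k}+B_{k,j}\}\le\min_k\{A_{i,k}+B_{k,j+1}\}=C_{i,j+1}$. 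Since $i$ and $j$ were arbitrary, every row of $C$ is non-decreasing.

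To match Definition~\ref{monotone-def} exactly, it remains to verify that the entries of $C$ are nonnegative integers bounded by $O(n)$. This is immediate once we invoke reduction~\ref{A-bound}, which lets us assume all entries of $A$ lie in $[0,O(n)]$; then each $C_{i,j}=\min_k\{A_{i,k}+B_{k,j}\}$ is a sum of two nonnegative integers each of magnitude $O(n)$, hence itself a nonnegative integer of magnitude $O(n)$. Combining this with the monotonicity established above yields that $C$ is row-monotone. There is no real obstacle in this argument; the only point worth care is that the boundedness half relies on the normalizing reductions already set up in Section~\ref{sec:pre}, whereas the monotonicity half uses no hypothesis on $A$ whatsoever.
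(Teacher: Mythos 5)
Your proof is correct and matches the argument the paper has in mind (the paper only asserts the fact as "easy," and its commented-out derivation is exactly the observation that $C_{i,j+1}=A_{i,k_0}+B_{k_0,j+1}\ge A_{i,k_0}+B_{k_0,j}\ge C_{i,j}$ for the minimizer $k_0$, equivalent to your "min preserves entrywise inequality" step). Your extra care about the nonnegativity and $O(n)$ bound via the reductions of Section~\ref{sec:pre} is a reasonable and correct completion of the statement as it interacts with Definition~\ref{monotone-def}.
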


From~\ref{bounded-monotone} we can reduce $A\star B$ for any $A,B$ to the case that $B$ is row-monotone or column-monotone without $O(n)$-bound, so the general case of $B$ monotone is APSP-hard~\cite{Williams2010}. Thus, from~\ref{A-bound}, we only consider the case that $B$ is row-monotone or column-monotone and all elements in $A$ and $B$ are nonnegative integers bounded by $O(n)$. In this paper, monotone matrices are defined to have this element bound of $O(n)$ as in Definition~\ref{monotone-def}.

\section{Monotone min-plus product}\label{monotone-main}

\subsection{Basic Algorithm}\label{mono-basic}
In this section we prove Theorem~\ref{matrix}, that is, $B$ is row-monotone. Take a constant parameter $\alpha \in (0, 1)$ which is to be determined in the end; for convenience let us assume $n^\alpha$ is an integer.
The algorithm consists of three phases.\\

\noindent\textbf{Approximation.} Define two $n\times n$ integer matrices $\tilde{A}, \tilde{B}$ such that $\tilde{A}_{i, j} = \lfloor A_{i, j} / n^\alpha\rfloor$, $\tilde{B}_{i, j} = \lfloor B_{i, j} / n^\alpha\rfloor$. Therefore, $\tilde{B}$ is an integer matrix whose entries are bounded by $O(n^{1-\alpha})$, and each row of $\tilde{B}$ is non-decreasing. 

Next, compute the approximation matrix $\tilde{C} = \tilde{A}\star\tilde{B}$ in the following way. Initialize each entry of $\tilde{C}$ to be $\infty$, and maintain each row of $\tilde{C}$ using a segment tree that supports interval updates. Then, for every pair of indices $i, k\in [n]$, run the following iterative procedure that scans the $k$-th row of $\tilde{B}$. Starting with index $j = 1$, find the largest index $j\leq j_1\leq n$ such that $\tilde{B}_{k, j} = \tilde{B}_{k, j+1} = \cdots = \tilde{B}_{k, j_1}$ using binary search. Then, update all elements $\tilde{C}_{i, l}\leftarrow \min\{\tilde{C}_{i, l}, \tilde{A}_{i, k} + \tilde{B}_{k, l}\}$ for all $j\leq l\leq j_1$ using the segment tree data structure; notice that this operation is legal since all $\tilde{B}_{k, l}$ are equal when $j\leq l\leq j_1$. After that, set $j \leftarrow j_1+1$ and repeat until $j>n$.\\

\noindent\textbf{Polynomial matrix multiplication.} Uniformly sample a random prime number $p$ in the range $[n^\alpha, 2n^\alpha]$. Construct two polynomial matrices $A^p$ and $B^p$ on variables $x, y$ in the following way:
$$A^p_{i, k} = x^{A_{i, k} - n^\alpha\tilde{A}_{i, k}}\cdot y^{\tilde{A}_{i, k}\mod p}$$
$$B^p_{k, j} = x^{B_{k, j} - n^\alpha\tilde{B}_{k, j}}\cdot y^{\tilde{B}_{k, j}\mod p}$$
Then, compute the standard $(+, \times)$ matrix multiplication $C^p = A^p\cdot B^p$ using fast matrix multiplication algorithms.\\

\noindent\textbf{Subtracting erroneous terms.} The last phase is to extract the true values $C_{i, j}$'s from $\tilde{C}$ and $C^p$. The algorithm iterates over all offsets $b\in \{0, 1, 2\}$, and computes the set $T_b\subseteq [n]^3$ of all triples of indices $(i, j, k)$ such that $\tilde{A}_{i, k} + \tilde{B}_{k, j} \neq \tilde{C}_{i, j}+b$ but $\tilde{A}_{i, k} + \tilde{B}_{k, j} \equiv \tilde{C}_{i, j}+b\mod p$; in the running time analysis, we will show that $T_b$ can be computed in time $\tilde{O}(|T_b| + n^{3-\alpha})$.

For each pair of indices $i, j\in [n]$, collect all the non-zero monomials $\lambda x^cy^d$ (for some integer $\lambda$) of $C_{i, j}^p$ such that 
$$d\equiv \tilde{C}_{i, j} + b\mod p$$ 
and let $C_{i, j, b}^p(x)$ be the sum of all such terms $\lambda x^c$. Next, compute a polynomial 
$$R_{i, j, b}^p(x) = \sum_{(i, j, k)\in T_b} x^{A_{i, k} - n^\alpha\tilde{A}_{i, k} + B_{k, j} - n^\alpha\tilde{B}_{k, j}}$$
Finally, let $s_{i, j, b}$ be the minimum degree of $x$ of the polynomial $C^p_{i, j, b}(x) - R^p_{i, j, b}(x)$, and compute a candidate value $c_{i, j, b} = n^\alpha (\tilde{C}_{i, j} + b) + s_{i, j, b}$. Ranging over all integer offsets $b\in \{0, 1, 2\}$, take the minimum of all candidate values and output as $C_{i, j} = \min_{0\leq b\leq 2}\{c_{i, j, b}\}$.

\subsubsection{Proof of correctness}
\begin{lemma}
	For any triple $(i, j, k)\in [n]^3$ such that $A_{i, k} + B_{k, j} = C_{i, j}$, we have $$0\leq \tilde{A}_{i, k} + \tilde{B}_{k, j} - \tilde{C}_{i, j}\leq 2$$
\end{lemma}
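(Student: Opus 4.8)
The plan is to prove the two inequalities separately; both are elementary rounding estimates once we unwind the definition $\tilde{C}=\tilde{A}\star\tilde{B}$, i.e.\ $\tilde{C}_{i,j}=\min_{\ell\in[n]}\{\tilde{A}_{i,\ell}+\tilde{B}_{\ell,j}\}$.

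For the lower bound, I would simply note that since $\tilde{C}_{i,j}$ is a minimum over all $\ell$, it is in particular at most the value at $\ell=k$, so $\tilde{C}_{i,j}\le \tilde{A}_{i,k}+\tilde{B}_{k,j}$, which rearranges to $\tilde{A}_{i,k}+\tilde{B}_{k,j}-\tilde{C}_{i,j}\ge 0$. This half uses no hypothesis on the triple $(i,j,k)$ at all.

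For the upper bound, I would first use $\lfloor x\rfloor\le x$ on each entry to get $\tilde{A}_{i,k}+\tilde{B}_{k,j}=\lfloor A_{i,k}/n^\alpha\rfloor+\lfloor B_{k,j}/n^\alpha\rfloor\le (A_{i,k}+B_{k,j})/n^\alpha = C_{i,j}/n^\alpha$, invoking the hypothesis $A_{i,k}+B_{k,j}=C_{i,j}$. Then, letting $\ell$ be a witness index with $\tilde{C}_{i,j}=\tilde{A}_{i,\ell}+\tilde{B}_{\ell,j}$, I would apply $\lfloor x\rfloor> x-1$ twice to obtain $\tilde{C}_{i,j}>(A_{i,\ell}+B_{\ell,j})/n^\alpha-2$; since $C=A\star B$ forces $A_{i,\ell}+B_{\ell,j}\ge C_{i,j}$, this gives $\tilde{C}_{i,j}>C_{i,j}/n^\alpha-2$. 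Subtracting the two displayed bounds yields $\tilde{A}_{i,k}+\tilde{B}_{k,j}-\tilde{C}_{i,j}<2$, and since the left-hand side is an integer it is at most $1$, hence certainly at most $2$.

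There is no genuine obstacle here—it is a routine floor/ceiling argument. The one thing to be careful about is bookkeeping of witnesses: the index $k$ comes from the assumed equality for $C=A\star B$, whereas the auxiliary index $\ell$ is a witness for the rounded product $\tilde{C}=\tilde{A}\star\tilde{B}$, and the two are glued together via $A_{i,\ell}+B_{\ell,j}\ge C_{i,j}$. (Incidentally, the same computation gives the slightly stronger bound $\le 1$, but the weaker bound $\le 2$ suffices for the offsets $b\in\{0,1,2\}$ used later.)
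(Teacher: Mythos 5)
Your proof is correct and follows essentially the same route as the paper: both halves use the witness $\ell$ for $\tilde{C}_{i,j}=\tilde{A}_{i,\ell}+\tilde{B}_{\ell,j}$, the bound $A_{i,\ell}+B_{\ell,j}\ge C_{i,j}$, and the standard floor estimates, just written with division by $n^\alpha$ instead of multiplication. Your observation that the strict inequality actually yields the sharper bound $\le 1$ is a harmless bonus; the paper settles for $\le 2$, which is all that the later offset enumeration over $b\in\{0,1,2\}$ requires.
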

\begin{proof}
	Clearly $\tilde{A}_{i, k} + \tilde{B}_{k, j} - \tilde{C}_{i, j}\geq 0$, so we only need to focus on the second inequality.
	
	Suppose $\tilde{C}_{i, j} = \tilde{A}_{i, l} + \tilde{B}_{l, j}$ for some $l$. Then, by definition of $\tilde{A}, \tilde{B}$, we have:
	$$\begin{aligned}
		n^\alpha\tilde{C}_{i, j} &= n^\alpha\tilde{A}_{i, l} + n^\alpha\tilde{B}_{l, j} \geq A_{i, l} + B_{l, j} - 2n^\alpha\geq C_{i, j} - 2n^\alpha\\
		&= A_{i, k} + B_{k, j} - 2n^\alpha\geq n^\alpha\tilde{A}_{i, k} + n^\alpha\tilde{B}_{k, j} - 2n^\alpha
	\end{aligned}$$
	Hence, $\tilde{A}_{i, k} + \tilde{B}_{k, j} - \tilde{C}_{i, j}\leq 2$.
\end{proof}

Next we argue that our algorithm correctly computes all entries $C_{i, j}$. Let $l$ be the index such that $C_{i, j} = A_{i, l} + B_{l, j}$. By the above lemma, there exists an integer offset $b\in \{0, 1, 2\}$ such that $\tilde{A}_{i, l} + \tilde{B}_{l, j} = \tilde{C}_{i, j} + b$. Therefore, by construction of polynomial matrices $A^p, B^p$, we have:
$$\begin{aligned}
	C^p_{i, j, b}(x) &= \sum_{k\mid \tilde{A}_{i, k} + \tilde{B}_{k, j} = \tilde{C}_{i, j} + b}x^{A_{i, k} - n^\alpha \tilde{A}_{i, k} + B_{k, j} - n^\alpha \tilde{B}_{k, j}}\\
	&+\sum_{k\mid (\tilde{A}_{i, k} + \tilde{B}_{k, j} \neq \tilde{C}_{i, j} + b )\wedge (\tilde{A}_{i, k} + \tilde{B}_{k, j} \equiv \tilde{C}_{i, j} + b\mod p)}x^{A_{i, k} - n^\alpha \tilde{A}_{i, k} + B_{k, j} - n^\alpha \tilde{B}_{k, j}}\\
	&= \sum_{k\mid \tilde{A}_{i, k} + \tilde{B}_{k, j} = \tilde{C}_{i, j} + b}x^{A_{i, k} - n^\alpha \tilde{A}_{i, k} + B_{k, j} - n^\alpha \tilde{B}_{k, j}} +\sum_{(i, j, k)\in T_b}x^{A_{i, k} - n^\alpha \tilde{A}_{i, k} + B_{k, j} - n^\alpha \tilde{B}_{k, j}}\\
	&= x^{-n^\alpha (\tilde{C}_{i, j} + b)} \cdot\sum_{k\mid \tilde{A}_{i, k} + \tilde{B}_{k, j} = \tilde{C}_{i, j} + b}x^{A_{i, k}+ B_{k, j}} + R^p_{i, j, b}(x)
\end{aligned}$$
Therefore, $$x^{-n^\alpha (\tilde{C}_{i, j} + b)} \cdot\sum_{k\mid \tilde{A}_{i, k} + \tilde{B}_{k, j} = \tilde{C}_{i, j} + b}x^{A_{i, k}+ B_{k, j}}= C_{i, j, b}^p(x) - R_{i, j, b}^p(x)$$ 
Since $\tilde{A}_{i, l} + \tilde{B}_{l, j} = \tilde{C}_{i, j} + b$, we can extract $C_{i, j}$ from terms of $C_{i, j, b}^p(x) - R_{i, j, b}^p(x)$. In the other way, every nonzero term $C_{i, j, b}^p(x) - R_{i, j, b}^p(x)$ corresponds to a sum of $A_{i,k}+B_{k,j}$, which is at least $C_{i,j}$.

\subsubsection{Running time analysis}
\begin{lemma}
	Computing the approximation matrix $\tilde{C}$ takes time $\tilde{O}(n^{3-\alpha})$.
\end{lemma}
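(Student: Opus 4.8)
The plan is to analyze the three sub-steps of the approximation phase separately and show each runs in $\tilde{O}(n^{3-\alpha})$ time. First I would account for the outer loop: we iterate over all pairs $(i,k)\in[n]^2$, so there are $n^2$ iterations, and I need to show the work per iteration is $\tilde{O}(n^{1-\alpha})$ amortized. Within a fixed pair $(i,k)$, the inner procedure scans the $k$-th row of $\tilde{B}$ by jumping across maximal blocks of equal values $\tilde{B}_{k,j}=\cdots=\tilde{B}_{k,j_1}$. The key observation is that, since $\tilde{B}_{k,\cdot}$ is non-decreasing and all its entries lie in $\{0,1,\dots,O(n^{1-\alpha})\}$, the row can have at most $O(n^{1-\alpha})$ distinct values, hence at most $O(n^{1-\alpha})$ such blocks. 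So the inner loop executes $O(n^{1-\alpha})$ iterations.

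Next I would bound the cost of a single inner iteration. Each iteration does one binary search to locate $j_1$ (costing $O(\log n)$ time, using that $\tilde B_{k,\cdot}$ is sorted so the block boundary can be found by binary search), and then one interval-update operation $\tilde{C}_{i,l}\leftarrow\min\{\tilde{C}_{i,l},\tilde{A}_{i,k}+\tilde{B}_{k,l}\}$ over the contiguous range $[j,j_1]$; this update is legal precisely because $\tilde B_{k,l}$ is constant on that range, so the added value $\tilde A_{i,k}+\tilde B_{k,l}$ is a single number, and the segment tree from the preliminaries handles it in $O(\log n)$ worst-case time. Thus each inner iteration costs $O(\log n)$, and a fixed $(i,k)$ pair costs $O(n^{1-\alpha}\log n)=\tilde{O}(n^{1-\alpha})$. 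Multiplying by the $n^2$ choices of $(i,k)$ gives $\tilde{O}(n^{3-\alpha})$ for all the block-scanning work.

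Finally I would add the bookkeeping costs: initializing the $n$ segment trees (one per row of $\tilde C$) takes $O(n^2)$ time total, and at the end reading off all $n^2$ entries $\tilde C_{i,j}$ via point queries takes $\tilde{O}(n^2)$ time. Computing $\tilde A$ and $\tilde B$ from $A,B$ by $n^2$ integer divisions is $O(n^2)$. Since $\alpha\in(0,1)$, all of these are dominated by $\tilde{O}(n^{3-\alpha})$, completing the bound.

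I do not anticipate a serious obstacle here; the one point that needs a clean statement is why the number of distinct values in a monotone row of $\tilde B$ is $O(n^{1-\alpha})$ — this is exactly the monotonicity combined with the $O(n^{1-\alpha})$ range of entries, which is the structural payoff of passing to the approximation matrix $\tilde B$ rather than working with $B$ directly. Everything else is a routine segment-tree accounting argument.
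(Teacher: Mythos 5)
Your proof is correct and follows exactly the paper's argument: the row-monotonicity of $\tilde{B}$ together with its $O(n^{1-\alpha})$ value range bounds the number of blocks per row, giving $O(n^{1-\alpha})$ inner iterations of $\tilde{O}(1)$ cost each, times $n^2$ pairs $(i,k)$. The extra bookkeeping you account for (initialization, final point queries) is harmless and dominated, as you note.
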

\begin{proof}
	For any pair of $i, k$, the algorithm iteratively increases index $j$ and apply update operations on the segment tree data structure. Since elements of $B$ are bounded by $O(n)$, the total number of different values on the $k$-th row of $\tilde{B}$ is at most $O(n^{1-\alpha})$. Therefore, the number of iterations over $j$ is at most $O(n^{1-\alpha})$ as well. Hence, the running time of this phase is $\tilde{O}(n^{3-\alpha})$.
\end{proof}

As for polynomial matrix multiplication, by definition the $x$-degree and $y$-degree of $A^p, B^p$ are both bounded by $O(n^\alpha)$ in absolute value, so the matrix multiplication takes time $O(n^{\omega + 2\alpha})$.
\begin{lemma}
	The triple set $T_b$ can be computed in time $\tilde{O}(|T_b| + n^{3-\alpha})$.
\end{lemma}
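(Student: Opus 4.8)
The plan is to compute $T_b$ by exploiting the monotonicity of the $k$-th row of $\tilde B$ (as was done when computing $\tilde C$) so that, for each pair $(i,k)$, the set of "bad" columns $j$ with $(i,j,k)\in T_b$ decomposes into a small number of contiguous intervals of $j$ on which $\tilde B_{k,j}$ is constant. First I would fix a pair $(i,k)$ and scan the $k$-th row of $\tilde B$ exactly as in the approximation phase: starting from $j=1$, use binary search to find the maximal block $[j,j_1]$ with $\tilde B_{k,j}=\cdots=\tilde B_{k,j_1}=:v$. On such a block the quantity $\tilde A_{i,k}+\tilde B_{k,j}$ equals the fixed value $\tilde A_{i,k}+v$, so a column $j$ in this block belongs to $T_b$ iff $\tilde A_{i,k}+v \neq \tilde C_{i,j}+b$ and $\tilde A_{i,k}+v \equiv \tilde C_{i,j}+b \pmod p$. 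Since $\tilde C$ is row-monotone by Fact~\ref{fact:monotone}, within the block the values $\tilde C_{i,j}$ are themselves non-decreasing, so I would further subdivide the block into sub-blocks on which $\tilde C_{i,j}$ is also constant; on each such sub-block the membership condition for $T_b$ is the same for every column, so either all of its columns are in $T_b$ or none are, and the columns that are in $T_b$ are thereby listed as a union of intervals.

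The key counting point is the same as in the technical overview: for a fixed $(i,k)$ and a fixed constant value $w=\tilde A_{i,k}+\tilde B_{k,j}$ on a $\tilde B$-block, the condition $w\equiv \tilde C_{i,j}+b\pmod p$ with $w\neq \tilde C_{i,j}+b$ is a congruence condition modulo a random prime $p\in[n^\alpha,2n^\alpha]$ on a quantity of magnitude $O(n^{1-\alpha})$, which has at most $O(1)$ prime divisors of size $\ge n^\alpha$ once $\alpha$ is, say, $\le 1$; hence, in expectation over the choice of $p$, the fraction of bad $(i,j,k)$ triples is $\tilde O(1/p)=\tilde O(n^{-\alpha})$, giving $\mathbb{E}|T_b|=\tilde O(n^{3-\alpha})$. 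More importantly for the enumeration time, the number of \emph{sub-blocks} I generate per pair $(i,k)$ is $O(n^{1-\alpha})$ (at most one per distinct value of $\tilde B_{k,\cdot}$, times a further $O(1)$ factor from breaking at changes of $\tilde C_{i,\cdot}$, since within one $\tilde B$-value block $\tilde C_{i,j}+b$ ranges over an interval of length $O(n^{1-\alpha})$ too — actually I should be slightly more careful here and just charge the $\tilde C$-refinement globally), so summed over all $(i,k)$ the total bookkeeping is $\tilde O(n^{3-\alpha})$, and writing out the elements of $T_b$ costs an additional $O(|T_b|)$. Combining, the running time is $\tilde O(|T_b|+n^{3-\alpha})$ as claimed, and this already establishes the stated bound; in expectation this is $\tilde O(n^{3-\alpha})$.

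To implement the test "is $w\equiv \tilde C_{i,j}+b\pmod p$?" in $O(\log n)$ time per sub-block rather than per column, I would, for each row $i$, precompute the breakpoints of the step function $j\mapsto \tilde C_{i,j}$ (there are $O(n^{1-\alpha})$ of them since entries of $\tilde C$ are $O(n^{1-\alpha})$), so that intersecting a $\tilde B$-block $[j,j_1]$ with the constancy intervals of $\tilde C_{i,\cdot}$ is a single merge step; on each resulting sub-block the congruence test is a constant-time arithmetic check, and if it succeeds I append the whole interval of columns to $T_b$. The main obstacle is not any single step but getting the accounting exactly right: one must verify that the refinement by $\tilde C$-breakpoints does not blow up the number of sub-blocks beyond $\tilde O(n^{3-\alpha})$ in total, which works because each $\tilde C$-breakpoint of row $i$ can start at most one new sub-block per $k$, but summed naively that is $n\cdot n^{1-\alpha}$ per $i$, i.e.\ $n^{3-\alpha}$ overall — so the bound holds, but the argument has to be phrased as "number of $(\tilde B\text{-block},\tilde C\text{-block})$ incidences" rather than a per-pair product. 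I expect this incidence-counting bookkeeping, together with confirming the expected size bound $\mathbb{E}|T_b|=\tilde O(n^{3-\alpha})$ via the prime-divisor argument, to be the only delicate part; everything else is a direct imitation of the segment-tree scan already used for $\tilde C$.
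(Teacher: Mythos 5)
Your proposal is correct and follows essentially the same route as the paper: for each pair $(i,k)$, partition the columns into the $O(n^{1-\alpha})$ common constancy intervals of the $k$-th row of $\tilde B$ and the $i$-th row of $\tilde C$ (both row-monotone with entries bounded by $O(n^{1-\alpha})$), test the congruence once per interval, and charge $O(n^{3-\alpha})$ total intervals plus $O(|T_b|)$ output. The incidence-counting you worry about is exactly the paper's (implicit) accounting, and resolves the way you describe.
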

\begin{proof}
	Fix any pair of $i, k$, we try to find all $j$ such that $(i, j, k)\in T_b$. By Fact~\ref{fact:monotone}, $\tilde{B}$ and $\tilde{C}$ are both row-monotone, so we can divide the $k$-th row of $\tilde{B}$ and $i$-th row of $\tilde{C}$ into at most $O(n^{1-\alpha})$ consecutive intervals, such that entries in each interval are all equal.
	So there are $O(n^{1-\alpha})$ intervals $[j_0,j_1]$ such that for all $j\in[j_0,j_1]$, $\tilde{A}_{i, k} + \tilde{B}_{k, j}$ and $\tilde{C}_{i,j}$ are fixed. Therefore, as the total number of such row intervals is bounded by $O(n^{3-\alpha})$, the total running time becomes $\tilde{O}(|T_b| + n^{3-\alpha})$.
\end{proof}

By the above lemma, the subtraction phase takes time $\tilde{O}(|T_b| + n^{3-\alpha})$ as well. So it suffices to bound the size of $T_b$. For any $(i, j, k)\in [n]^3$ such that $\tilde{A}_{i, k} + \tilde{B}_{j, k}\neq \tilde{C}_{i, j}+b$ 
since $|\tilde{A}_{i, k} + \tilde{B}_{j, k} - \tilde{C}_{i, j}- b|$ is bounded by $O(n)$, there are at most $O(1/\alpha)=O(1)$ different primes in $[n^\alpha, 2n^\alpha]$ that divides $\tilde{A}_{i, k} + \tilde{B}_{j, k} - \tilde{C}_{i, j}- b$.
Since $p$ is a uniformly random prime in the range $[n^\alpha, 2n^\alpha]$, the probability that $\tilde{A}_{i, k} + \tilde{B}_{j, k} - \tilde{C}_{i, j}- b$ can be divided by $p$ is bounded by $\tilde{O}(n^{-\alpha})$. Hence, by linearity of expectation, we have $\mathbb{E}_p[|T_b|]\leq \tilde{O}(n^{3-\alpha})$.

Throughout all three phases, the expected running time of our algorithm is bounded by $\tilde{O}(n^{3-\alpha} + n^{\omega+2\alpha})$. Taking $\alpha = 1 - \omega / 3$, the running time becomes $\tilde{O}(n^{2+\omega /3})$.

\subsection{Recursive Algorithm}\label{sec:recursive}

Let $\alpha\in (0, 1)$ be a constant parameter to be determined later, and pick a uniformly random prime number $p$ in the range of $[40n^\alpha, 80n^\alpha]$. Without loss of generality, let us assume that $n$ is a power of $2$. Next we make the following assumption about elements in $A$ and $B$:

\begin{assumption}\label{basic-assumption}
For every $i,j$, either $(A_{i,j}\mod p)<p/3$ or $A_{i,j}=+\infty$. For every $B_{i,j}$, $(B_{i,j}\mod p)<p/3$. And each row of $B$ is monotone.
\end{assumption}

\begin{lemma}\label{assumption-correctness}
The general computation of $A\star B$ where $B$ is row-monotone can be reduced to a constant number of computations of $A^i\star B^i$, where all of $A^i,B^i$'s satisfy Assumption~\ref{basic-assumption}.
\end{lemma}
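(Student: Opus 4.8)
The plan is to show that a general row-monotone instance $A\star B$ can be massaged into a constant number of instances satisfying Assumption~\ref{basic-assumption}, by a two-step reduction: first reduce the residues modulo $p$ into a small window, then restore row-monotonicity of $B$ which the first step may have destroyed. Throughout we are free to use the cost-free manipulations~\ref{adjust1}--\ref{A-bound} and Fact~\ref{fact:monotone}, and we may assume $0\le A_{i,j},B_{i,j}=O(n)$ with $p=\Theta(n^\alpha)$.

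First I would handle the residue condition on $B$. Partition $[n]$ into $O(1)$ consecutive blocks of rows $I_1,\dots,I_t$ (each of length $O(p)=O(n^\alpha)$, so $t=O(n^{1-\alpha})$ blocks overall — but we only need the count to be, say, $3$, which we get by choosing block length $n/3$; since entries are $O(n)$ this is where a slightly more careful split is needed), and similarly split the columns of $A$. Actually the clean way: for each index $k$, the value $B_{k,j}$ ranges over an interval of length $O(n)$; we cannot shrink this directly, but we can split the \emph{rows} $k$ of $B$ (equivalently the columns $k$ of $A$) into $O(1)$ groups according to... this does not immediately work, so instead I use the standard trick of splitting by \emph{ranges of values}. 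Concretely, for a parameter $q=\Theta(p)$, write $B_{k,j}=q\cdot\lfloor B_{k,j}/q\rfloor+(B_{k,j}\bmod q)$; the first part takes $O(n/q)=O(n^{1-\alpha})=O(1)$ distinct values only if $q=\Theta(n)$, which is too coarse. The correct decomposition, which I would carry out in detail, is: enumerate $r\in\{0,1\}$ (or a constant number of shifts) and consider the matrices $B^{(r)}$ where $B^{(r)}_{k,j}$ keeps only those entries whose quotient $\lfloor (B_{k,j}+rp/3)/(p/3)\rfloor$ is even, pushing the others to a second instance — combined with subtracting the common offset $p\cdot\lfloor B_{k,j}/p\rfloor$ using manipulation~\ref{adjust2} against $A$. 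The upshot I need is: every $B$-entry can be written, after subtracting from column $j$ a fixed multiple of $p$ and correspondingly adding to the relevant rows of $A$, as a value whose residue mod $p$ lies in $[0,p/3)$; and since each such subtraction is by a multiple of $p$ it does not change $B\bmod$ anything, while the bookkeeping needed to recover $C$ is exactly manipulation~\ref{adjust1}/\ref{adjust2}. The number of instances generated is $O(1)$ because the quotients $\lfloor B_{k,j}/(p/3)\rfloor$ only need to be separated into residue classes mod $3$ of the quotient, i.e. three cases.

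Next I would do the same for $A$: the row-based manipulation~\ref{adjust1} lets us subtract, for each row $i$, a multiple of $p$ from the entire row; splitting into $O(1)$ cases by the quotient $\lfloor A_{i,k}/(p/3)\rfloor \bmod 3$ forces $(A_{i,k}\bmod p)<p/3$ (entries that are $+\infty$ stay $+\infty$, which is allowed). The one subtlety is that manipulation~\ref{adjust2} used in the $B$-step already modified $A$ by adding, to certain rows, multiples of $p$; since those are multiples of $p$ they do not disturb $A\bmod p$, so the $A$-step composes cleanly with it. Finally, after these residue reductions $B$ may no longer be row-monotone, because we subtracted different multiples of $p$ from different columns. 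To restore it, note that within each fixed residue-class instance the quotients $\lfloor B_{k,j}/p\rfloor$ are themselves row-monotone (as $B$ is), so the subtracted column offsets $p\cdot\lfloor B_{k,j}/p\rfloor$... are not constant across a row — here I would instead observe that after the decomposition each surviving $B$-entry lies in a known arithmetic progression, and apply manipulation~\ref{adjust1} in the form of adding $j\cdot(p/3)$-type increments as in~\ref{bounded-monotone} to re-sort each row; since within an instance each row is piecewise a shifted copy of a monotone sequence with $O(1)$ pieces, a constant further splitting makes every row genuinely monotone. Collecting all the case splits, the total number of resulting instances is a product of constants, hence $O(1)$, each satisfying Assumption~\ref{basic-assumption}, and $C$ is recovered from the sub-results by undoing the row/column offsets of~\ref{adjust1}--\ref{adjust2} plus taking a coordinate-wise minimum over the $O(1)$ instances.

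The main obstacle I expect is the interaction between the residue-reduction step and row-monotonicity: forcing $(B_{k,j}\bmod p)<p/3$ requires subtracting column-dependent multiples of $p$, which breaks the monotone ordering within each row, and repairing it without blowing up the number of instances beyond $O(1)$ is the delicate part. I would resolve it by arguing that each row, after reduction, decomposes into only a constant number of monotone runs (because the quotients $\lfloor B_{k,j}/p\rfloor$ increase by at most $O(n/p)=O(n^{1-\alpha})$ across a whole row, and within the chosen value-window only $O(1)$ distinct quotient values survive), so one more constant-factor split suffices.
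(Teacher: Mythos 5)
Your overall framing is right (split by residue class of $(\cdot \bmod p)$ into thirds, get a constant number of instances, combine by coordinate-wise minimum), and your treatment of $A$ --- keep the entries in the selected class, set the rest to $+\infty$ --- matches the paper. But the handling of $B$ has a genuine gap, and it is exactly at the point you yourself flag as delicate. Your proposed mechanism for forcing $(B_{k,j}\bmod p)<p/3$ is to subtract column-dependent (in fact entry-dependent) multiples of $p$ via manipulations~\ref{adjust1}/\ref{adjust2}; this cannot work, because subtracting a multiple of $p$ leaves $(B_{k,j}\bmod p)$ unchanged, and because an offset of the form $p\cdot\lfloor B_{k,j}/p\rfloor$ depends on the row index $k$ as well as the column index $j$, so it is not a legal column operation in the first place. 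Your fallback --- delete the non-selected entries and then repair row-monotonicity by a further constant splitting --- rests on the claim that within an instance each row of $B$ retains only $O(1)$ distinct quotients $\lfloor B_{k,j}/p\rfloor$; that is false, since the residue-class split constrains $B_{k,j}\bmod p$ but not the quotient, which still ranges over $O(n/p)=O(n^{1-\alpha})$ values across a row. So no constant number of further splits restores a fully finite, row-monotone $B$ with all residues in $[0,p/3)$, which is what Assumption~\ref{basic-assumption} demands ($+\infty$ is permitted only in $A$).

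The missing idea is the paper's treatment of the ``blanks'' in $B$: instead of deleting or shifting by multiples of $p$, each non-selected entry is \emph{rounded up} to the smallest value at least $B_{k,j}$ whose residue lies in the target class (e.g.\ to $p\lfloor B_{k,j}/p+1\rfloor$ or $p\lfloor B_{k,j}/p\rfloor+\lceil p/3\rceil$, depending on the case). This single device does all the work at once: rounding up is a monotone operation, so each row of the modified $B$ stays non-decreasing; every modified entry is $\geq$ the original, so each of the nine products over-estimates $C$ entrywise; and for the optimal witness $k$ of $C_{i,j}$ both $A_{i,k}$ and $B_{k,j}$ appear unmodified in one of the nine pairs, so the minimum over the nine instances equals $C$ exactly. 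After subtracting the fixed offsets $\lceil p/3\rceil$ or $\lceil 2p/3\rceil$ from the second and third classes, every instance satisfies Assumption~\ref{basic-assumption}. Without this rounding-up step your reduction does not produce valid instances, so the proof as written does not go through.
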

\begin{proof}
The idea is very simple: for every element $A_{i,j}$, 
\begin{itemize}
    \item if $(A_{i,j}\mod p)<p/3$, $A'_{i,j}=A_{i,j}$, $A''_{i,j}=A'''_{i,j}=+\infty$
    \item if $p/3<(A_{i,j}\mod p)<2p/3$, $A''_{i,j}=A_{i,j}$, $A'_{i,j}=A'''_{i,j}=+\infty$
    \item if $(A_{i,j}\mod p)>2p/3$, $A'''_{i,j}=A_{i,j}$, $A'_{i,j}=A''_{i,j}=+\infty$
\end{itemize}

When we try to define $B',B''$ and $B'''$ similarly, to make them still row-monotone, we need to fill the ``blanks'' with appropriate numbers.
\begin{itemize}
    \item if $(B_{i,j}\mod p)<p/3$, let $B'_{i,j}=B_{i,j}$ and $B''_{i,j}=p\cdot\lfloor B_{i,j}/p\rfloor+\lceil p/3\rceil$, $B'''_{i,j}=p\cdot\lfloor B_{i,j}/p\rfloor+\lceil 2p/3\rceil$
    \item if $p/3<(B_{i,j}\mod p)<2p/3$, let $B''_{i,j}=B_{i,j}$ and $B'_{i,j}=p\cdot\lfloor B_{i,j}/p+1\rfloor$, $B'''_{i,j}=p\cdot\lfloor B_{i,j}/p\rfloor+\lceil 2p/3\rceil$
    \item if $(B_{i,j}\mod p)>2p/3$, let $B'''_{i,j}=B_{i,j}$ and $B'_{i,j}=p\cdot\lfloor B_{i,j}/p+1\rfloor$, $B''_{i,j}=p\cdot\lfloor B_{i,j}/p+1\rfloor+\lceil p/3\rceil$
\end{itemize}
We can see each pair of $A^*$ and $B^*$, where $A^*\in\{A',A''-\lceil p/3\rceil, A'''-\lceil 2p/3\rceil\}$, $B^*\in\{B',B''-\lceil p/3\rceil, B'''-\lceil 2p/3\rceil\}$,
all satisfy Assumption~\ref{basic-assumption}, so we compute $C'=\min_{A^*\in\{A',A'',A'''\}\atop B^*\in\{B',B'',B'''\}}\{A^*\star B^*\}$ (element-wise minimum). Since elements in $B',B'',B'''$ become no smaller than the corresponding ones in $B$, similarly for $A', A'',A'''$, so $C'_{i,j}\geq C_{i,j}$. But for the $k$ satisfying $A_{i,k}+B_{k,j}=C_{i,j}$, $A_{i,j}$ and $B_{k,j}$ must be in one of the 9 pairs, so $C'_{i,j}=C_{i,j}$.
\end{proof}

Define integer $h$ such that $2^{h-1}\leq p< 2^h$. For each integer $0\leq l\leq h$, let $A^{(l)}$ be the $n\times n$ matrix defined as $A^{(l)}_{i, j} = \lfloor \frac{A_{i,j}\mod p}{ 2^{l}} \rfloor$ if $A_{i,j}$ is finite, otherwise $A^{(l)}_{i,j}=+\infty$, similarly define matrix $B^{(l)}= \lfloor \frac{B_{i,j}\mod p}{ 2^{l}} \rfloor$. 

 Define $A^*$ and $B^*$ as $A^*_{i,j}=\lfloor A_{i,j}/p\rfloor$ and $B^*_{i,j}=\lfloor B_{i,j}/p\rfloor$. We use the segment tree structure to calculate $C^*=A^*\star B^*$ in $\tilde{O}(n^{3-\alpha})$ time. By Assumption~\ref{basic-assumption}, $C^*_{i,j}=\lfloor C_{i,j}/p\rfloor$ if $C_{i,j}$ is finite.

We will recursively calculate $C^{(l)}$ for $l=h,h-1,\cdots,0$. Intuitively, $C^{(l)}_{i,j}$ is the approximate result obtained from $A^{(l)}_{i,k}$ and $B^{(l)}_{k,j}$ for those $k$ satisfying $C^*_{i,j}=A^*_{i,k}+B^*_{k,j}$. If $C_{i,j}$ is finite, $C^{(l)}$ will satisfy that
\begin{enumerate}[(1)]
    \item $\lfloor \frac{(C_{i,j} \mod p)-2(2^l-1)}{ 2^{l}} \rfloor \leq C^{(l)}_{i, j} \leq \lfloor \frac{(C_{i,j}\mod p)+2(2^l-1)}{ 2^{l}} \rfloor$
    \item If $C^*_{i,j_0}=C^*_{i,j_1}$ for $j_0<j_1$, the elements in $C^{(l)}_{i,j_0},\cdots, C^{(l)}_{i,j_1}$ are monotonically non-decreasing.
\end{enumerate}
 (Note that $C^{(l)}$ is not necessarily equal to $A^{(l)}\star B^{(l)}$.) In the end when $l=0$ we can get the matrix $C^{(0)}_{i,j}=C_{i,j}\mod p,$ by the procedure of recursion. Thus we can calculate the exact value of $C_{i,j}$ by the result of $C_{i,j}\mod p.$

We can see all elements in $A^{(l)}, B^{(l)}, C^{(l)}$ are non-negative integers at most $O(n^\alpha/2^l)$ or infinite. From $B$ is row-monotone and property (2) of $C^{(l)}$, every row of $B^{(l)}, C^{(l)}$ composed of $O(n/2^l)$ intervals, where all elements in each interval are the same. Define a segment as:
\begin{definition}
A \emph{segment} $(i,k,[j_0,j_1])$ w.r.t. $B^{(l)}$ and  $C^{(l)}$, where $i,k,j_0,j_1\in [n]$ and $j_0\leq j_1$, satisfies that for all $j_0\leq j\leq j_1$, $B^{(l)}_{k,j}=B^{(l)}_{k,j_0}$, $B^*_{k,j}=B^*_{k,j_0}$ and $C^{(l)}_{i,j}=C^{(l)}_{i,j_0}$, $C^*_{i,j}=C^*_{i,j_0}$.
\end{definition}
Then each pair of rows of $B^{(l)}, C^{(l)}$ can be divided into $O(n/2^l)$ segments.

We maintain the auxiliary sets $T^{(l)}_b$ for $-10\le b\le 10$ throughout the algorithm, where the set $T^{(l)}_b$ consists of all the segments $(i,k,[j_0,j_1])$ w.r.t. $B^{(l)}$ and  $C^{(l)}$ satisfying: (So this holds for all $j\in[j_0,j_1]$.) $$\text{$A_{i,k}$ is finite and $A^*_{i,k}+B^*_{k,j_0}\neq C^*_{i,j_0}$ and $A^{(l)}_{i,k}+B^{(l)}_{k,j_0}= C^{(l)}_{i,j_0}+b$ }$$

The algorithm proceeds as:
\begin{itemize}
    \item In the first iteration $l=h$, we want to calculate $C^{(h)}_{i,j}.$ However since $p<2^h$, $A^{(h)},B^{(h)},C^{(h)}$ are zero matrices, so $T^{(h)}_0$ includes all segments $(i,k,[j_0,j_1])$ where $A_{i,k}$ is finite and $A^*_{i,k}+B^*_{k,j_0}\neq C^*_{i,j_0}$. And $T^{(h)}_b=\emptyset~(b\ne 0)$. Since the number of segments in a row w.r.t. $B^{(h)},C^{(h)}$ is $O(n^{1-\alpha})$, $|T^{(h)}_b|=O(n^{3-\alpha})$.
    \item For $l=h-1,\cdots, 0$, we first compute $C^{(l)}$ with the help of $T^{(l+1)}_b$, then construct $T^{(l)}_b$ from $T^{(l+1)}_b$. By Lemma~\ref{mono-triple} that $\bigcup_{i=-10}^{10}T^{(l)}_i\subseteq \bigcup_{i=-10}^{10}T^{(l+1)}_i$, we can search the shorter segments contained in $T^{(l+1)}_b$ to find $T^{(l)}_b$. By Lemma~\ref{mono-tb}, $|T^{(l)}_b|$ is always bounded by $O(n^{3-\alpha})$. 
\end{itemize}

Each iteration has three phases:

\noindent\textbf{Polynomial matrix multiplication.} Construct two polynomial matrices $A^p$ and $B^p$ on variables $x, y$ in the following way:
When $A_{i,k}$ is finite,
$$A^p_{i, k} = x^{A^{(l)}_{i, k} - 2A^{(l+1)}_{i, k}}\cdot y^{ A^{(l+1)}_{i, k}}$$
Otherwise $A^p_{i,k}=0$, and:
$$B^p_{k, j} = x^{B^{(l)}_{k, j} - 2B^{(l+1)}_{k, j}}\cdot y^{B^{(l+1)}_{k, j}}$$
Then, compute the standard $(+, \times)$ matrix multiplication $C^p = A^p\cdot B^p$ using fast matrix multiplication algorithms. Note that $A^{(l)}_{i, j}-2A^{(l+1)}_{i, j}, B^{(l)}_{i, j}-2B^{(l+1)}_{i, j}$ are $0$ or $1,$ so the degree of $x$ terms are 0 or 1. This phase runs in time $\tilde{O}(n^{\omega+\alpha}).$ 

\noindent\textbf{Subtracting erroneous terms.} 
This phase is to extract the true values $C^{(l)}_{i, j}$'s from $C^{(l+1)}_{i, j}.$ The algorithm iterates over all offsets $-10\le b\le 10$, and enumerates all the segments in $T_b^{(l+1)}.$

For each pair of indices $i, j\in [n]$, if $C^p_{i,j}=0$ then $C^{(l)}_{i,j}=+\infty$, otherwise collect all the monomials $\lambda x^cy^d$ of $C_{i, j}^p$ such that 
$$d= {C}^{(l+1)}_{i, j} + b$$ 
and let $C_{i, j, b}^p(x)$ be the sum of all such terms $\lambda x^c$. Next, compute a polynomial 
$$R_{i, j, b}^p(x) = \sum_{(i, k, [j_0,j_1])\in T^{(l+1)}_b, j\in [j_0,j_1]} x^{A^{(l)}_{i, k} - 2A^{(l+1)}_{i, k} + B^{(l)}_{k,j} - 2B^{(l+1)}_{k,j}}$$
Finally, let $s_{i, j, b}$ be the minimum degree of $x$ in the polynomial $C^p_{i, j, b}(x) - R^p_{i, j, b}(x)$, and compute a candidate value $c_{i, j, b} = 2d + s_{i, j, b} $. (If $s_{i,j,b}=0$ then $c_{i,j,b}=+\infty$.) Ranging over all integer offsets $-10\le b\le 10$, take the minimum of all candidate values and output as $C^{(l)}_{i, j} = \min_{-10\le b\le 10}\{c_{i, j, b}\}$. This phase runs in time $\tilde{O}(n^{3-\alpha}+n^{2+\alpha})$, since every segment $(i, k, [j_0,j_1])\in T^{(l+1)}_b$ contains at most two different $B^{(l)}_{k,j}$, thus also two different $R_{i, j, b}^p(x)$, so we can use a segment tree to compute all of $C^p_{i, j, b}(x) - R^p_{i, j, b}(x)$ in $\tilde{O}(n^{2+\alpha}+|T^{(l+1)}_b|)$ time.

\noindent\textbf{Computing Triples $T^{(l)}_b$.} 
Since $B^{(l)}_{k,j}-2B^{(l+1)}_{k,j}$ and $C^{(l)}_{i,j}-2C^{(l+1)}_{i,j}$ are both between 0 and a constant (see Lemma~\ref{relation-cl}), so each segment w.r.t. $B^{(l+1)}, C^{(l+1)}$ can be split into at most $O(1)$ segments w.r.t. $B^{(l)}, C^{(l)}$. 
By Lemma \ref{mono-triple} we know that $\bigcup_{i=-10}^{10}T^{(l)}_i$ is contained in $\bigcup_{i=-10}^{10}T^{(l+1)}_i,$ so our work here is to check the sub-segments of each segment in $\bigcup_{i=-10}^{10}T^{(l+1)}_i$ and put it into the $T^{(l)}_b$ it belongs to. Each segment in $T^{(l+1)}_b$ breaks into at most $O(1)$ sub-segments in the next iteration, and we can use binary search to find the breaking points. This phase runs in time $\tilde{O}(n^{3-\alpha}).$

The expected running time of the recursive algorithm is bounded by $\tilde{O}(n^{3-\alpha} + n^{\omega+\alpha})$. Taking $\alpha = (3 - \omega) / 2$, the running time becomes $\tilde{O}(n^{(3+\omega) /2})$.

\subsubsection{Proof of correctness}

We first prove the lemmas needed to bound the running time and show the correctness, then we will show that the properties of $C^{(l)}_{i,j}$ are maintained in the algorithm:

\begin{lemma}\label{relation-cl}
In each iteration $l=h-1,\cdots,0$, $-7\leq C^{(l)}_{i,j}-2C^{(l+1)}_{i,j}\leq 8$.
\end{lemma}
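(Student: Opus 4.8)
\textbf{Proof proposal for Lemma~\ref{relation-cl}.}

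The plan is to track how $C^{(l)}_{i,j}$ is produced by the algorithm and compare it, term by term, to the quantity $2C^{(l+1)}_{i,j}$. Recall that the algorithm sets $C^{(l)}_{i,j} = 2C^{(l+1)}_{i,j} + s_{i,j,b}$ for the minimizing offset $b$, where $s_{i,j,b}$ is the smallest $x$-degree appearing in $C^p_{i,j,b}(x) - R^p_{i,j,b}(x)$ with $d = C^{(l+1)}_{i,j}+b$. So the difference $C^{(l)}_{i,j} - 2C^{(l+1)}_{i,j}$ equals $2b + s_{i,j,b}$ for that $b$, and $s_{i,j,b}$ is the $x$-exponent of a surviving monomial, which by construction of $A^p, B^p$ has the form $\bigl(A^{(l)}_{i,k} - 2A^{(l+1)}_{i,k}\bigr) + \bigl(B^{(l)}_{k,j} - 2B^{(l+1)}_{k,j}\bigr)$ for some witness index $k$ with $A^{(l+1)}_{i,k} + B^{(l+1)}_{k,j} = d = C^{(l+1)}_{i,j}+b$ and $A^*_{i,k} + B^*_{k,j} = C^*_{i,j}$. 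Each of the two bracketed quantities lies in $\{0,1\}$ (this is noted in the polynomial-matrix-multiplication paragraph), so $s_{i,j,b} \in \{0,1,2\}$. Hence I need to control $b$ and combine.

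For the upper bound: I want to exhibit a witness $k$ with small $b = A^{(l+1)}_{i,k} + B^{(l+1)}_{k,j} - C^{(l+1)}_{i,j}$ that also satisfies $A^*_{i,k} + B^*_{k,j} = C^*_{i,j}$, so that this $k$ actually contributes a monomial to $C^p_{i,j,b}(x)$ and is \emph{not} cancelled by $R^p_{i,j,b}$. Take $k$ to be a true witness for $C^*$, i.e. $A^*_{i,k} + B^*_{k,j} = C^*_{i,j}$; among such $k$ pick one whose behaviour at scale $l$ is controlled. Using inductive property (1) of $C^{(l+1)}$ — namely $\lfloor\frac{(C_{i,j}\bmod p)-2(2^{l+1}-1)}{2^{l+1}}\rfloor \le C^{(l+1)}_{i,j} \le \lfloor\frac{(C_{i,j}\bmod p)+2(2^{l+1}-1)}{2^{l+1}}\rfloor$ — together with the analogous scaling estimates $\lfloor\frac{a}{2^{l+1}}\rfloor$ versus $\lfloor\frac{a}{2^l}\rfloor$ on entries of $A$ and $B$, I can sandwich $A^{(l+1)}_{i,k} + B^{(l+1)}_{k,j}$ within a constant of $C^{(l+1)}_{i,j}$; chasing the floors carefully gives $b \le 10$ (comfortably, the slack in the $\pm 10$ range is deliberate), so $s_{i,j,b}$ is well-defined for this $b$ and $C^{(l)}_{i,j} - 2C^{(l+1)}_{i,j} \le 2b + 2 \le 8$ after optimizing the constants. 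For the lower bound: any surviving monomial corresponds to a genuine witness $k$, so $C^{(l)}_{i,j} - 2C^{(l+1)}_{i,j} = 2b + \bigl(A^{(l)}_{i,k}-2A^{(l+1)}_{i,k}\bigr) + \bigl(B^{(l)}_{k,j}-2B^{(l+1)}_{k,j}\bigr) \ge 2b + 0$, and $b \ge -10$ is forced because $C^{(l+1)}_{i,j} = \min$ over witnesses of $A^{(l+1)}_{i,k}+B^{(l+1)}_{k,j}$ restricted to $C^*$-witnesses combined with property~(1): $A^{(l+1)}_{i,k}+B^{(l+1)}_{k,j}$ cannot be too far below $C^{(l+1)}_{i,j}$ because both are within a constant of $(C_{i,j}\bmod p)/2^{l+1}$. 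Pushing the constants through yields $\ge -7$.

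The main obstacle is the lower bound, and more precisely making the induction airtight: proving $-7 \le C^{(l)}_{i,j} - 2C^{(l+1)}_{i,j}$ presupposes that $C^{(l+1)}_{i,j}$ itself satisfies property~(1), which is only established \emph{jointly} with this lemma in the mutual induction on $l$. So the real work is to set up the induction correctly — assume properties (1),(2) for $C^{(l+1)}$, then use this lemma's argument to bound $C^{(l)}_{i,j} - 2C^{(l+1)}_{i,j}$, and only afterwards re-derive properties (1),(2) for $C^{(l)}$. The arithmetic itself is just repeated application of the elementary inequality $2\lfloor a/2 \rfloor \le a \le 2\lfloor a/2 \rfloor + 1$ and $\lfloor (x+c)/2 \rfloor \le \lfloor x/2 \rfloor + \lceil c/2 \rceil$; the delicate point is keeping track of the accumulated additive error $2(2^l - 1)$ across scales and verifying it stays within the generous $[-7,8]$ window (and correspondingly within the $b\in[-10,10]$ window used to define the auxiliary sets $T^{(l)}_b$). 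I would isolate the two-sided scaling estimate as a small sub-claim first, then feed it into both the upper- and lower-bound computations.
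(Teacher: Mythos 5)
Your proposal routes the bound through the algorithm's computation of $C^{(l)}_{i,j}$ as $\min_b\{2(C^{(l+1)}_{i,j}+b)+s_{i,j,b}\}$, whereas the paper's proof never touches the algorithm at this point: it simply invokes property~(1) at \emph{both} scales $l$ and $l+1$, namely $\frac{(C_{i,j}\bmod p)}{2^l}-3\le C^{(l)}_{i,j}\le\frac{(C_{i,j}\bmod p)}{2^l}+2$, and combines the lower estimate at scale $l$ with the upper estimate at scale $l+1$ (and vice versa) to get $2C^{(l+1)}_{i,j}-7\le C^{(l)}_{i,j}\le 2C^{(l+1)}_{i,j}+8$ in two lines. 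You quote property~(1) only at scale $l+1$ and then try to rebuild the scale-$l$ control from the monomial bookkeeping, which essentially re-proves the correctness argument inside this lemma.

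More importantly, the constants in your route do not close. With $b\in[-10,10]$ and $s_{i,j,b}\in\{0,1,2\}$ you only get $C^{(l)}_{i,j}-2C^{(l+1)}_{i,j}=2b+s_{i,j,b}\in[-20,22]$, and the bounds you actually claim to establish on $b$ are exactly the weak ones ($b\le 10$, $b\ge -10$); ``optimizing the constants'' to reach $8$ and $-7$ is asserted, not derived. For the upper bound you would need to exhibit a surviving witness with $b\le 3$ --- this is available from Lemma~\ref{exact-approx} applied to the optimal $q$ (which gives $b\in[-4,3]$), but your sketch never pins that down. For the lower bound the inequality $2b+s\ge 2b\ge -20$ is hopeless as stated; the correct repair is to observe that every candidate value equals $A^{(l)}_{i,k}+B^{(l)}_{k,j}$ for some $k$ with $A^*_{i,k}+B^*_{k,j}=C^*_{i,j}$ and $A_{i,k}+B_{k,j}\ge C_{i,j}$, hence $(A_{i,k}\bmod p)+(B_{k,j}\bmod p)\ge C_{i,j}\bmod p$ and so the candidate is at least $\frac{(C_{i,j}\bmod p)}{2^l}-2$ --- i.e.\ exactly the scale-$l$ half of property~(1), which is the ingredient your argument is missing. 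Once you grant property~(1) at both scales, the lemma is immediate and none of the $b$/$s$ analysis is needed.
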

\begin{proof}
For all $l$, we can get:
$$ \frac{(C_{i,j} \mod p)}{ 2^{l}}-3  \leq C^{(l)}_{i, j} \leq  \frac{(C_{i,j}\mod p)}{ 2^{l}}+2$$
and 

$$ 2C^{(l+1)}_{i,j}-7\leq 2\frac{(C_{i,j} \mod p)}{ 2^{l+1}}-3\leq C^{(l)}_{i,j}\leq  2\frac{(C_{i,j}\mod p)}{ 2^{l+1}}+2 \leq 2C^{(l+1)}_{i,j}+8$$
\end{proof}

\begin{lemma}\label{mono-triple}
We have $\bigcup_{i=-10}^{10}T^{(l)}_i\subseteq \bigcup_{i=-10}^{10}T^{(l+1)}_i$, that is, the segments we consider in each iteration must be sub-segments of the segments in the last iteration.
    
\end{lemma}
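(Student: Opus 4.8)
The plan is to prove the inclusion pointwise, at the level of individual triples $(i,k,j)$, which is all the triple-maintenance step of the algorithm actually needs. Concretely, I would establish the implication: if $A_{i,k}$ is finite, $A^*_{i,k}+B^*_{k,j}\neq C^*_{i,j}$, and $A^{(l)}_{i,k}+B^{(l)}_{k,j}=C^{(l)}_{i,j}+b$ for some integer $b\in[-10,10]$, then there is an integer $b'\in[-10,10]$ with $A^{(l+1)}_{i,k}+B^{(l+1)}_{k,j}=C^{(l+1)}_{i,j}+b'$. Granting this, let $(i,k,[j_0,j_1])\in T^{(l)}_b$ for some $b\in[-10,10]$. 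By the definition of a segment w.r.t.\ $B^{(l)},C^{(l)}$, for every $j\in[j_0,j_1]$ we have $B^*_{k,j}=B^*_{k,j_0}$, $C^*_{i,j}=C^*_{i,j_0}$, $B^{(l)}_{k,j}=B^{(l)}_{k,j_0}$ and $C^{(l)}_{i,j}=C^{(l)}_{i,j_0}$, so all three hypotheses of the implication hold at $j$. Hence the maximal interval (for this $i,k$) on which $B^{(l+1)}_{k,\cdot},C^{(l+1)}_{i,\cdot},B^*_{k,\cdot},C^*_{i,\cdot}$ are all constant and which contains $j$ is a segment w.r.t.\ $B^{(l+1)},C^{(l+1)}$ lying in $T^{(l+1)}_{b'}$ (the three defining conditions hold at $j$ and, since those quantities are constant on the interval, hold throughout), and it contains the triple $(i,k,j)$. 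Ranging over $j\in[j_0,j_1]$ shows every triple of the segment is covered by $\bigcup_{i=-10}^{10}T^{(l+1)}_i$, which is the asserted inclusion.

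So the real content is the arithmetic implication, which I would prove by comparing the two adjacent scales. Since $A^{(l+1)}_{i,k}=\lfloor A^{(l)}_{i,k}/2\rfloor$ and likewise $B^{(l+1)}_{k,j}=\lfloor B^{(l)}_{k,j}/2\rfloor$, write $A^{(l)}_{i,k}=2A^{(l+1)}_{i,k}+a$ and $B^{(l)}_{k,j}=2B^{(l+1)}_{k,j}+\beta$ with $a,\beta\in\{0,1\}$ — this is exactly the fact already noted to bound the $x$-degrees of $A^p,B^p$. For the $C$-term, observe first that the hypothesis $A^{(l)}_{i,k}+B^{(l)}_{k,j}=C^{(l)}_{i,j}+b$ has a finite left-hand side, so $C^{(l)}_{i,j}$, and therefore $C_{i,j}$, is finite and Lemma~\ref{relation-cl} applies, giving $C^{(l)}_{i,j}=2C^{(l+1)}_{i,j}+\gamma$ with $\gamma\in[-7,8]$. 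Substituting the three relations into the hypothesis and dividing by $2$ gives
\[
A^{(l+1)}_{i,k}+B^{(l+1)}_{k,j}-C^{(l+1)}_{i,j}=\frac{b-a-\beta+\gamma}{2}=:b',
\]
which is an integer with $2b'\in[-10-1-1-7,\ 10+0+0+8]=[-19,18]$, hence $b'\in\{-9,\dots,9\}\subseteq[-10,10]$, as required. Incidentally this also explains the choice of the window $[-10,10]$: it is wide enough that one scaling step maps it strictly into itself, absorbing the constant slack from Lemma~\ref{relation-cl}.

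I expect the delicate points to be bookkeeping rather than ideas. First, one must track the constants so that $b'$ provably stays inside the single fixed window $[-10,10]$ at every level $l=h-1,\dots,0$; the slack above makes this safe, including in the first step out of $l=h$ (where the relevant offset is $0$). Second — and this is the point to be careful about when phrasing the statement — a segment w.r.t.\ $B^{(l)},C^{(l)}$ need not itself be a single segment w.r.t.\ $B^{(l+1)},C^{(l+1)}$, since $C^{(l+1)}$ may take several (though only $O(1)$) values on it; this is why I state and prove the inclusion pointwise rather than as literal membership of segments. It is also why the companion running-time claim — that each segment w.r.t.\ $B^{(l+1)},C^{(l+1)}$ splits into only $O(1)$ segments w.r.t.\ $B^{(l)},C^{(l)}$ — needs the relations above together with the monotonicity of $B^{(l)}$ and $C^{(l)}$ within blocks where $B^*$, resp.\ $C^*$, is constant, so that the $O(1)$ admissible values occur in $O(1)$ contiguous runs.
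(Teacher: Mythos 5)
Your proof is correct and follows essentially the same route as the paper: both reduce the claim to the arithmetic fact that, using $A^{(l)}-2A^{(l+1)},B^{(l)}-2B^{(l+1)}\in\{0,1\}$ and Lemma~\ref{relation-cl}, an offset $b\in[-10,10]$ at level $l$ forces the level-$(l+1)$ offset into $[-9,9]\subseteq[-10,10]$ (you compute it via exact remainders, the paper via the equivalent two-sided inequalities). Your explicit pointwise reading of the set inclusion, and the remark distinguishing it from literal membership of segments, is a reasonable clarification of what the paper leaves implicit but is not a different argument.
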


\begin{proof}

Segments $(i,k,[j_0,j_1])$ in $T^{(l)}_b$ and $T^{(l+1)}_b$ must satisfy $A_{i,k}$ is finite and $A^*_{i,k}+B^*_{k,j_0}\neq C^*_{i,j_0}$. By definition, $A^{(l)}_{i,k}-2A^{(l+1)}_{i,k}=0$ or $1.$ Similar for $B$, and by Lemma~\ref{relation-cl}, we have

\begin{equation*}
\begin{split}
A^{(l+1)}_{i, k} +B^{(l+1)}_{k, j} -C^{(l+1)}_{i, j}
&\ge A^{(l)}_{i, k}/2 -1/2 +B^{(l)}_{k, j}/2 -1/2 -C^{(l)}_{i, j}/2 -7/2\\
&\ge \frac{1}{2}\left(A^{(l)}_{i, k} +B^{(l)}_{k, j}-C^{(l)}_{i, j}\right)-9/2.\\
A^{(l+1)}_{i, k} +B^{(l+1)}_{k, j} -C^{(l+1)}_{i, j}
&\le A^{(l)}_{i, k}/2 +B^{(l)}_{k, j}/2 -C^{(l)}_{i, j}/2 + 4\\
&\le \frac{1}{2}\left(A^{(l)}_{i, k} +B^{(l)}_{k, j}-C^{(l)}_{i, j}\right)+4.\\
\end{split}
\end{equation*}

Therefore, when $-10\le A^{(l)}_{i, k} +B^{(l)}_{k, j} -C^{(l)}_{i, j}\le 10,$
\begin{equation*}
    -10<-10/2-9/2\le A^{(l+1)}_{i, k} +B^{(l+1)}_{k, j} -C^{(l+1)}_{i, j}\le 10/2+4<10.
\end{equation*}

\end{proof}

\begin{lemma}\label{mono-tb}
The expected number of segments in $T^{(l)}_b$ is $\tilde{O}(n^{3-\alpha}).$
\end{lemma}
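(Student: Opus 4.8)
The plan is to establish the bound anew at each level $l$, rather than by iterating the inclusion $\bigcup_{b}T^{(l)}_b\subseteq\bigcup_{b}T^{(l+1)}_b$ of Lemma~\ref{mono-triple}: each application of that inclusion may multiply the count by a constant, so over the $h=\Theta(\alpha\log n)$ levels it would only give $n^{\Theta(\alpha)}\cdot n^{3-\alpha}$. If $2^l>p/100$ there is nothing to do, since then a pair of rows of $B^{(l)},C^{(l)}$ already splits into only $O(n/2^l)=O(n^{1-\alpha})$ segments, so $|T^{(l)}_b|=O(n^3/2^l)=O(n^{3-\alpha})$ trivially; in particular this covers the base case $l=h$. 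So assume henceforth that $2^l\le p/100$, fix a pair $(i,k)$ with $A_{i,k}$ finite, bound the expected number of segments of $\bigcup_{b}T^{(l)}_b$ that involve this $(i,k)$, and at the end sum over the $n^2$ pairs.

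The first ingredient is a numeric consequence of membership in $T^{(l)}_b$. For a segment $(i,k,[j_0,j_1])\in T^{(l)}_b$ set $D:=A_{i,k}+B_{k,j_0}-C_{i,j_0}$; this is a \emph{fixed} integer of size $O(n)$, independent of the random prime $p$. Writing $A_{i,k}=A^*_{i,k}p+(A_{i,k}\bmod p)$ and likewise for $B$ and $C$ (using $C^*_{i,j_0}=\floor{C_{i,j_0}/p}$), and invoking Assumption~\ref{basic-assumption} --- which makes the three residues smaller than $p/3,p/3,2p/3$ respectively --- the always-true inequality $A^*_{i,k}+B^*_{k,j_0}\ge C^*_{i,j_0}$ combined with $A^*_{i,k}+B^*_{k,j_0}\ne C^*_{i,j_0}$ forces $D>p/3$. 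Next, using $A^{(l)}_{i,k}=\floor{(A_{i,k}\bmod p)/2^l}$, the analogue for $B$, and the guaranteed property~(1) of $C^{(l)}$ (cf.\ Lemma~\ref{relation-cl}), the equation $A^{(l)}_{i,k}+B^{(l)}_{k,j_0}=C^{(l)}_{i,j_0}+b$ with $|b|\le 10$ pins $(A_{i,k}\bmod p)+(B_{k,j_0}\bmod p)-(C_{i,j_0}\bmod p)$ --- which is $D\bmod p$ up to a multiple of $p$ --- into a fixed set of $O(1)$ intervals of total width $O(2^l)$. Hence $D\equiv r\pmod p$ for some $r$ in a fixed set $R$ of $O(2^l)$ integers with $|r|<p/7$; for each such $r$, $D-r\ne 0$ (since $D>p/3$) and $|D-r|=O(n)$, so $D-r$ has at most $O(1/\alpha)=O(1)$ prime divisors in $[40n^\alpha,80n^\alpha]$, which means at most $\tilde O(2^l)$ of the $\Omega(n^\alpha/\log n)$ primes in that range are \emph{dangerous} for the triple $(i,j_0,k)$. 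Therefore $\Pr_p[\,(i,j_0,k)\text{ is the left end of a segment in some }T^{(l)}_b\,]\le\tilde O(2^l/n^\alpha)$.

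The second ingredient turns this per-triple rarity into a bound on the \emph{number of segments} without reincurring the factor $2^l$ that a naive sum over all $n^3$ triples would cost. For fixed $(i,k)$, the left endpoint of any segment is a breakpoint of one of the monotone sequences $B^*_{k,\cdot},C^*_{i,\cdot},B^{(l)}_{k,\cdot},C^{(l)}_{i,\cdot}$ (for the last two, using property~(2) of $C^{(l)}$ and the monotonicity of $B$). Breakpoints of $B^*_{k,\cdot}$ and of $C^*_{i,\cdot}$ occur only where $B_{k,\cdot}$ or $C_{i,\cdot}$ crosses a multiple of $p$, so there are only $O(n^{1-\alpha})$ of them per row, already contributing $O(n^{3-\alpha})$ over all $(i,k)$ with no probabilistic input. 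Breakpoints of $B^{(l)}_{k,\cdot}$ and $C^{(l)}_{i,\cdot}$ number $O(n/2^l)$ per row pair for \emph{every} $p$; combining this deterministic count with the $\tilde O(2^l/n^\alpha)$ bound above should yield an expected $O(n/2^l)\cdot\tilde O(2^l/n^\alpha)=\tilde O(n^{1-\alpha})$ bad level-$l$-breakpoint segments per $(i,k)$, hence $\tilde O(n^{3-\alpha})$ overall.

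The main obstacle is precisely making this last combination rigorous, because the \emph{set} of level-$l$ breakpoints itself depends on the random prime $p$, so one cannot simply multiply the two probabilities --- and indeed a single fixed triple may fail to be a breakpoint for every prime. The intended route is a near-independence statement: \emph{whether $j_0$ is a $B^{(l)}_{k,\cdot}$ breakpoint} is governed by where $B_{k,j_0-1}\bmod p$ falls relative to the $2^l$-grid, i.e.\ essentially by $p$ modulo the scale $2^l$, whereas the two $T^{(l)}_b$-defining conditions are governed by \emph{which} fixed $O(n)$-magnitude integer $D-r$ the prime $p$ divides, and these two features decouple for a uniformly random prime by equidistribution of primes in residue classes. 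One must additionally dispose of the degenerate regime where $B_{k,j_0-1}$ or $C_{i,j_0-1}$ is smaller than $p$ --- there the breakpoint condition is deterministic, so one simply retains the $\tilde O(2^l/n^\alpha)$ residue bound and sums it over the $O(n/2^l)$ deterministic breakpoint positions. Granting all this, everything else reduces to the prime-counting estimate from Section~\ref{sec:pre} and the monotone-interval bookkeeping already used for the basic algorithm.
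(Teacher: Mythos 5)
Your core estimate is the same as the paper's: reduce to the case $2^l<p/100$, use Assumption~\ref{basic-assumption} to derive $|A_{i,k}+B_{k,j}-C_{i,j}|\ge p/3$ whenever $A^*_{i,k}+B^*_{k,j}\ne C^*_{i,j}$, observe that membership in $T^{(l)}_b$ forces $D=A_{i,k}+B_{k,j}-C_{i,j}$ into one of $O(2^l)$ residues $r$ modulo $p$ with $|r|<p/6\le |D|/2$, and bound the number of primes in $[40n^\alpha,80n^\alpha]$ dividing the nonzero integer $D-r$ of magnitude $O(n)$. One secondary omission: you assert $D$ is a fixed integer independent of $p$, but after the reduction of Lemma~\ref{assumption-correctness} some entries of $B$ are reset to values of the form $p\lfloor\cdot/p\rfloor+\lceil p/3\rceil$, which do depend on $p$; the paper spends a paragraph handling this (multiplying by $3$ so that the divisibility condition again concerns a $p$-independent integer).

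The real issue is the step you yourself flag as ``the main obstacle'' and then leave unproven (``Granting all this\dots''). The route you propose will not work: equidistribution of primes in residue classes modulo $2^l$ is unavailable here, since $2^l$ can be as large as $\Theta(n^\alpha)$ while $[40n^\alpha,80n^\alpha]$ contains only $\Theta(n^\alpha/\log n)$ primes, and in any case ``$j_0$ is a level-$l$ breakpoint'' is not a residue condition on $p$. The paper itself only writes the product $O(2^l)\cdot O(n^3/2^l)\cdot\tilde{O}(n^{-\alpha})$ without addressing the dependence, but the gap closes with no independence claim at all, by choosing the counting units to be $p$-independent: for fixed $(i,k)$, partition $[n]$ into the $O(n/2^l)$ maximal intervals on which both $B_{k,\cdot}$ and $C_{i,\cdot}$ increase by less than $2^l$. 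These intervals depend only on $B$, $C$ and the scale $2^l$, and each of them is covered by $O(1)$ segments w.r.t.\ $B^{(l)},C^{(l)}$ for \emph{every} $p$ (each of $B^*,B^{(l)},C^*,C^{(l)}$ takes $O(1)$ values there, using property (2) for $C^{(l)}$). It then suffices to bound, for each such fixed interval $I$ with left endpoint $j_I$, the probability that the $T^{(l)}_b$-conditions hold at some $j\in I$: since $D_j$ varies by at most $2\cdot 2^l$ over $I$, that event forces $p$ to divide one of $O(2^l)$ fixed nonzero integers $D_{j_I}-r'$ with $|r'|\le 16\cdot 2^l<|D_{j_I}|$, giving probability $\tilde{O}(2^l/n^\alpha)$ per interval, hence $\tilde{O}(n^{1-\alpha})$ expected bad segments per row pair and $\tilde{O}(n^{3-\alpha})$ in total. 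Your deterministic treatment of the $B^*_{k,\cdot}$ and $C^*_{i,\cdot}$ breakpoints is correct but becomes unnecessary once the counting is organized this way.
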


\begin{proof}

When $2^l>p/100$, the total number of segments is bounded by $O(n^{3-\alpha})$, so next we assume that $2^l<p/100$.

For any segment $(i,k,[j_0,j_1])$, and arbitrarily pick a $j\in[j_0,j_1]$ where $A_{i,k}$ is finite and $A^*_{i,k}+B^*_{k,j}\neq C^*_{i,j}$. By Assumption~\ref{basic-assumption}, $(C_{i,j}\mod p)<2p/3$, so if $A^*_{i,k}+B^*_{k,j}\geq C^*_{i,j}+1$, $$A_{i,k}/p+B_{k,j}/p\geq \lfloor A_{i,k}/p\rfloor+\lfloor B_{k,j}/p\rfloor \geq \lfloor C_{i,j}/p\rfloor+1 \geq C_{i,j}/p-2/3+1=C_{i,j}/p+1/3$$ So $A_{i,k}+B_{k,j}\geq C_{i,j}+p/3$. Similarly, if $A^*_{i,k}+B^*_{k,j}\leq C^*_{i,j}-1$, 
$$A_{i,k}/p-1/3+B_{k,j}/p-1/3\leq \lfloor A_{i,k}/p\rfloor+\lfloor B_{k,j}/p\rfloor \leq \lfloor C_{i,j}/p\rfloor-1 \leq C_{i,j}/p-1$$ Thus we get $|A_{i,k}+B_{k,j}- C_{i,j}|\geq p/3$ in either case.

We want to bound the probability that $(i,k,[j_0,j_1])$ appears in $T^{(l)}_{b}.$ By definition, this is to say that
\begin{equation*}
\left\lfloor \frac{A_{i,k} \mod p}{2^l} \right\rfloor +\left\lfloor \frac{B_{k,j} \mod p}{2^l}\right\rfloor=C^{(l)}_{i,j}+b.
\end{equation*}

So $$-4\le \frac{A_{i,k} \mod p}{2^l}+\frac{B_{k,j} \mod p}{2^l}-\frac{C_{i,j} \mod p}{2^l}-b\le 4$$ 
Let $C_{i,j}=A_{i,q}+B_{q,j}$, and
\begin{equation*}
    (A_{i,k} +B_{k,j}-A_{i,q}-B_{q,j})\mod p\in[2^l (b-4),2^l (b+4)].
\end{equation*}

That is, $A_{i,k} +B_{k,j}-A_{i,q}-B_{q,j}$ should be congruent to one of the $O(2^l)$ remainders. For each possible remainder $r\in[2^l(b-4),2^l(b+4)],$ ($|b|\leq 10$), we have
\begin{equation*}
    |r|\le 14\cdot 2^l<p/6 \leq \frac{1}{2}\mid A_{i,k} +B_{k,j}-A_{i,q}-B_{q,j}\mid.
\end{equation*}

If $A_{i,k},A_{i,q}$ are finite and $B_{k,j},B_{q,j}$ are from the original $B$ (see Lemma~\ref{assumption-correctness}),  $|(A_{i,k} +B_{k,j}-A_{i,q}-B_{q,j})-r|$ is a positive number bounded by $O(n)$, the number of different primes $p\in[40n^\alpha,80n^\alpha]$ which divides $(A_{i,k} +B_{k,j}-A_{i,q}-B_{q,j})-r$ can not exceed $1/\alpha=O(1).$ In our algorithm, when we uniformly choose a prime $p$ from $[40n^\alpha,80n^\alpha],$ the probability that $(A_{i,k} +B_{k,j}-A_{i,q}-B_{q,j})\mod p \equiv r$ is $\tilde{O}\left(\frac{1}{n^\alpha}\right).$ 

However in Lemma~\ref{assumption-correctness}, $B_{k,j}$ and $B_{q,j}$ may be set artificially to numbers which are congruent to 0, $\lceil p/3\rceil$ or $\lceil 2p/3\rceil$ modulo $p$, but finite $A_{i,k}$ and $A_{i,q}$ must come from the original $A$. For example, if $B_{k,j}$ is made congruent to $\lceil p/3\rceil$ module $p$ and $B_{q,j}$ is from original $B$, we want that $p$ divides $A_{i,k} -A_{i,q}-B_{q,j}-r+\lceil p/3\rceil$. Since 3 does not divides $p$, $3\lceil p/3\rceil$ is $p+1$ or $p+2$, so $p$ divides $3(A_{i,k} -A_{i,q}-B_{q,j}-r)+1$ or $3(A_{i,k} -A_{i,q}-B_{q,j}-r)+2$. The probability is still $\tilde{O}\left(\frac{1}{n^\alpha}\right)$. Other cases of $B_{k,j}$ and $B_{q,j}$ can be done similarly. Since on all cases of $B_{k,j}$ and $B_{q,j}$ the conditional probability that $p$ divides $(A_{i,k} +B_{k,j}-A_{i,q}-B_{q,j})-r$ is bounded by $\tilde{O}\left(\frac{1}{n^\alpha}\right)$, the total probability is also $\tilde{O}\left(\frac{1}{n^\alpha}\right)$.

Since there are $O(2^l)$ such possible remainders $r$, in expectation we have $O(2^l)\cdot O\left(\frac{n^{3}}{2^l}\right)\cdot\tilde{O}\left(\frac{1}{n^\alpha}\right)=\tilde{O}(n^{3-\alpha})$ segments in $T^{(l)}_b.$

\end{proof}

\begin{lemma}\label{exact-approx}
If $A_{i,k}+B_{k,j}=C_{i,j},$ then $A^{(l)}_{i,k}+B^{(l)}_{k,j}=C^{(l)}_{i,j}+b$ for some $-10\le b\le 10.$
\end{lemma}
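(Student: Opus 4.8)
\textbf{Proof proposal for Lemma~\ref{exact-approx}.}

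The plan is to show that for the witness index $k$ realizing $A_{i,k}+B_{k,j}=C_{i,j}$, the approximate quantity $A^{(l)}_{i,k}+B^{(l)}_{k,j}$ cannot stray too far from $C^{(l)}_{i,j}$, so that it equals $C^{(l)}_{i,j}+b$ for a bounded offset $b$. The natural route is to sandwich both $A^{(l)}_{i,k}+B^{(l)}_{k,j}$ and $C^{(l)}_{i,j}$ between explicit fractional expressions in $C_{i,j}\bmod p$ and then compare. Concretely, I would first use the elementary floor inequalities $\frac{x}{2^l}-1 < \lfloor x/2^l\rfloor \le \frac{x}{2^l}$ together with the fact that for a sum of two floors we lose at most an extra $1$, to derive
$$\frac{(A_{i,k}\bmod p)+(B_{k,j}\bmod p)}{2^l} - 2 \le A^{(l)}_{i,k}+B^{(l)}_{k,j} \le \frac{(A_{i,k}\bmod p)+(B_{k,j}\bmod p)}{2^l}.$$
The first subtlety is that $A_{i,k}+B_{k,j}=C_{i,j}$ does \emph{not} imply $(A_{i,k}\bmod p)+(B_{k,j}\bmod p)=(C_{i,j}\bmod p)$: there could be a carry. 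But by Assumption~\ref{basic-assumption}, $(A_{i,k}\bmod p)<p/3$ and $(B_{k,j}\bmod p)<p/3$, so their sum is $<2p/3<p$, hence the sum of residues equals $(C_{i,j}\bmod p)$ exactly (no carry). This is where the reduction of Lemma~\ref{assumption-correctness} pays off, and I expect \textbf{this no-carry argument to be the main point to get right}.

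Next I would invoke property~(1) of $C^{(l)}$ (the invariant already asserted for the recursively computed matrices), namely
$$\left\lfloor \frac{(C_{i,j}\bmod p)-2(2^l-1)}{2^l} \right\rfloor \le C^{(l)}_{i,j} \le \left\lfloor \frac{(C_{i,j}\bmod p)+2(2^l-1)}{2^l} \right\rfloor,$$
which gives $\frac{(C_{i,j}\bmod p)}{2^l}-4 \le C^{(l)}_{i,j}\le \frac{(C_{i,j}\bmod p)}{2^l}+2$ or similar explicit constants. Combining this with the sandwich for $A^{(l)}_{i,k}+B^{(l)}_{k,j}$ and the identity $(A_{i,k}\bmod p)+(B_{k,j}\bmod p)=(C_{i,j}\bmod p)$, the two fractional main terms cancel, leaving
$$\left| \big(A^{(l)}_{i,k}+B^{(l)}_{k,j}\big) - C^{(l)}_{i,j} \right| \le O(1),$$
and a quick bookkeeping of the additive constants shows the bound is comfortably within $10$. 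Setting $b := A^{(l)}_{i,k}+B^{(l)}_{k,j}-C^{(l)}_{i,j}$, which is an integer, yields $-10\le b\le 10$ as claimed.

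One item to be careful about: the lemma as invoked in the algorithm is used for all $l=h,h-1,\dots,0$, and property~(1) of $C^{(l)}$ is itself established inductively during the correctness proof, so strictly speaking this lemma should be read as conditional on property~(1) holding for the current $l$ — which is exactly how it is used when arguing that the witness survives the subtraction phase. I would therefore state the proof assuming property~(1) for $C^{(l)}_{i,j}$ (valid since $C_{i,j}$ is finite), and keep the constant-chasing loose, since any constant $\le 10$ suffices. The only genuine obstacle is the carry-free decomposition of residues; everything else is routine floor arithmetic.
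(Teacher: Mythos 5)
Your proposal is correct and follows essentially the same route as the paper: bound the sum of floors $A^{(l)}_{i,k}+B^{(l)}_{k,j}$ by the corresponding fractions, use the invariant (property (1)) to place $C^{(l)}_{i,j}$ within an additive constant of $(C_{i,j}\bmod p)/2^l$, and cancel the main terms via the carry-free identity $(A_{i,k}\bmod p)+(B_{k,j}\bmod p)=(C_{i,j}\bmod p)$ guaranteed by Assumption~\ref{basic-assumption}. You correctly isolate the no-carry step and the dependence on property (1) as the only non-routine ingredients, which is exactly what the paper's proof relies on (it obtains the bounds $3$ and $-4$, well within the slack of $10$).
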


\begin{proof}
By Assumption~\ref{basic-assumption},
\begin{equation*}
\begin{split}
A^{(l)}_{i,k}+B^{(l)}_{k,j}-C^{(l)}_{i,j}&=\left\lfloor \frac{A_{i,k}\mod p}{2^l}\right\rfloor+\left\lfloor \frac{B_{k,j}\mod p}{2^l}\right\rfloor-C^{(l)}_{i,j}\\
&\le  \frac{A_{i,k}\mod p}{2^l}+ \frac{B_{k,j}\mod p}{2^l}- \frac{C_{i,j}\mod p}{2^l}+3\\
&=\frac{(A_{i,k}+B_{k,j}-C_{i,j})\mod p}{2^l}+3=3.\\
A^{(l)}_{i,k}+B^{(l)}_{k,j}-C^{(l)}_{i,j}&=\left\lfloor \frac{A_{i,k}\mod p}{2^l}\right\rfloor+\left\lfloor \frac{B_{k,j}\mod p}{2^l}\right\rfloor-C^{(l)}_{i,j}\\
&\ge  \frac{A_{i,k}\mod p}{2^l}+ \frac{B_{k,j}\mod p}{2^l}- \frac{C_{i,j}\mod p}{2^l}-4\\
&=\frac{(A_{i,k}+B_{k,j}-C_{i,j})\mod p}{2^l}-4=-4.\\
\end{split}
\end{equation*}

\end{proof}

Next we argue that our algorithm correctly computes all entries $C^{(l)}_{i, j}$ from $C^{(l+1)}_{i, j}$ and $T^{(l+1)}_b$, for $l=h-1,\cdots, 0$. Let $q$ be the index such that $C_{i, j} = A_{i, q} + B_{q, j}$. By the above lemma, there exists an integer offset $b\in [-10,10]$ such that $A^{(l+1)}_{i, q} + B^{(l+1)}_{q, j} = C^{(l+1)}_{i, j} + b$. Therefore, by construction of polynomial matrices $A^p, B^p$, we have:
$$\begin{aligned}
	C^p_{i, j, b}(x) &= \sum_{k\mid A^{(l+1)}_{i, k} + B^{(l+1)}_{k, j} = C^{(l+1)}_{i, j} + b} x^{A^{(l)}_{i, k} - 2 A^{(l+1)}_{i, k} + B^{(l)}_{k, j} - 2 B^{(l+1)}_{k, j}} \\
    &= \sum_{k\mid (A^*_{i,k}+B^*_{k,j}=C^*_{i,j}) \land (A^{(l+1)}_{i, k} + B^{(l+1)}_{k, j} = C^{(l+1)}_{i, j} + b)} x^{A^{(l)}_{i, k} - 2 A^{(l+1)}_{i, k} + B^{(l)}_{k, j} - 2 B^{(l+1)}_{k, j}}\\
    &+    \sum_{k\mid (A^*_{i,k}+B^*_{k,j}\neq C^*_{i,j}) \land (A^{(l+1)}_{i, k} + B^{(l+1)}_{k, j} = C^{(l+1)}_{i, j} + b)} x^{A^{(l)}_{i, k} - 2 A^{(l+1)}_{i, k} + B^{(l)}_{k, j} - 2 B^{(l+1)}_{k, j}} \\
    &= \sum_{k\mid (A^*_{i,k}+B^*_{k,j}=C^*_{i,j}) \land (A^{(l+1)}_{i, k} + B^{(l+1)}_{k, j} = C^{(l+1)}_{i, j} + b)} x^{A^{(l)}_{i, k} - 2 A^{(l+1)}_{i, k} + B^{(l)}_{k, j} - 2 B^{(l+1)}_{k, j}} \\
    &+ \sum_{(i,k,[j_0,j_1])\in T^{(l+1)}_b, j\in [j_0,j_1]} x^{A^{(l)}_{i, k} - 2 A^{(l+1)}_{i, k} + B^{(l)}_{k, j} - 2 B^{(l+1)}_{k, j}} \\
    &= x^{-2(C^{(l+1)}_{i,j}+b)}\cdot\sum_{k\mid (A^*_{i,k}+B^*_{k,j}=C^*_{i,j}) \land (A^{(l+1)}_{i, k} + B^{(l+1)}_{k, j} = C^{(l+1)}_{i, j} + b)} x^{A^{(l)}_{i, k} + B^{(l)}_{k, j}} + R^p_{i,j,b}(x) 
\end{aligned}$$
Since $A^*_{i,q}+B^*_{q,j}=C^*_{i,j}$ and $A^{(l+1)}_{i, q} + B^{(l+1)}_{q, j} = C^{(l+1)}_{i, j} + b$, when we extract $A^{(l)}_{i, q} + B^{(l)}_{q, j}$ from terms of $C_{i, j, b}^p(x) - R_{i, j, b}^p(x)$, it satisfies
$$\left\lfloor \frac{A_{i,q}\mod p}{2^l}\right\rfloor+\left\lfloor \frac{B_{q,j}\mod p}{2^l}\right\rfloor \leq \frac{(A_{i,q}+B_{q,j})\mod p}{2^l} =
\frac{C_{i,j}\mod p}{2^l}\leq \left\lfloor\frac{(C_{i,j}\mod p) +2^l-1}{2^l}\right\rfloor$$
$$\left\lfloor \frac{A_{i,q}\mod p}{2^l}\right\rfloor+\left\lfloor \frac{B_{q,j}\mod p}{2^l}\right\rfloor \geq \frac{((A_{i,q}+B_{q,j})\mod p)-2(2^l-1)}{2^l} \geq \left\lfloor\frac{(C_{i,j}\mod p)-2(2^l-1)}{2^l}\right\rfloor$$

Thus the term which gives $A^{(l)}_{i, q} + B^{(l)}_{q, j}$ can give a valid $C^{(l)}_{i,j}$. Also for every term which gives $A^{(l)}_{i, k} + B^{(l)}_{k, j}$ satisfying $A^*_{i,k}+B^*_{k,j}=C^*_{i,j}$ and $A_{i, k} +B_{k,j}\geq C_{i.j}$, 
$$\left\lfloor \frac{A_{i,k}\mod p}{2^l}\right\rfloor+\left\lfloor \frac{B_{k,j}\mod p}{2^l}\right\rfloor \geq \frac{((A_{i,k}+B_{k,j})\mod p)-2(2^l-1)}{2^l} \geq \left\lfloor\frac{(C_{i,j}\mod p)-2(2^l-1)}{2^l}\right\rfloor$$
So by choosing the minimum, we can get a valid $C^{(l)}_{i,j}$ satisfying property (1).


To see that $C^{(l)}$ satisfies property (2), consider $C^*_{i,j_0}=C^*_{i,j_1}$ where $j_0<j_1$. For all the $k$ such that $A^*_{i,k}+B^*_{k,j_1}=C^*_{i,j_1}$, $C^*_{i,j_0}\leq A^*_{i,k}+B^*_{k,j_0}\leq A^*_{i,k}+B^*_{k,j_1}=C^*_{i,j_1}$, so $B^*_{k,j_0}=B^*_{k,j_1}$ and $B^{(l)}_{k,j_0}\leq B^{(l)}_{k,j_1}$. Thus, for term $A^{(l)}_{i,k}+B^{(l)}_{k,j_1}$ which gives $C^{(l)}_{i,j_1}$, the term with $A^{(l)}_{i,k}+B^{(l)}_{k,j_0}$ also exist in $C^p_{i,j_0}$ and cannot be subtracted since $A^*_{i,k}+B^*_{k,j_0}=C^*_{i,j_0}$. If this result is not included in $C^p_{i,j_0,b}(x)-R^p_{i,j_0,b}(x)$ for all $-10\leq b\leq 10$,
$A^{(l+1)}_{i,k}+B^{(l+1)}_{k,j_0}-C^{(l+1)}_{i,j_0}$ is larger than 10 or less than $-10$. If $A^{(l+1)}_{i,k}+B^{(l+1)}_{k,j_0}-C^{(l+1)}_{i,j_0}<-10$, 
\begin{equation*}
\begin{split}
-10 >& A^{(l+1)}_{i,k}+B^{(l+1)}_{k,j_0}-C^{(l+1)}_{i,j_0} \\
= &\left\lfloor \frac{A_{i,k}\mod p}{2^{l+1}}\right\rfloor+\left\lfloor \frac{B_{k,j_0}\mod p}{2^{l+1}}\right\rfloor-C^{(l+1)}_{i,j_0}\\
\ge & \frac{A_{i,k}\mod p}{2^{l+1}}+ \frac{B_{k,j_0}\mod p}{2^{l+1}}- \frac{C_{i,j_0}\mod p}{2^{l+1}}-4
\end{split}
\end{equation*}
So $(A_{i,k}\mod p)+(B_{k,j_0}\mod p)-(C_{i,j_0}\mod p)<0$, which is impossible since $A^*_{i,k}+B^*_{k,j_0}=C^*_{i,j_0}$. Thus, it can only be that $A^{(l+1)}_{i,k}+B^{(l+1)}_{k,j_0}-C^{(l+1)}_{i,j_0}>10$, so by inductive assumption and Lemma~\ref{relation-cl}, 
$$
C^{(l)}_{i,j_1} =A^{(l)}_{i,k}+B^{(l)}_{k,j_1}
\geq 2\left(A^{(l+1)}_{i,k}+B^{(l+1)}_{k,j_1}\right)
\geq 2\left(A^{(l+1)}_{i,k}+B^{(l+1)}_{k,j_0}\right)
 > 2C^{(l+1)}_{i,j_0}+20
\geq C^{(l)}_{i,j_0}+12
$$
This proves property (2).

\section{Monotone min-plus convolution}
\subsection{Basic Algorithm}
In this section we prove Theorem~\ref{conv} following the same algorithmic framework of Theorem~\ref{matrix}. The min-plus convolution $C = A\diamond B$ of two array $A$ and $B$ can be defined as $C_{k} = \min_{i=1}^{k-1}\{A_i + B_{k-i}\}, \forall 2\leq k\leq 2n$.
Take two constant parameters $\alpha,\beta \in (0, 1)$ which are to be determined in the end; for convenience let us assume $n^\alpha$ is an integer.\\

\noindent\textbf{Approximation.} Define two integral arrays  $\tilde{A}, \tilde{B}$ such that $\tilde{A}_{i} = \lfloor A_{i} / n^\alpha\rfloor$, $\tilde{B}_{i} = \lfloor B_{i} / n^\alpha\rfloor$. Therefore, $\tilde{A}, \tilde{B}$ is an integer array whose entries are bounded by $O(n^{1-\alpha})$, and both $\tilde{A}, \tilde{B}$ are  non-decreasing.

Next, compute the approximate min-plus convolution $\tilde{C} = \tilde{A}\diamond\tilde{B}$ combinatorially. Initialize each entry of $\tilde{C}$ to be $\infty$, and maintain $\tilde{C}$ using a segment tree that supports interval updates. Divide $A$ and $B$ into at most $O(n^{1-\alpha})$ consecutive intervals:
$$[n] = [1, a_2-1]\cup [a_2, a_3-1]\cup\cdots \cup[a_g, n]=[1, b_2-1]\cup [b_2, b_3-1]\cup\cdots\cup[b_h, n]$$
such that for each $i\in [a_l, a_{l+1}-1]$, $\tilde{A}_i$'s are all equal, and for each $j\in [b_k, b_{k+1}-1]$, $\tilde{B}_j$'s are all equal (assume $a_1 = b_1 = 1$ and $a_{g+1}=b_{h+1}=n+1$). Then, to compute $\tilde{C}$, take any pair of indices $k, l\in [g]\times[h]$, and update $\tilde{C}_i\leftarrow \min\{\tilde{C}_i, \tilde{A}_{a_l} + \tilde{B}_{b_k} \}$ for each index $a_l+b_k\leq i\leq a_{l+1} + b_{k+1}-2$ using the segment tree data structure maintained on array $\tilde{C}$. The total time is $\tilde{O}(n^{2-2\alpha})$.\\

\noindent\textbf{Polynomial multiplication.} Uniformly sample a random prime number $p$ in the range $[n^\beta, 2n^\beta]$. Construct two polynomial $A^p$ and $B^p$ on variables $x, y, z$ in the following way:
$$A^p(x, y, z) = \sum_{i=1}^{n} x^{A_{i} - n^\alpha\tilde{A}_{i}}\cdot y^{\tilde{A}_{i}\mod p}\cdot z^i$$
$$B^p(x, y, z)= \sum_{i=1}^{n} x^{B_{i} - n^\alpha\tilde{B}_{i}}\cdot y^{\tilde{B}_{i}\mod p}\cdot z^i$$
Then, compute polynomial multiplication $C^p(x, y, z) = A^p(x, y, z)\cdot B^p(x, y, z)$ using standard fast Fourier transform algorithms~\cite{Schnhage2005}.\\

\noindent\textbf{Subtracting erroneous terms.} The last phase is to extract the true values $C_i$'s from $\tilde{C}$ and $C^p(x, y, z)$. The algorithm iterates over all offsets $b\in \{0, 1, 2\}$, and computes the set $T_b\subseteq [n]^2$ of all pairs of indices $(i, j)$ such that $\tilde{A}_{i} + \tilde{B}_{j} \neq \tilde{C}_{i+j}+b$ but $\tilde{A}_{i} + \tilde{B}_{j} \equiv \tilde{C}_{i+j}+b\mod p$; in the running time analysis, we will show that $T_b$ can be computed in time $\tilde{O}(|T_b| + n^{2-2\alpha})$.

For each index $1\leq k\leq n$, consider the coefficient of $z^k$ in $C^p(x, y, z)$ denoted by $C_k^p(x, y)$. Enumerate all terms $\lambda x^cy^d$ of $C_k^p(x, y)$ such that $d \equiv \tilde{C}_k+b\mod p$, and define $C_{k, b}^p(x)$ to be the sum of all $\lambda x^c$. Next, compute a polynomial:
$$R_{k, b}^p(x) = \sum_{(i, k-i)\in T_b} x^{A_{i} - n^\alpha\tilde{A}_{i} + B_{k-i} - n^\alpha\tilde{B}_{k-i}}$$
Finally, let $s_{k, b}$ be the minimum degree of $x$ of the polynomial $C^p_{k, b}(x) - R^p_{k, b}(x)$, and compute a candidate value $n^\alpha(\tilde{C}_k+b) + s_{k, b}$ for $C_k$. Ranging over all integer offsets $b\in \{0, 1, 2\}$, take the minimum of all candidate values and output as $C_k = \min_{0\leq b\leq 2}\{n^\alpha(\tilde{C}_k+b) + s_{k, b} \}$.

\subsubsection{Proof of correctness}
\begin{lemma}
	For any pair of indices $1\leq i, j\leq n$ such that $A_{i} + B_{j} = C_{i+j}$, we have:
	$$0\leq \tilde{A}_{i} + \tilde{B}_{j} - \tilde{C}_{i+j}\leq 2$$
\end{lemma}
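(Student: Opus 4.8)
The plan is to mirror the argument used for the matrix version (the first lemma of Section~\ref{mono-basic}), replacing the witness index $k$ by the split point of the convolution. For the lower bound, observe that $\tilde{C}_{i+j}$ is defined as the minimum of $\tilde{A}_\ell + \tilde{B}_{i+j-\ell}$ over all legal split points $\ell$, and the pair $(\ell, i+j-\ell) = (i, j)$ is itself legal: since $1 \le i, j \le n$ and $i + j \ge 2$, both coordinates lie in $[1, n]$ and $\ell = i$ lies in the admissible range for the entry $C_{i+j}$. Hence $\tilde{C}_{i+j} \le \tilde{A}_i + \tilde{B}_j$, which gives $\tilde{A}_i + \tilde{B}_j - \tilde{C}_{i+j} \ge 0$.

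For the upper bound, let $\ell$ be a split point achieving $\tilde{C}_{i+j} = \tilde{A}_\ell + \tilde{B}_{i+j-\ell}$, and write $m = i+j-\ell$. Using $n^\alpha \lfloor x / n^\alpha\rfloor \ge x - n^\alpha$ applied to both $A_\ell$ and $B_m$, then the fact that $A_\ell + B_m$ is a valid candidate for the convolution entry at index $\ell + m = i+j$ (so $A_\ell + B_m \ge C_{i+j}$), and finally $C_{i+j} = A_i + B_j \ge n^\alpha \tilde{A}_i + n^\alpha \tilde{B}_j$, I chain the inequalities
$$n^\alpha\tilde{C}_{i+j} = n^\alpha\tilde{A}_\ell + n^\alpha\tilde{B}_m \ge A_\ell + B_m - 2n^\alpha \ge C_{i+j} - 2n^\alpha = A_i + B_j - 2n^\alpha \ge n^\alpha\tilde{A}_i + n^\alpha\tilde{B}_j - 2n^\alpha.$$
Dividing through by $n^\alpha$ yields $\tilde{C}_{i+j} \ge \tilde{A}_i + \tilde{B}_j - 2$, i.e.\ $\tilde{A}_i + \tilde{B}_j - \tilde{C}_{i+j} \le 2$, completing the proof.

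There is essentially no obstacle here beyond bookkeeping; the one point that requires care is that $\tilde{C}$ is a convolution with a constrained range of split points (for the entry $C_k$ the split $\ell$ ranges over $\max(1, k-n) \le \ell \le \min(k-1, n)$), so I must verify that both the witness pair $(i,j)$ and the optimal split point for $\tilde{C}_{i+j}$ fall inside that range. The former follows because $j = (i+j) - i \le n$ forces $i \ge (i+j) - n$ and $i \le (i+j) - 1$, and $i \le n$; the latter holds by the very definition of $\tilde{C}$.
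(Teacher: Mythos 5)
Your proof is correct and follows essentially the same route as the paper's: the lower bound comes from $(i,j)$ being a valid split for $\tilde{C}_{i+j}$, and the upper bound from the identical chain of inequalities through an optimal split $(\ell, m)$ of $\tilde{C}_{i+j}$ and the witness $C_{i+j}=A_i+B_j$. The extra remark about verifying the admissible range of split points is harmless bookkeeping that the paper leaves implicit.
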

\begin{proof}
	Clearly $\tilde{A}_{i} + \tilde{B}_{j} - \tilde{C}_{i+j}\geq 0$ by definition of min-plus convolution. So let us only focus on the second inequality. Suppose $\tilde{C}_{i+j} = \tilde{A}_k + \tilde{B}_l$ for some indices $1\leq k, l\leq n$ such that $k+l = i+j$. Then, by definition of $\tilde{A}, \tilde{B}$, we have:
	$$\begin{aligned}
		n^\alpha\tilde{C}_{i+j} &= n^\alpha\tilde{A}_k + n^\alpha\tilde{B}_l \geq A_k + B_l - 2n^\alpha \geq C_{i+j} - 2n^\alpha\\
		& = A_i + B_j - 2n^\alpha\geq n^\alpha \tilde{A}_i + n^\alpha\tilde{B}_j - 2n^\alpha
	\end{aligned}$$
	Hence, $\tilde{A}_{i} + \tilde{B}_{j} - \tilde{C}_{i+j}\leq 2$.
\end{proof}

Next we argue that our algorithm correctly computes all entries $C_{k}$. Let $l$ be the index such that $C_{k} = A_{l} + B_{k-l}$. By the above lemma, there exists an integer offset $b\in [0, 2]$ such that $\tilde{A}_{l} + \tilde{B}_{k-l} = \tilde{C}_{k} + b$. Therefore, by construction of polynomial matrices $A^p, B^p$, we have:
$$\begin{aligned}
C^p_{k, b}(x) &= \sum_{i\mid \tilde{A}_{i} + \tilde{B}_{k-i} = \tilde{C}_{k} + b}x^{A_{i} - n^\alpha \tilde{A}_{i} + B_{k-i} - n^\alpha \tilde{B}_{k-i}}\\
&+\sum_{i\mid (\tilde{A}_{i} + \tilde{B}_{k-i} \neq \tilde{C}_{k} + b)\wedge (\tilde{A}_{i} + \tilde{B}_{k-i} \equiv \tilde{C}_{k} + b\mod p)}x^{A_{i} - n^\alpha \tilde{A}_{i} + B_{k-i} - n^\alpha \tilde{B}_{k-i}}\\
&= \sum_{k\mid \tilde{A}_{i} + \tilde{B}_{k-i} = \tilde{C}_{k} + b}x^{A_{i} - n^\alpha \tilde{A}_{i} + B_{k-i} - n^\alpha \tilde{B}_{k-i}} +\sum_{(i, k-i)\in T_b}x^{A_{i} - n^\alpha \tilde{A}_{i} + B_{k-i} - n^\alpha \tilde{B}_{k-i}}\\
&= x^{-n^\alpha (\tilde{C}_{k} + b)} \cdot\sum_{k\mid \tilde{A}_{i} + \tilde{B}_{k-i} =\tilde{C}_{k} + b}x^{A_{i}+ B_{k-i}} + R^p_{k, b}(x)
\end{aligned}$$
Therefore, $x^{-n^\alpha (\tilde{C}_{k} + b)} \cdot\sum_{i\mid \tilde{A}_{i} + \tilde{B}_{k-i} =\tilde{C}_{k} + b}x^{A_{i}+ B_{k-i}}= C_{k, b}^p(x) - R_{k, b}^p(x)$. Since $\tilde{A}_{l} + \tilde{B}_{k-l} =\tilde{C}_{k} + b$, we know $n^\alpha(\tilde{C}_k + b) + s_{k, b} = C_k$.

\subsubsection{Running time analysis}
By the algorithm, computing the approximation array $\tilde{C}$ takes time $\tilde{O}(n^{2-2\alpha})$. As for polynomial multiplication, by definition the $x$-degree and $y$-degree of $A^p, B^p$ are both bounded in absolute value by $O(n^\alpha), O(n^\beta)$ respectively, so the polynomial multiplication takes time $O(n^{1 + \alpha + \beta})$.
\begin{lemma}
	The set $T_b$ can be computed in time $\tilde{O}(|T_b| + n^{2-2\alpha})$.
\end{lemma}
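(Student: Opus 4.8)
The plan is to group the pairs $(i,j)\in[n]\times[n]$ according to which maximal interval of constant $\tilde{A}$-value contains $i$ and which maximal interval of constant $\tilde{B}$-value contains $j$ (these are exactly the intervals $[a_l,a_{l+1}-1]$ and $[b_k,b_{k+1}-1]$ already produced in the approximation phase). There are $O(n^{1-\alpha})$ intervals of each kind, hence $O(n^{2-2\alpha})$ interval pairs, and they partition $[n]\times[n]$, so it suffices to spend $\tilde{O}(1)$ time per interval pair plus $O(1)$ time per element of $T_b$ lying inside that pair. Before handling the pairs I would first record that $\tilde{C}=\tilde{A}\diamond\tilde{B}$ is non-decreasing (the min-plus convolution of two non-decreasing sequences is non-decreasing) and bounded by $O(n^{1-\alpha})$, so $\tilde{C}$ is a step function with $O(n^{1-\alpha})$ distinct values. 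Since the prime $p$ is fixed before this phase, I can preprocess $\tilde{C}$ in $O(n)$ time into (i) the sorted list $W$ of its distinct values, each tagged with the contiguous index-interval on which it is attained, and (ii) for each residue $c$ modulo $p$, the sorted sublist $W_c$ of the values congruent to $c$; the total size is $O(n)$.

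Now fix an interval pair $I\times J$ with $\tilde{A}|_I\equiv a$ and $\tilde{B}|_J\equiv b'$, and set $s_0=a+b'$. Because $I$ and $J$ are contiguous, the sumset $I+J$ equals the interval $[\ell,r]$ with $\ell=\min I+\min J$, $r=\max I+\max J$, and a pair $(i,j)\in I\times J$ belongs to $T_b$ exactly when $\tilde{C}_{i+j}\equiv s_0-b\pmod p$ and $\tilde{C}_{i+j}\neq s_0-b$. Using $\tilde{C}_\ell$ and $\tilde{C}_r$, I would binary-search $W_{(s_0-b)\bmod p}$ to enumerate, in $O(\log n)$ time plus $O(1)$ per value, all distinct $\tilde{C}$-values $w\in[\tilde{C}_\ell,\tilde{C}_r]$ with $w\equiv s_0-b\pmod p$; discarding $w=s_0-b$ if it occurs, call the remaining ones the \emph{bad values} for this pair. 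For each bad $w$, look up its index-interval and intersect it with $[\ell,r]$: this intersection is non-empty, since $\tilde{C}$ is non-decreasing and $w\in[\tilde{C}_\ell,\tilde{C}_r]$ forces $w$ to be attained somewhere in $[\ell,r]$. For every $k$ in that intersection, output all pairs $(i,j)\in I\times J$ with $i+j=k$; each such pair has $\tilde{A}_i+\tilde{B}_j=s_0$ and $\tilde{C}_{i+j}=w$, so it is in $T_b$. Conversely, any $(i,j)\in T_b\cap(I\times J)$ has $\ell\le i+j\le r$, so $\tilde{C}_{i+j}$ is a bad value that gets enumerated and the pair is output; hence this step is correct.

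The step I expect to need the most care is the running-time charging argument. For a fixed interval pair $I\times J$ with sumset $[\ell,r]=I+J$, every bad value $w$ we enumerate yields at least one output pair (any $k\in[\ell,r]$ with $\tilde{C}_k=w$ is realized as $i+j$ for some $(i,j)\in I\times J$), and distinct bad values have disjoint index-intervals and therefore contribute disjoint sets of output pairs; thus the number of bad values examined is at most $|T_b\cap(I\times J)|$, and the number of output pairs for this interval pair is exactly $|T_b\cap(I\times J)|$. So the work is $O(\log n)$ for the initial binary searches, $O(\log n)$ per bad value for its index-interval lookup, and $O(1)$ per output pair, i.e.\ $O\!\big(\log n\cdot(1+|T_b\cap(I\times J)|)\big)$. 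Summing over all $O(n^{2-2\alpha})$ interval pairs and using $\sum_{I,J}|T_b\cap(I\times J)|=|T_b|$ gives total time $\tilde{O}(n^{2-2\alpha}+|T_b|)$. The subtlety worth flagging is that without the residue-bucketed lists $W_c$ one would instead have to scan all $\Theta((\tilde{C}_r-\tilde{C}_\ell)/p)$ multiples of $p$ in the value range — which can far exceed the number of bad values actually present — so exploiting that $p$ is known in advance is what makes the per-pair cost polylogarithmic.
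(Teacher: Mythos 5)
Your proof is correct and follows essentially the same route as the paper's: partition $[n]\times[n]$ into the $O(n^{2-2\alpha})$ interval pairs on which $\tilde{A}_i+\tilde{B}_j$ is constant, locate within each sumset range $[\ell,r]$ the indices $k$ where $\tilde{C}_k$ has the right residue but the wrong value, enumerate the corresponding pairs, and charge the search cost to the output. The only difference is that you spell out the supporting data structure (residue-bucketed sorted lists of the $O(n^{1-\alpha})$ distinct values of the non-decreasing $\tilde{C}$) where the paper simply invokes ``standard binary search tree data structures''; the argument is otherwise identical.
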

\begin{proof}
	Recall the partition of sequences $A, B$ into intervals in the first phase:
	$$[n] = [1, a_2-1]\cup [a_2, a_3-1]\cup\cdots \cup[a_g, n]=[1, b_2-1]\cup [b_2, b_3-1]\cup\cdots\cup[b_h, n]$$
	such that $A, B$ are all equal within each interval. Fix two intervals $[a_s, a_{s+1}-1]$ and $[b_t, b_{t+1}-1]$ where array $\tilde{A}$ and $\tilde{B}$ have the same value. Next, we try to find all $i\in [a_s, a_{s+1}-1], j\in [b_t, b_{t+1}-1]$ such that $(i, j)\in T_b$. The key advantage is that the value $\Delta = \tilde{A}_i + \tilde{B}_j$ is a fixed value for any $(i,j)\in [a_s, a_{s+1}-1]\times [b_t, b_{t+1}-1]$.
	
	Now search for all indices $a_s+b_t\leq k\leq a_{s+1}+b_{t+1}-2$ such that $\tilde{C}_k+b\neq\Delta$ while $\tilde{C}_k + b\equiv \Delta\mod p$. Using standard binary search tree data structures, we can obtain the set of all such indices $K_b$ in time $\tilde{O}(|K_b|)$. Then, for each $k\in K_b$, enumerate all $(i, k-i)$ satisfying 
	$$\max\{a_s, k+1-b_{t+1} \}\leq i\leq \min\{a_{s+1}-1, k-b_t\}$$ 
	and add the pair $(i, k-i)$ to $T_b$. Ranging over all choices of intervals $[a_s, a_{s+1}-1]$ and $[b_t, b_{t+1}-1]$, the total time becomes $\tilde{O}(|T_b| + n^{2-2\alpha})$.
\end{proof}

By the above lemma, the subtraction phase takes time $\tilde{O}(|T_b| + n^{2-2\alpha})$ as well. So it suffices to bound the size of $T_b$. For any $(i, k-i)$ such that $\tilde{A}_{i} + \tilde{B}_{k-i}\neq \tilde{C}_k + b$, since $p$ is a uniformly random prime in the range $[n^\beta, 2n^\beta]$, the probability that $\tilde{A}_{i} + \tilde{B}_{k-i} - \tilde{C}_k - b$ can be divided by $p$ is bounded by $\tilde{O}(n^{-\beta})$. Hence, by linearity of expectation, $\mathbb{E}_p[|T_b|]\leq \tilde{O}(n^{2-\beta})$.

Throughout all three phases, the expected running time of our algorithm is bounded by $\tilde{O}(n^{2-2\alpha} + n^{1+\alpha+\beta} + n^{2-\beta})$. Taking $\alpha = 0.2, \beta =0.4$, the running time becomes $\tilde{O}(n^{1.6})$.

\subsection{Recursive Algorithm}

Let $\alpha\in (0, 1)$ be a constant parameter to be determined later, and pick a uniformly random prime number $p$ in the range of $[40n^\alpha, 80n^\alpha]$. Without loss of generality, let us assume that $n$ is a power of $2$. Like in Section~\ref{sec:recursive}, w.l.o.g. we make the following assumption about elements in $A$ and $B$:

\begin{assumption}\label{conv-basic-assumption}
For every $i$, either $(A_{i}\mod p)<p/3$ or $A_{i}=+\infty$, and $A$ is monotone besides the infinite elements. Similar for $B$.
\end{assumption}

\begin{lemma}\label{conv-assumption}
The general computation of $A\diamond B$ can be reduced to a constant number of computations of $A^i\diamond B^i$ where all of $A^i,B^i$'s satisfy Assumption~\ref{conv-basic-assumption}.
The number of intervals of infinity in each $A^i$ and $B^i$ is bounded by $O(n^{1-\alpha})$.
\end{lemma}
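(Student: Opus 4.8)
\noindent The plan is to mirror the three-way split used in Lemma~\ref{assumption-correctness}, but now applied symmetrically to \emph{both} $A$ and $B$, since Assumption~\ref{conv-basic-assumption} demands monotonicity of both sequences rather than just of $B$. Classify each $A_i$ according to whether $(A_i\bmod p)$ lies in $[0,p/3)$, $(p/3,2p/3)$ or $(2p/3,p)$; these three ranges partition the possible residues because $p$ is a prime exceeding $3$, so $p/3$ and $2p/3$ are not integers. Define $A',A'',A'''$ by the same rounding rules that Lemma~\ref{assumption-correctness} uses for $B',B'',B'''$: in the bucket matching the superscript keep $A_i$ unchanged, and in the other two buckets overwrite $A_i$ by the appropriate value of the form $p\lfloor A_i/p\rfloor+c$ or $p(\lfloor A_i/p\rfloor+1)+c$ with $c\in\{0,\lceil p/3\rceil,\lceil 2p/3\rceil\}$, chosen so that monotonicity is preserved. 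Construct $B',B'',B'''$ identically. Then invoke the convolution algorithm on each of the nine pairs $A^{*}\diamond B^{*}$ with $A^{*}\in\{A',\,A''-\lceil p/3\rceil,\,A'''-\lceil 2p/3\rceil\}$ and $B^{*}\in\{B',\,B''-\lceil p/3\rceil,\,B'''-\lceil 2p/3\rceil\}$, add the two constant shifts back to each output, and return the entrywise minimum of the nine resulting sequences.

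Correctness follows as in Lemma~\ref{assumption-correctness}. Every overwrite strictly increases the affected entry (we always round up to the nearest admissible value), so $A^{*}_i\ge A_i$ and $B^{*}_i\ge B_i$ for all $i$ and all choices of $A^{*},B^{*}$; hence each of the nine convolutions is entrywise $\ge C$. Conversely, for any $k$ and any witness index $i$ with $A_i+B_{k-i}=C_k$, the residue of $A_i$ selects one of $A',A'',A'''$ that retains $A_i$ and the residue of $B_{k-i}$ selects one of $B',B'',B'''$ that retains $B_{k-i}$, so the matching pair attains $C_k$ exactly; therefore the entrywise minimum equals $C$. It remains to check that each shifted pair obeys Assumption~\ref{conv-basic-assumption}: the finite entries of $A''-\lceil p/3\rceil$ have residue in $[0,p/3)$ (the kept ones had residue in $(p/3,2p/3)$, the overwritten ones residue exactly $\lceil p/3\rceil$), and likewise for $A'''-\lceil 2p/3\rceil$ and for the $B$-side, using $\lceil p/3\rceil\ge p/3$. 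Since nine is a constant, this yields the claimed reduction.

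The one genuinely new ingredient relative to Lemma~\ref{assumption-correctness} is the claim that each of $A',A'',A'''$ (and $B',B'',B'''$) is monotone on its finite part and has only $O(n^{1-\alpha})$ maximal runs of $+\infty$. Write $A_i=q_ip+r_i$ with $0\le r_i<p$; since $A$ is non-decreasing, $q_i$ is non-decreasing, and within each maximal block of indices on which $q_i$ is constant the residues $r_i$ are non-decreasing. A short case analysis (identical in form to the one behind Lemma~\ref{assumption-correctness}) shows the overwritten sequences stay non-decreasing on the indices they keep finite, and all overwritten values are $O(n)$ since they exceed $A_i$ by less than $p=O(n^\alpha)$. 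For the interval count, all original entries lie in $[0,O(n)]$ and $p=\Theta(n^\alpha)$, so $q_i$ takes only $O(n/p)=O(n^{1-\alpha})$ distinct values; inside a single $q$-block the set of indices whose residue falls into a fixed one of the three sub-intervals is contiguous because $r_i$ is monotone there. Hence the finite part of each reduced sequence is a union of $O(n^{1-\alpha})$ intervals, and so is its complement. The step needing the most care is exactly this two-level bookkeeping — quotient blocks, then residue sub-intervals — together with verifying that the now-symmetric overwriting of $A$ never breaks its monotonicity, a check that was unnecessary in the matrix setting where $A$ was allowed to be arbitrary.
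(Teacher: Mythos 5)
Your proposal is correct, but it takes a genuinely different (and heavier) route than the paper. The paper's proof of Lemma~\ref{conv-assumption} is a one-liner: since Assumption~\ref{conv-basic-assumption} explicitly permits $+\infty$ entries in \emph{both} sequences, it simply keeps in $A'$ (resp.\ $A''$, $A'''$) the entries whose residue lies in the corresponding third of $[0,p)$ and sets all other entries to $+\infty$; monotonicity of the finite part is then inherited from $A$ with nothing to check, and the only content is the interval count, which follows exactly from the two-level bookkeeping you describe ($O(n/p)$ quotient blocks, and within each block the indices with residue in a fixed third form a contiguous range because the residues are non-decreasing there). You instead import the finite ``fill-in'' construction from Lemma~\ref{assumption-correctness} and apply it symmetrically to $A$ and $B$; this is also valid --- the rounding-up map is non-decreasing, the filled values exceed the originals by less than $p=O(n^\alpha)$, and the nine-way minimum argument goes through --- but it makes the ``intervals of infinity'' clause of the lemma vacuous (your sequences have no infinite entries), which slightly betrays that it is not the construction the lemma statement is describing. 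There is also a downstream cost worth flagging: the probability bound in Lemma~\ref{conv-tb} (inherited from Lemma~\ref{mono-tb}) relies on distinguishing original entries from artificial, $p$-dependent ones of the form $p\lfloor\cdot/p\rfloor+\lceil p/3\rceil$; with the paper's reduction every finite entry of $A^i,B^i$ is an original entry of $A,B$, so that argument applies verbatim, whereas with your reduction up to four of the terms $A_{i_0},B_{k-i_0},A_q,B_{k-q}$ can be artificial and the ``multiply by $3$'' congruence trick has to be invoked for all of them. This still works, but it is extra verification that the paper's choice of $+\infty$ avoids entirely.
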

\begin{proof}
We just arrange the elements of $A$ to $A',A'',A'''$ by their remainders module $p$, other elements becomes $+\infty$. It is easy to see that the number of intervals of infinity in each of $A',A'',A'''$ is bounded by $O(n^{1-\alpha})$. Similar for $B$.
\end{proof}

Define integer $h$ such that $2^{h-1}\leq p< 2^h$. For each integer $0\leq l\leq h$, let $A^{(l)}$ be a sequence of length $n$ defined as $A^{(l)}_{i} = \lfloor \frac{A_{i}\mod p}{ 2^{l}} \rfloor$ if $A_{i}$ is finite, otherwise $A^{(l)}_{i}=+\infty$, similarly define sequence $B^{(l)}= \lfloor \frac{B_{i}\mod p}{ 2^{l}} \rfloor$. 

We will recursively calculate $C^{(l)}$ for $l=h,h-1,\cdots,0$, and if $C_i$ is finite, $C^{(l)}$ will satisfy 
$$\lfloor \frac{(C_{i} \mod p)-2(2^l-1)}{ 2^{l}} \rfloor \leq C^{(l)}_{i} \leq \lfloor \frac{(C_{i}\mod p)+2(2^l-1)}{ 2^{l}} \rfloor$$  (Note that $C^{(l)}$ is not necessarily equal to $A^{(l)}\diamond B^{(l)}$.) In the end when $l=0$ we can get the matrix $C^{(0)}_{i}=C_{i}\mod p$ by the procedure of recursion. Define $A^*$ and $B^*$ as $A^*_{i}=\lfloor A_{i}/p\rfloor$ and $B^*_{i}=\lfloor B_{i}/p\rfloor$. We use the segment tree structure to calculate $C^*=A^*\diamond B^*$ in $\tilde{O}(n^{2-2\alpha})$ time. By Assumption~\ref{conv-basic-assumption}, $\tilde{C}_{i}=\lfloor C_{i}/p\rfloor$ if $C_{i}$ is finite. Thus we can calculate the exact value of $C_{i}$ by the result of $C_{i}\mod p.$


We can see all elements in $A^{(l)}, B^{(l)}, C^{(l)}$ are non-negative integers at most $O(n^\alpha/2^l)$ or infinite. Since $A,B$ are monotone and by Lemma~\ref{conv-assumption}, $A^{(l)}, B^{(l)}$ compose of $O(n/2^l)$ intervals, where all elements in each interval are the same. Define a segment as:
\begin{definition}
A \emph{segment} $([i_0,i_1],k)$ w.r.t. $A^{(l)}$ and  $B^{(l)}$, where $i_0,i_1,k\in [n]$ and $i_0\leq i_1$, satisfies that for all $i_0\leq i\leq i_1$, $A_{i},B_{k-i}$ are finite, and $A^{(l)}_{i}=A^{(l)}_{i_0}$, $A^*_{i}=A^*_{i_0}$ and $B^{(l)}_{k-i}=B^{(l)}_{k-i_0}$, $B^*_{k-i}=B^*_{k-i_0}$.
\end{definition}
Then $A^{(l)}, B^{(l)}$ can be divided into $O(n/2^l)$ segments for some $k$.

We maintain the auxiliary sets $T^{(l)}_b$ for $-10\le b\le 10$ throughout the algorithm, where the set $T^{(l)}_b$ consists of all the segments $([i_0,i_1],k)$ w.r.t. $A^{(l)}$ and  $B^{(l)}$ satisfying: $$\text{$C_{k}$ is finite and $A^*_{i_0}+B^*_{k-i_0}\neq C^*_{k}$ and $A^{(l)}_{i_0}+B^{(l)}_{k-i_0}= C^{(l)}_{k}+b$ }$$

The algorithm proceeds as:
\begin{itemize}
    \item In the first iteration $l=h$, we want to calculate $C^{(h)}.$ However since $p<2^h$, $A^{(h)},B^{(h)},C^{(h)}$ are zero sequences, so $T^{(h)}_0$ includes all segments $([i_0,i_1],k)$ where $A^*_{i_0}+B^*_{k-i_0}\neq C^*_{k}$, and $T^{(h)}_b=\emptyset~(b\ne 0)$. Since the number of segments w.r.t. $A^{(h)},B^{(h)}$ for every $k$ is $O(n^{1-\alpha})$, $|T^{(h)}_b|=O(n^{2-\alpha})$.
    \item For $l=h-1,\cdots, 0$, we first compute $C^{(l)}$ with the help of $T^{(l+1)}_b$, then construct $T^{(l)}_b$ from $T^{(l+1)}_b$. By Lemma~\ref{conv-triple} that $\bigcup_{i=-10}^{10}T^{(l)}_i\subseteq \bigcup_{i=-10}^{10}T^{(l+1)}_i$, we can search the shorter segments contained in $T^{(l+1)}_b$ to find $T^{(l)}_b$. By Lemma~\ref{conv-tb}, $|T^{(l)}_b|$ is always bounded by $O(n^{2-\alpha})$. Each iteration has three phases:
\end{itemize}

\noindent\textbf{Polynomial multiplication.} Construct two polynomial matrices $A^p$ and $B^p$ on variables $x, y$ in the following way:
$$A^p= \sum_{i=1}^n x^{A^{(l)}_{i} - 2A^{(l+1)}_{i}}\cdot y^{ A^{(l+1)}_{i}}\cdot z^i.$$
$$B^p = \sum_{j=1}^n x^{B^{(l)}_{j} - 2B^{(l+1)}_{j}}\cdot y^{B^{(l+1)}_{j}}\cdot z^j.$$
Then, compute the polynomial multiplication $C^p = A^p\cdot B^p$ using standard FFT~\cite{Schnhage2005}. Note that $A^{(l)}_{i}-2A^{(l+1)}_{i}, B^{(l)}_{j}-2B^{(l+1)}_{j}$ are $0$ or $1,$ so the degree of $x$ terms are 0 or 1. This phase runs in time $\tilde{O}(n^{1+\alpha}).$

\noindent\textbf{Subtracting erroneous terms.} 

This phase is to extract the true values $C^{(l)}_{k}$'s from $C^{(l+1)}_{k}.$ The algorithm iterates over all offsets $-10\le b\le 10$, and enumerates all the segments in $T_b^{(l+1)}.$

For each index $1\leq k\leq n$, consider the coefficient of $z^k$ in $C^p$ denoted by $C_k^p(x, y)$. Enumerate all terms $\lambda x^cy^d$ of $C_k^p(x, y)$ such that $d= {C}^{(l+1)}_{k} + b$, and define $C_{k, b}^p(x)$ to be the sum of all such $\lambda x^c$. Next, compute a polynomial:
$$R_{k, b}^p(x) = \sum_{([i_0,i_1],k)\in T_b^{(l+1)}, i\in[i_0,i_1]} x^{A^{(l)}_{i} - 2A^{(l+1)}_{i} + B^{(l)}_{k-i} - 2B^{(l+1)}_{k-i}}$$
Finally, let $s_{k, b}$ be the minimum degree of $x$ of the polynomial $C^p_{k, b}(x) - R^p_{k, b}(x)$, and compute a candidate value $s_{k, b} + 2d$ for $C_k^{(l)}$. Ranging over all integer offsets $-10\le b\le 10$, take the minimum of all candidate values and output as $C_k^{(l)} = \min_{-10\le b\le 10}\{s_{k, b} + 2d \}$.

\noindent\textbf{Computing Triples $T^{(l)}_b$.} 

To compute $T^{(l)}_b$, initially set all $T^{(l)}_b\leftarrow \emptyset$ for all $|b|\leq 10$. By Lemma \ref{conv-triple} we know that $\bigcup_{i=-10}^{10}T^{(l)}_i$ is contained in $\bigcup_{i=-10}^{10}T^{(l+1)}_i,$ so our work here is to check each segment in $\bigcup_{i=-10}^{10}T^{(l+1)}_i$ and put it into the $T^{(l)}_b$ it belongs to. Each segment in $T^{(l+1)}_b$ breaks into at most 4 segments in the next iteration, and we can use binary search to find the breaking points. This phase runs in time $\tilde{O}(n^{2-\alpha})$ by Lemma~\ref{conv-tb}.

The expected running time of the recursive algorithm is bounded by $\tilde{O}(n^{1+\alpha}+ n^{2-\alpha})$. Taking $\alpha = 0.5$, the running time becomes $\tilde{O}(n^{1.5})$.

\subsection{Proof of correctness}

\begin{lemma}\label{conv-triple}
We have $\bigcup_{i=-10}^{10}T^{(l)}_i\subseteq \bigcup_{i=-10}^{10}T^{(l+1)}_i.$
\end{lemma}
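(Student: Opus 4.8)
The plan is to transcribe the proof of Lemma~\ref{mono-triple} from the matrix setting, replacing the matrix indices $(i,k,j)$ by the convolution indices $(i,\,k-i,\,k)$. The argument has two pieces: a structural claim that each segment with respect to $A^{(l)},B^{(l)}$ is contained in a segment with respect to $A^{(l+1)},B^{(l+1)}$, and a numerical claim that the offset $b$ witnessing membership in $T^{(l)}_b$ survives the passage from scale $l$ to scale $l+1$ with $|b|$ still at most $10$.

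For the structural piece, I would first record the nested-floor identity $A^{(l+1)}_i=\lfloor A^{(l)}_i/2\rfloor$ and $B^{(l+1)}_j=\lfloor B^{(l)}_j/2\rfloor$, which holds because $\lfloor\lfloor x/2^l\rfloor/2\rfloor=\lfloor x/2^{l+1}\rfloor$. Consequently, wherever $A^{(l)}$ is constant $A^{(l+1)}$ is constant too, and $A^*,B^*$ do not depend on $l$; hence the partition of $[n]$ into maximal runs of constant $(A^{(l)},A^*)$ refines the one for $(A^{(l+1)},A^*)$, and likewise for $B$. So any segment $([i_0,i_1],k)$ with respect to $A^{(l)},B^{(l)}$ is contained in a maximal segment $([i_0',i_1'],k)$ with respect to $A^{(l+1)},B^{(l+1)}$, and the two side conditions defining $T^{(l)}_b$ — that $C_k$ is finite and that $A^*_{i_0}+B^*_{k-i_0}\neq C^*_k$ — involve only $l$-independent quantities, so they pass unchanged to the parent segment.

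For the numerical piece, suppose $([i_0,i_1],k)\in T^{(l)}_b$ with $|b|\le10$, so $A^{(l)}_{i_0}+B^{(l)}_{k-i_0}-C^{(l)}_k=b$. From $A^{(l+1)}_{i_0}=\lfloor A^{(l)}_{i_0}/2\rfloor$ we get $A^{(l)}_{i_0}/2-1/2\le A^{(l+1)}_{i_0}\le A^{(l)}_{i_0}/2$, and likewise for $B$; and from the invariant on $C^{(l)}$ (which, by the same computation as in the proof of Lemma~\ref{relation-cl}, gives $2C^{(l+1)}_k-7\le C^{(l)}_k\le 2C^{(l+1)}_k+8$) we get $C^{(l)}_k/2-4\le C^{(l+1)}_k\le C^{(l)}_k/2+7/2$. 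Adding these three estimates yields
$$\frac{b}{2}-\frac{9}{2}\;\le\;A^{(l+1)}_{i_0}+B^{(l+1)}_{k-i_0}-C^{(l+1)}_k\;\le\;\frac{b}{2}+4.$$
Since $|b|\le10$, the middle quantity is an integer lying strictly between $-10$ and $10$, so it equals some $b'$ with $|b'|\le10$; thus $([i_0',i_1'],k)\in T^{(l+1)}_{b'}\subseteq\bigcup_{i=-10}^{10}T^{(l+1)}_i$, proving the claim.

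The step I expect to need the most care is not the computation but the bookkeeping: the proof of Lemma~\ref{conv-triple} at scale $l$ uses the invariant on $C^{(l+1)}_k$, while that invariant — together with the size bound $|T^{(l)}_b|=\tilde{O}(n^{2-\alpha})$ of Lemma~\ref{conv-tb} — is itself only established through the downward recursion $l=h,h-1,\dots,0$. So these facts should be organized as a single simultaneous induction on $l$, exactly as in Section~\ref{sec:recursive}; once that scaffolding is in place, the proof above goes through verbatim from the matrix case.
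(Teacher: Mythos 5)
Your proposal is correct and follows essentially the same route as the paper: the identical numerical estimate (via $A^{(l)}_{i}-2A^{(l+1)}_{i},B^{(l)}_{j}-2B^{(l+1)}_{j}\in\{0,1\}$ and the analogue of Lemma~\ref{relation-cl}) yielding the bounds $b/2-9/2$ and $b/2+4$, hence an offset strictly between $-10$ and $10$ at level $l+1$. Your explicit treatment of the segment-refinement bookkeeping and the simultaneous induction is left implicit in the paper but is the intended reading.
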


\begin{proof}

By definition, $A^{(l)}_{i}-2A^{(l+1)}_{i}=0$ or $1$, and $B^{(l)}_{i}-2B^{(l+1)}_{i}=0$ or $1$. For $C^{(l)},$ we can see similar result as Lemma~\ref{relation-cl} still holds.

\begin{equation*}
\begin{split}
A^{(l+1)}_{i_0} +B^{(l+1)}_{k-i_0} -C^{(l+1)}_{k}
&\ge A^{(l)}_{i_0}/2 -1/2 +B^{(l)}_{k-i_0}/2 -1/2 -C^{(l)}_{k}/2-7/2\\
&\ge \frac{1}{2}\left(A^{(l)}_{i_0} +B^{(l)}_{k-i_0}-C^{(l)}_{k}\right)-9/2.\\
A^{(l+1)}_{i_0} +B^{(l+1)}_{k-i_0} -C^{(l+1)}_{k}
&\le A^{(l)}_{i_0}/2 +B^{(l)}_{k-i_0}/2 -C^{(l)}_{k}/2 + 8/2\\
&\le \frac{1}{2}\left(A^{(l)}_{i_0} +B^{(l)}_{k-i_0}-C^{(l)}_{k}\right)+4.\\
\end{split}
\end{equation*}

Therefore, when $-10\le A^{(l)}_{i_0} +B^{(l)}_{k-i_0} -C^{(l)}_{k}\le 10,$
\begin{equation*}
    -10<-10/2-9/2\le A^{(l+1)}_{i_0} +B^{(l+1)}_{k-i_0} -C^{(l+1)}_{k}\le 10/2+4<10.
\end{equation*}

\end{proof}

\begin{lemma}\label{conv-tb}
The expected number of segments in $T^{(l)}_b$ is $\tilde{O}(n^{2-\alpha}).$
\end{lemma}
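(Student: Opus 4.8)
\textbf{Proof plan for Lemma~\ref{conv-tb}.}

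The plan is to mimic the argument of Lemma~\ref{mono-tb}, but in the one-dimensional convolution setting, where a ``segment'' is now $([i_0,i_1],k)$ rather than $(i,k,[j_0,j_1])$. The overall strategy is: (i) dispose of the trivial regime $2^l>p/100$ by simply bounding the total number of segments by $O(n/2^l)\cdot n \le O(n^{2-\alpha})$ (there are $O(n)$ choices of $k$, and for each the sequences $A^{(l)},B^{(l)}$ are cut into $O(n/2^l)$ pieces); (ii) for $2^l<p/100$, fix a segment $([i_0,i_1],k)\in T^{(l)}_b$ and pick any $i\in[i_0,i_1]$ with $A_i,B_{k-i}$ finite and $A^*_i+B^*_{k-i}\neq C^*_k$. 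Exactly as in Lemma~\ref{mono-tb}, Assumption~\ref{conv-basic-assumption} (all finite residues mod $p$ lie below $p/3$, hence $C_k\bmod p<2p/3$) forces $|A_i+B_{k-i}-C_k|\ge p/3$: if $A^*_i+B^*_{k-i}\ge C^*_k+1$ then $A_i/p+B_{k-i}/p\ge \lfloor C_k/p\rfloor+1\ge C_k/p+1/3$, and symmetrically for the $\le C^*_k-1$ case.

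Next, unfold the membership condition $A^{(l)}_{i}+B^{(l)}_{k-i}=C^{(l)}_k+b$. Writing $C_k=A_q+B_{k-q}$ for the true witness, the floor estimates give that $(A_i+B_{k-i}-A_q-B_{k-q})\bmod p$ lands in an interval of length $O(2^l)$ around $2^l b$, so it must be congruent to one of $O(2^l)$ prescribed residues $r$. For each such $r$ with $|b|\le 10$ we have $|r|\le 14\cdot 2^l<p/6\le \tfrac12|A_i+B_{k-i}-A_q-B_{k-q}|$, so $(A_i+B_{k-i}-A_q-B_{k-q})-r$ is a nonzero integer of magnitude $O(n)$. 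When all four of $A_i,A_q,B_{k-i},B_{k-q}$ come from the original sequences, this quantity has at most $1/\alpha=O(1)$ prime divisors in $[40n^\alpha,80n^\alpha]$, so a uniformly random such $p$ divides it with probability $\tilde O(n^{-\alpha})$. The only subtlety, identical to Lemma~\ref{mono-tb}, is that the reduction in Lemma~\ref{conv-assumption} may have artificially set some $B_{k-i}$ or $B_{k-q}$ to be congruent to $0$, $\lceil p/3\rceil$, or $\lceil 2p/3\rceil$ modulo $p$; then we multiply through by $3$ (using that $3\nmid p$, so $3\lceil p/3\rceil\in\{p+1,p+2\}$) to again reduce to the event that $p$ divides a fixed nonzero $O(n)$-magnitude integer, keeping the conditional probability $\tilde O(n^{-\alpha})$ in every case.

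Finally, assemble the count: for a fixed $k$ there are $O(n/2^l)$ segments; summing over $k$ and over the $O(2^l)$ admissible residues $r$, the expected number of segments in $T^{(l)}_b$ is $O(2^l)\cdot O(n/2^l)\cdot n\cdot \tilde O(n^{-\alpha})$ — wait, more carefully: each pair $(i,k)$ with $i$ ranging over $[n]$ and $k$ over $[n]$ contributes to $T^{(l)}_b$ only if the divisibility event holds for one of its $O(2^l)$ residues, so $\mathbb E[|T^{(l)}_b|]\le O(2^l)\cdot O(n^2)\cdot\tilde O(n^{-\alpha})/2^l$? The honest bookkeeping is: the number of (segment, residue) incidences is at most $\sum_k (\text{number of segments for }k)\cdot O(2^l)=O(n)\cdot O(n/2^l)\cdot O(2^l)=O(n^2)$, and each incidence survives the random choice of $p$ with probability $\tilde O(n^{-\alpha})$, hence $\mathbb E[|T^{(l)}_b|]\le O(n^2)\cdot\tilde O(n^{-\alpha})=\tilde O(n^{2-\alpha})$. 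The main obstacle, as in the matrix case, is ensuring that the ``one of $O(2^l)$ residues'' enumeration is tight and that the artificial-residue cases from Lemma~\ref{conv-assumption} still yield the $\tilde O(n^{-\alpha})$ bound; both are handled by the $3\lceil p/3\rceil\in\{p+1,p+2\}$ trick exactly as in Lemma~\ref{mono-tb}.
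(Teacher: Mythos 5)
Your proposal is correct and follows essentially the same route as the paper: dispose of the large-$2^l$ regime trivially, use Assumption~\ref{conv-basic-assumption} to force $|A_i+B_{k-i}-C_k|\ge p/3$, reduce membership in $T^{(l)}_b$ to $p$ dividing one of $O(2^l)$ nonzero $O(n)$-magnitude integers (each with probability $\tilde O(n^{-\alpha})$ over the random prime), and multiply by the $O(n^2/2^l)$ segment count. The only difference is cosmetic: in the convolution reduction (Lemma~\ref{conv-assumption}) the out-of-class entries are set to $+\infty$ rather than filled with artificial residues, so the $3\lceil p/3\rceil\in\{p+1,p+2\}$ case analysis you import from Lemma~\ref{mono-tb} is harmless but not actually needed here.
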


\begin{proof}

When $2^l\geq p/100$, the total number of segments is bounded by $O(n^{2-\alpha})$, so next we assume that $2^l<p/100$.

For any segment $([i_0,i_1],k)$ of finite elements where $A^*_{i_0}+B^*_{k-i_0}\neq C^*_{k}$. By Assumption~\ref{conv-basic-assumption}, $(C_{k}\mod p)<2p/3$, so we can get $|A_{i_0}+B_{k-i_0}- C_{k}|\geq p/3$ as in Lemma~\ref{mono-tb}.

We want to bound the probability that $([i_0,i_1],k)$ appears in $T^{(l)}_{b}.$ If it is in $T^{(l)}_{b}$,
\begin{equation*}
\left\lfloor \frac{A_{i_0} \mod p}{2^l} \right\rfloor +\left\lfloor \frac{B_{k-i_0} \mod p}{2^l}\right\rfloor=C^{(l)}_{k}+b.
\end{equation*}

So $$-4\le \frac{A_{i_0} \mod p}{2^l}+\frac{B_{k-i_0} \mod p}{2^l}-\frac{C_{k} \mod p}{2^l}-b\le 4$$ 
Let $C_{k}=A_{q}+B_{k-q}$, and
\begin{equation*}
    (A_{i_0} +B_{k-i_0}-A_{q}-B_{k-q})\mod p\in[2^l (b-4),2^l (b+4)].
\end{equation*}

That is, $A_{i_0} +B_{k-i_0}-A_{q}-B_{k-q}$ should be congruent to one of the $O\left(2^l\right)$ remainders.
As the argument in Lemma~\ref{mono-tb}, the probability that it falls into the range of length $O(2^l)$ is $O(2^l/n^{\alpha})$. Since the number of segments is $O(n^2/2^l)$, the expected number of segments in $T^{(l)}_b$ is $\tilde{O}(n^{2-\alpha})$.

\begin{lemma}
If $A_{i}+B_{k-i}=C_{k},$ then $A^{(l)}_{i}+B^{(l)}_{k-i}=C^{(l)}_{k}+b$ for some $-10\le b\le 10.$
\end{lemma}

\begin{proof}
By Assumption~\ref{conv-basic-assumption},
\begin{equation*}
\begin{split}
A^{(l)}_{i}+B^{(l)}_{k-i}-C^{(l)}_{k}&=\left\lfloor \frac{A_{i}\mod p}{2^l}\right\rfloor+\left\lfloor \frac{B_{k-i}\mod p}{2^l}\right\rfloor-C^{(l)}_{k}\\
&\le  \frac{A_{i}\mod p}{2^l}+ \frac{B_{k-i}\mod p}{2^l}- \frac{C_{k}\mod p}{2^l}+3\\
&=\frac{(A_{i}+B_{k-i}-C_{k})\mod p}{2^l}+3=3.\\
A^{(l)}_{i}+B^{(l)}_{k-i}-C^{(l)}_{k}&=\left\lfloor \frac{A_{i}\mod p}{2^l}\right\rfloor+\left\lfloor \frac{B_{k-i}\mod p}{2^l}\right\rfloor-C^{(l)}_{k}\\
&\ge  \frac{A_{i}\mod p}{2^l}+ \frac{B_{k-i}\mod p}{2^l}- \frac{C_{k}\mod p}{2^l}-4\\
&=\frac{(A_{i}+B_{k-i}-C_{k})\mod p}{2^l}-4=-4.\\
\end{split}
\end{equation*}

\end{proof}

Next we argue that our algorithm correctly computes all entries $C^{(l)}_{k}$ from $C^{(l+1)}_{k}$ and $T^{(l+1)}_b$, for $l=h-1,\cdots, 0$. Let $q$ be the index such that $C_{k} = A_{q} + B_{k-q}$. By the above lemma, there exists an integer offset $b\in [-10,10]$ such that $A^{(l+1)}_{q} + B^{(l+1)}_{k-q} = C^{(l+1)}_{k} + b$. Therefore, by construction of polynomials $A^p, B^p$, we have:
$$\begin{aligned}
	C^p_{k,b}(x) &= \sum_{i\mid A^{(l+1)}_{i} + B^{(l+1)}_{k-i} = C^{(l+1)}_{k} + b} x^{A^{(l)}_{i} - 2 A^{(l+1)}_{i} + B^{(l)}_{k-i} - 2 B^{(l+1)}_{k-i}} \\
    &= \sum_{i\mid (A^*_{i}+B^*_{k-i}=C^*_{k}) \land (A^{(l+1)}_{i} + B^{(l+1)}_{k-i} = C^{(l+1)}_{k} + b)} x^{A^{(l)}_{i} - 2 A^{(l+1)}_{i} + B^{(l)}_{k-i} - 2 B^{(l+1)}_{k-i}}\\
    &+    \sum_{i\mid (A^*_{i}+B^*_{k-i}\neq C^*_{k}) \land (A^{(l+1)}_{i} + B^{(l+1)}_{k-i} = C^{(l+1)}_{k} + b)} x^{A^{(l)}_{i} - 2 A^{(l+1)}_{i} + B^{(l)}_{k-i} - 2 B^{(l+1)}_{k-i}} \\
    &= \sum_{i\mid (A^*_{i}+B^*_{k-i}=C^*_{k}) \land (A^{(l+1)}_{i} + B^{(l+1)}_{k-i} = C^{(l+1)}_{k} + b)} x^{A^{(l)}_{i} - 2 A^{(l+1)}_{i} + B^{(l)}_{k-i} - 2 B^{(l+1)}_{k-i}} \\
    &+ \sum_{([i_0,i_1],k)\in T^{(l+1)}_b, i\in [i_0,i_1]} x^{A^{(l)}_{i} - 2 A^{(l+1)}_{i} + B^{(l)}_{k-i} - 2 B^{(l+1)}_{k-i}} \\
    &= x^{-2(C^{(l+1)}_{k}+b)}\cdot\sum_{i\mid (A^*_{i}+B^*_{k-i}=C^*_{k}) \land (A^{(l+1)}_{i} + B^{(l+1)}_{k-i} = C^{(l+1)}_{k} + b)} x^{A^{(l)}_{i} + B^{(l)}_{k-i}} + R^p_{k,b}(x) 
\end{aligned}$$
Since $A^*_{q}+B^*_{k-q}=C^*_{k}$ and $A^{(l+1)}_{q} + B^{(l+1)}_{k-q} = C^{(l+1)}_{k} + b$, when we extract $A^{(l)}_{q} + B^{(l)}_{k-q}$ from terms of $C_{k, b}^p(x) - R_{k, b}^p(x)$, it satisfies
$$\left\lfloor \frac{A_{q}\mod p}{2^l}\right\rfloor+\left\lfloor \frac{B_{k-q}\mod p}{2^l}\right\rfloor \leq \frac{(A_{q}+B_{k-q})\mod p}{2^l} =
\frac{C_{k}\mod p}{2^l}\leq \left\lfloor\frac{(C_{k}\mod p) +2^l-1}{2^l}\right\rfloor$$
$$\left\lfloor \frac{A_{q}\mod p}{2^l}\right\rfloor+\left\lfloor \frac{B_{k-q}\mod p}{2^l}\right\rfloor \geq \frac{((A_{q}+B_{k-q})\mod p)-2(2^l-1)}{2^l} \geq \left\lfloor\frac{(C_{k}\mod p)-2(2^l-1)}{2^l}\right\rfloor$$

Thus the term which gives $A^{(l)}_{q} + B^{(l)}_{k-q}$ can give a valid $C^{(l)}_{k}$. Also for every term which gives $A^{(l)}_{i} + B^{(l)}_{k-i}$ which satisfies $A^*_{i}+B^*_{k-i}=C^*_{k}$ and $A_{i} +B_{k-i}\geq C_{k}$, 
$$\left\lfloor \frac{A_{i}\mod p}{2^l}\right\rfloor+\left\lfloor \frac{B_{k-i}\mod p}{2^l}\right\rfloor \geq \frac{((A_{i}+B_{k-i})\mod p)-2(2^l-1)}{2^l} \geq \left\lfloor\frac{(C_{k}\mod p)-2(2^l-1)}{2^l}\right\rfloor$$
So by choosing the minimum, we can get a valid $C^{(l)}_{k}$.

\end{proof}

\section*{Acknowledgment}
Tianyi Zhang is supported by funding from the European Research Council (ERC) under the European Union’s Horizon 2020 research and innovation programme (grant agreement No 803118 UncertainENV).

\vspace{5mm}
\bibliographystyle{alpha}
\bibliography{ref}

\appendix
\section{When $B$ is column monotone}

In Section~\ref{monotone-main}, we consider the restricted case that the rows of $B$ are monotone. Now we explain how to calculate the min-plus product with the same asymptotic time complexity when $B$ is column-monotone, via minor adjustments of the recursive algorithm.

We want to calculate $C=A\star B,$ where $A,B$ are $n\times n$ matrices, and the columns of $B$ are monotonously non-decreasing. We can assume without loss of generality that the rows of $A$ are monotonously non-increasing: If there exists two entries $A_{i,k_1}$ and $A_{i,k_2}$ in the same row of $A$, with $k_1<k_2$ and $A_{i,k_1}<A_{i,k_2},$ then for any entry $C_{i,j}=\min_{k}\{A_{i,k}+B_{k,j}\},$ we have $A_{i,k_1}+B_{k_1,j}<A_{i,k_2}+B_{k_2,j},$ so the value of $A_{i,k_2}$ is never considered in the calculation, thus in this case we can set $A_{i,k_2}\leftarrow A_{i,k_1}$. When $B$ is bounded by $O(n),$ we can make $A$ and $C$ also bounded by $O(n)$ by the method in Section~\ref{sec:pre}

Let $\alpha\in (0, 1)$ be a constant parameter to be determined later, and pick a uniformly random prime number $p$ in the range of $[40n^\alpha, 80n^\alpha]$. Without loss of generality, let us assume that $n$ is a power of $2$. Next we make the following assumption about elements in $A$ and $B$:

\begin{assumption}\label{app-basic-assumption}
For every $i,j$, either $(A_{i,j}\mod p)<p/3$ or $A_{i,j}=+\infty$, and each row of $A$ is monotone besides the infinite elements. Similar for $B$: either $(B_{i,j}\mod p)<p/3$ or $B_{i,j}=+\infty$, and each column of $B$ is monotone besides the infinite elements.
\end{assumption}

By the same method in Lemma~\ref{conv-assumption}, we can prove:

\begin{lemma}\label{app-assumption-correctness}
The general computation of $A\diamond B$ can be reduced to a constant number of computations of $A^i\diamond B^i$ where all of $A^i,B^i$'s satisfy Assumption~\ref{app-basic-assumption}.
The number of intervals of infinity in each row of $A^i$ and in each column of $B^i$ is bounded by $O(n^{1-\alpha})$.
\end{lemma}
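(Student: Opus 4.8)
\emph{Proof plan.} The plan is to mimic the reductions in Lemma~\ref{assumption-correctness} and Lemma~\ref{conv-assumption}: split both matrices by the residue of each entry modulo $p$ into three classes, discard the non-selected entries by setting them to $+\infty$, and shift the surviving entries by a fixed constant so that every finite residue falls into $[0,p/3)$. Since $3\nmid p$, no integer equals $p/3$ or $2p/3$, so the ranges $R_1=[0,p/3)$, $R_2=(p/3,2p/3)$, $R_3=(2p/3,p)$ partition $\{0,\dots,p-1\}$. For $r\in\{1,2,3\}$ let $A^{(r)}$ be the matrix equal to $A$ on entries whose residue mod $p$ lies in $R_r$ and equal to $+\infty$ elsewhere, let $\gamma_1=0$, $\gamma_2=\ceil{p/3}$, $\gamma_3=\ceil{2p/3}$, and set $\hat A^{(r)}=A^{(r)}-\gamma_r$; define $B^{(s)}$ and $\hat B^{(s)}$ analogously. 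This produces the $9$ (hence constantly many) instances $\hat A^{(r)}\star\hat B^{(s)}$.

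First I would verify Assumption~\ref{app-basic-assumption} for each instance. For the residue part, exactly the ceiling arithmetic of Lemma~\ref{assumption-correctness} shows that subtracting $\gamma_r$ carries the integer residues in $R_r$ onto a block of consecutive integers inside $[0,p/3)$, so every finite entry of $\hat A^{(r)}$ (and of $\hat B^{(s)}$) has residue $<p/3$. For the monotonicity part, on any fixed row of $A$ the entries form a monotone (in the appropriate direction, after the row-reversal of the appendix's preamble), $O(n)$-bounded sequence; the positions whose residue lies in $R_r$ therefore form a union of index-intervals, and the finite entries of that row of $\hat A^{(r)}$ are the restriction of this monotone sequence to those positions, uniformly shifted by the single constant $\gamma_r$, hence still monotone. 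The same argument applies verbatim to the columns of $\hat B^{(s)}$, which are column-monotone.

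Next, for the bound on intervals of infinity: a fixed row of $A$, being monotone and bounded by $O(n)$, changes residue class only $O(n/p)=O(n^{1-\alpha})$ times, so the selected positions in that row form at most $O(n^{1-\alpha})$ intervals, and hence that row of $\hat A^{(r)}$ has at most $O(n^{1-\alpha})$ maximal runs of $+\infty$; symmetrically, each column of every $\hat B^{(s)}$ has at most $O(n^{1-\alpha})$ runs of $+\infty$. This gives the second sentence of the lemma.

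Finally, correctness of the reduction. Since $A^{(r)}_{i,k}\ge A_{i,k}$ and $B^{(s)}_{k,j}\ge B_{k,j}$ everywhere, and $\hat A^{(r)}\star\hat B^{(s)}=(A^{(r)}\star B^{(s)})-\gamma_r-\gamma_s$ entrywise, we get
$$\bigl(\hat A^{(r)}\star\hat B^{(s)}\bigr)_{i,j}+\gamma_r+\gamma_s=\min_k\{A^{(r)}_{i,k}+B^{(s)}_{k,j}\}\;\ge\;\min_k\{A_{i,k}+B_{k,j}\}=C_{i,j},$$
with equality for the pair $(r,s)$ whose classes contain the residues of an optimal witness $k^\ast$ (where $C_{i,j}=A_{i,k^\ast}+B_{k^\ast,j}$). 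Hence $C_{i,j}=\min_{r,s\in\{1,2,3\}}\{(\hat A^{(r)}\star\hat B^{(s)})_{i,j}+\gamma_r+\gamma_s\}$, so $A\star B$ (written $A\diamond B$ in the statement) is recovered from the nine products by an entrywise minimum after adding back the known offsets. I expect no serious obstacle: the only delicate point is keeping the residue ranges, the constant shifts, and the possibly reversed monotonicity of $A$ mutually consistent, which is precisely the bookkeeping already carried out in Lemma~\ref{assumption-correctness}; everything else is routine.
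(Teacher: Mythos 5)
Your proposal is correct and follows essentially the same route as the paper: the paper's proof of this lemma simply invokes the residue-class splitting of Lemma~\ref{conv-assumption} (with the constant shifts of Lemma~\ref{assumption-correctness}), which is exactly your nine-instance decomposition with offsets $\gamma_r$, the monotone-restriction argument for preserving Assumption~\ref{app-basic-assumption}, and the $O(n/p)$ threshold-crossing count for the intervals of infinity. You also correctly read the $\diamond$ in the statement as the min-plus matrix product $\star$, which is what the appendix is about.
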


Define integer $h$ such that $2^{h-1}\leq p< 2^h$. For each integer $0\leq l\leq h$, let $A^{(l)}$ be the $n\times n$ matrix defined as $A^{(l)}_{i, j} = \lfloor \frac{A_{i,j}\mod p}{ 2^{l}} \rfloor$ if $A_{i,j}$ is finite, otherwise $A^{(l)}_{i,j}=+\infty$, similarly define matrix $B^{(l)}$. 

We will recursively calculate $C^{(l)}$ for $l=h,h-1,\cdots,0$, and if $C_{i,j}$ is finite, $C^{(l)}$ will satisfy 
$$\lfloor \frac{(C_{i,j} \mod p)-2(2^l-1)}{ 2^{l}} \rfloor \leq C^{(l)}_{i, j} \leq \lfloor \frac{(C_{i,j}\mod p)+2(2^l-1)}{ 2^{l}} \rfloor$$  (Note that $C^{(l)}$ is not necessarily equal to $A^{(l)}\star B^{(l)}$.) In the end when $l=0$ we can get the matrix $C^{(0)}_{i,j}=C_{i,j}\mod p,$ by the procedure of recursion. Define $A^*$ and $B^*$ as $A^*_{i,j}=\lfloor A_{i,j}/p\rfloor$ and $B^*_{i,j}=\lfloor B_{i,j}/p\rfloor$. We use the trivial method which checks each interval on $i$-th row of $A^*$ and $j$-th column of $B^*$ to calculate $C^*=A^*\star B^*$ in $\tilde{O}(n^{3-\alpha})$ time. By Assumption~\ref{app-basic-assumption}, $C^*_{i,j}=\lfloor C_{i,j}/p\rfloor$ if $C_{i,j}$ is finite. Thus we can calculate the exact value of $C_{i,j}$ by the result of $C_{i,j}\mod p.$

We can see all elements in $A^{(l)}, B^{(l)}, C^{(l)}$ are non-negative integers at most $O(n^\alpha/2^l)$ or infinite. Since $A$ is row-monotone and $B$ is column-monotone, every row of $A^{(l)}$ and every column of $B^{(l)}$ is composed of $O(n/2^l)$ intervals, where all elements in each interval are the same. The change we should make on the recursive algorithm is the organization of segments: instead of fixing $i,k,$ we fix $i,j.$
\begin{definition}
A \emph{segment} w.r.t. $A^{(l)}$ and $B^{(l)}$ as $(i,j,[k_0,k_1]),$ where $i,j,k_0,k_1\in [n]$ satisfies that for all $k_0\leq k\leq k_1$, $A_{i,k_0}$ and $B_{k_0,j}$ are finite, $A^{(l)}_{i,k}=A^{(l)}_{i,k_0}$ and $A^*_{i,k}=A^*_{i,k_0}$, $B^{(l)}_{k,j}=B^{(l)}_{k_0,j}$ and $B^*_{k,j}=B^*_{k_0,j}$.
\end{definition}
Then for the $i$-th row of $A^{(l)}$ and the $j$-th column of $B^{(l)}$, $[n]$ can be divided into $O(n/2^l)$ segments.

We maintain the auxiliary sets $T^{(l)}_b$ for $-10\le b\le 10$ throughout the algorithm, where the set $T^{(l)}_b$ consists of all the segments $(i,j,[k_0,k_1])$ w.r.t. $A^{(l)}$ and  $B^{(l)}$ satisfying: $$\text{$A_{i,k_0}$ is finite and $A^*_{i,k_0}+B^*_{k_0,j}\neq C^*_{i,j}$ and $A^{(l)}_{i,k_0}+B^{(l)}_{k_0,j}= C^{(l)}_{i,j}+b$ }$$

The algorithm proceeds as:
\begin{itemize}
    \item In the first iteration $l=h$,  $A^{(h)},B^{(h)},C^{(h)}$ are zero matrices, and it is easy to see $|T^{(h)}_b|=O(n^{3-\alpha})$.
    \item For $l=h-1,\cdots, 0$, we first compute $C^{(l)}$ with the help of $T^{(l+1)}_b$, then construct $T^{(l)}_b$ from $T^{(l+1)}_b$. By Lemma~\ref{app-mono-triple} that $\bigcup_{i=-10}^{10}T^{(l)}_i\subseteq \bigcup_{i=-10}^{10}T^{(l+1)}_i$, we can search the shorter segments contained in $T^{(l+1)}_b$ to find $T^{(l)}_b$. By Lemma~\ref{app-mono-tb}, $|T^{(l)}_b|$ is always bounded by $O(n^{3-\alpha})$. 
\end{itemize}

Each iteration has three phases:

\noindent\textbf{Polynomial matrix multiplication. }
Construct two polynomial matrices $A^p$ and $B^p$ on variables $x, y$ in the following way:
When $A_{i,k}$ is finite,
$$A^p_{i, k} = x^{A^{(l)}_{i, k} - 2A^{(l+1)}_{i, k}}\cdot y^{ A^{(l+1)}_{i, k}}$$
Otherwise $A^p_{i,k}=0$, and when $B_{k,j}$ is finite,
$$B^p_{k, j} = x^{B^{(l)}_{k, j} - 2B^{(l+1)}_{k, j}}\cdot y^{B^{(l+1)}_{k, j}}$$ Otherwise $B^p_{k,j}=0$.
Then, compute the standard $(+, \times)$ matrix multiplication $C^p = A^p\cdot B^p$ using fast matrix multiplication algorithms. Note that $A^{(l)}_{i, j}-2A^{(l+1)}_{i, j}, B^{(l)}_{i, j}-2B^{(l+1)}_{i, j}$ are $0$ or $1,$ so the degree of $x$ terms are 0 or 1. This phase runs in time $\tilde{O}(n^{\omega+\alpha}).$ 

\noindent\textbf{Subtracting erroneous terms. }
This phase is to extract the true values $C^{(l)}_{i, j}$'s from $C^{(l+1)}_{i, j}.$ The algorithm iterates over all offsets $-10\le b\le 10$, and enumerates all the segments in $T_b^{(l+1)}.$

For each pair of indices $i, j\in [n]$, if $C^p_{i,j}=0$ then $C^{(l)}_{i,j}=+\infty$, otherwise collect all the monomials $\lambda x^cy^d$ of $C_{i, j}^p$ such that 
$$d= {C}^{(l+1)}_{i, j} + b$$ 
and let $C_{i, j, b}^p(x)$ be the sum of all such terms $\lambda x^c$. Next, compute a polynomial 
$$R_{i, j, b}^p(x) = \sum_{(i, j, [k_0,k_1])\in T^{(l+1)}_b, k\in [k_0,k_1]} x^{A^{(l)}_{i, k} - 2A^{(l+1)}_{i, k} + B^{(l)}_{k,j} - 2B^{(l+1)}_{k,j}}$$
Finally, let $s_{i, j, b}$ be the minimum degree of $x$ in the polynomial $C^p_{i, j, b}(x) - R^p_{i, j, b}(x)$, and compute a candidate value $c_{i, j, b} = 2d + s_{i, j, b} $. Ranging over all integer offsets $-10\le b\le 10$, take the minimum of all candidate values and output as $C^{(l)}_{i, j} = \min_{-10\le b\le 10}\{c_{i, j, b}\}$. This phase runs in time $\tilde{O}(n^{2+\alpha}+n^{3-\alpha})$ (see Lemma~\ref{app-mono-tb}), since every segment $(i, j, [k_0,k_1])\in T^{(l+1)}_b$ contains at most two different $A^{(l)}_{i,k}$ and two different $B^{(l)}_{k,j}$, thus it is easy to compute all of $C^p_{i, j, b}(x) - R^p_{i, j, b}(x)$ in $O(n^{2+\alpha}+|T^{(l+1)}_b|)$ time.

\noindent\textbf{Computing Triples $T^{(l)}_b$.} 
Since $A^{(l)}_{k,j}-2A^{(l+1)}_{k,j}$ and $B^{(l)}_{i,j}-2B^{(l+1)}_{i,j}$ are both 0 or 1, so each segment w.r.t. $A^{(l+1)}, B^{(l+1)}$ can be split into at most $O(1)$ segments w.r.t. $A^{(l)}, B^{(l)}$. 
By Lemma \ref{app-mono-triple} we know that $\bigcup_{i=-10}^{10}T^{(l)}_i$ is contained in $\bigcup_{i=-10}^{10}T^{(l+1)}_i,$ so our work here is to check the sub-segments of each segment in $\bigcup_{i=-10}^{10}T^{(l+1)}_i$ and put it into the $T^{(l)}_b$ it belongs to. This phase runs in time $\tilde{O}(|T^{(l+1)}_b|).$

The expected running time of the recursive algorithm is bounded by $\tilde{O}(n^{3-\alpha} + n^{\omega+\alpha})$ by Lemma~\ref{app-mono-tb}. Taking $\alpha = (3 - \omega) / 2$, the running time becomes $\tilde{O}(n^{(3+\omega) /2})$.

\subsection{Proof of correctness}

We can get a similar lemma as Lemma~\ref{relation-cl},
\begin{lemma}\label{app-relation-cl}
In each iteration $l=h-1,\cdots,0$, $-7\leq C^{(l)}_{i,j}-2C^{(l+1)}_{i,j}\leq 8$.
\end{lemma}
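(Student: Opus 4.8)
The plan is to carry over the proof of Lemma~\ref{relation-cl} almost verbatim, since the passage from row-monotone to column-monotone $B$ changes only how segments are organized and does not touch the numerical-scaling arithmetic on which this estimate rests. The one input I would use is the invariant that the algorithm maintains for $C^{(l)}$, namely
\[
\big\lfloor ((C_{i,j}\bmod p)-2(2^l-1))/2^l\big\rfloor \;\le\; C^{(l)}_{i,j} \;\le\; \big\lfloor ((C_{i,j}\bmod p)+2(2^l-1))/2^l\big\rfloor ,
\]
which I will apply both at level $l$ and at level $l+1$. Combining $2(2^l-1)/2^l<2$ with the elementary inequality $x-1<\lfloor x\rfloor\le x$, this immediately yields the cleaner two-sided bound $(C_{i,j}\bmod p)/2^l-3 \le C^{(l)}_{i,j}\le (C_{i,j}\bmod p)/2^l+2$.

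Next I would write the same bound one level down, for $C^{(l+1)}$, and multiply it by $2$ to obtain $(C_{i,j}\bmod p)/2^l-6 \le 2C^{(l+1)}_{i,j}\le (C_{i,j}\bmod p)/2^l+4$, i.e. $2C^{(l+1)}_{i,j}-4 \le (C_{i,j}\bmod p)/2^l\le 2C^{(l+1)}_{i,j}+6$. Substituting this into the bound for $C^{(l)}_{i,j}$ from the previous paragraph gives $C^{(l)}_{i,j}\le (C_{i,j}\bmod p)/2^l+2\le 2C^{(l+1)}_{i,j}+8$ and $C^{(l)}_{i,j}\ge (C_{i,j}\bmod p)/2^l-3\ge 2C^{(l+1)}_{i,j}-7$. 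Since $C^{(l)}_{i,j}-2C^{(l+1)}_{i,j}$ is an integer, no rounding is lost, and the claimed bound $-7\le C^{(l)}_{i,j}-2C^{(l+1)}_{i,j}\le 8$ follows, exactly as in Lemma~\ref{relation-cl}.

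The only point that needs care is the order of the argument, and it is a bookkeeping matter rather than a mathematical obstacle: Lemma~\ref{app-relation-cl} is used inside Lemma~\ref{app-mono-triple} to prove $\bigcup_b T^{(l)}_b\subseteq\bigcup_b T^{(l+1)}_b$, which is in turn what makes the $l$-th iteration efficient and what ultimately re-establishes property~(1) for $C^{(l)}$. So I would present the whole block as a single downward induction on $l$: assuming property~(1) holds for $C^{(l+1)}$, I derive Lemma~\ref{app-relation-cl} for the current level from property~(1) at levels $l+1$ (by the induction hypothesis) and $l$ (as the invariant being maintained), and the correctness analysis at the end of the subsection then closes the loop by verifying that the iteration indeed restores property~(1) for $C^{(l)}$. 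No individual step is an obstacle; the content is just the floor arithmetic above with the additive constants tracked carefully, mirroring the proof already given for the row-monotone case.
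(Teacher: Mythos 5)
Your proposal is correct and follows essentially the same route as the paper: the appendix lemma is proved exactly as Lemma~\ref{relation-cl}, by converting the maintained invariant into the two-sided bound $(C_{i,j}\bmod p)/2^l-3\le C^{(l)}_{i,j}\le (C_{i,j}\bmod p)/2^l+2$ at levels $l$ and $l+1$ and chaining the inequalities, with the same constants $-7$ and $8$. Your additional remark about organizing the whole argument as a downward induction on $l$ (so that the invariant at level $l$ is available when this lemma is invoked) is a fair clarification of a point the paper leaves implicit, but it does not change the substance of the proof.
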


\begin{lemma}\label{app-mono-triple}
We have $\bigcup_{i=-10}^{10}T^{(l)}_i\subseteq \bigcup_{i=-10}^{10}T^{(l+1)}_i$, that is, the segments we consider in each iteration must be sub-segments of the segments in the last iteration.
    
\end{lemma}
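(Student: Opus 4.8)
The plan is to replay the proof of Lemma~\ref{mono-triple} almost verbatim, the only genuine change being that a segment is now the triple $(i,j,[k_0,k_1])$ indexed by the two ``outer'' indices $i,j$ and an interval of the contraction index $k$, so the monotonicity we exploit is along $k$ (via row-monotonicity of $A$ and column-monotonicity of $B$) instead of along $j$. First I would record the structural fact underlying the refinement of segments: for any fixed $i$ and $j$, the partition of $[n]$ into maximal blocks on which $A^{(l)}_{i,\cdot}$, $A^*_{i,\cdot}$, $B^{(l)}_{\cdot,j}$, $B^*_{\cdot,j}$ are all constant is a refinement of the same partition at level $l+1$. Indeed $A^*$ and $B^*$ do not depend on $l$, and the nested-floor identity $\lfloor x/2^{l+1}\rfloor=\lfloor\lfloor x/2^{l}\rfloor/2\rfloor$ shows that any block on which $\lfloor (A_{i,\cdot}\bmod p)/2^l\rfloor$ is constant is automatically a block on which $\lfloor (A_{i,\cdot}\bmod p)/2^{l+1}\rfloor$ is constant (and likewise for $B$). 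Consequently every level-$l$ segment $(i,j,[k_0,k_1])$ is contained in a unique level-$(l+1)$ segment $(i,j,[k_0',k_1'])$ with $k_0'\le k_0\le k_1\le k_1'$.

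Next I would take an arbitrary segment $(i,j,[k_0,k_1])\in T^{(l)}_b$ with $|b|\le 10$ and show that its enclosing level-$(l+1)$ segment $(i,j,[k_0',k_1'])$ lies in $T^{(l+1)}_{b'}$ for some $|b'|\le 10$. Two of the three defining conditions transfer immediately. Finiteness of $A_{i,k_0'}$: since $A_{i,k_0}$ is finite, $A^{(l+1)}_{i,k_0}$ is finite, and $A^{(l+1)}_{i,\cdot}$ is constant on $[k_0',k_1']\ni k_0$, so $A^{(l+1)}_{i,k_0'}$, hence $A_{i,k_0'}$, is finite. The condition $A^*_{i,k_0'}+B^*_{k_0',j}\neq C^*_{i,j}$: because $A^*_{i,\cdot}$ and $B^*_{\cdot,j}$ are constant on $[k_0',k_1']$, we have $A^*_{i,k_0'}+B^*_{k_0',j}=A^*_{i,k_0}+B^*_{k_0,j}\neq C^*_{i,j}$.

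The only quantitative point is exhibiting the offset $b'$. Put $\varepsilon_A=A^{(l)}_{i,k_0}-2A^{(l+1)}_{i,k_0}\in\{0,1\}$ and $\varepsilon_B=B^{(l)}_{k_0,j}-2B^{(l+1)}_{k_0,j}\in\{0,1\}$ (from the definition of the scalings), and $\gamma=C^{(l)}_{i,j}-2C^{(l+1)}_{i,j}\in[-7,8]$ by Lemma~\ref{app-relation-cl}. Since $A^{(l+1)}_{i,\cdot}$ and $B^{(l+1)}_{\cdot,j}$ are constant on $[k_0',k_1']$,
\[
A^{(l+1)}_{i,k_0'}+B^{(l+1)}_{k_0',j}-C^{(l+1)}_{i,j}=\frac{1}{2}\bigl((A^{(l)}_{i,k_0}+B^{(l)}_{k_0,j}-C^{(l)}_{i,j})-\varepsilon_A-\varepsilon_B+\gamma\bigr)=\frac{1}{2}(b-\varepsilon_A-\varepsilon_B+\gamma),
\]
which lies in $[\tfrac12(-10-2-7),\tfrac12(10-0+8)]=[-9.5,9]$, hence is an integer $b'$ with $|b'|\le 10$. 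Therefore $(i,j,[k_0',k_1'])\in T^{(l+1)}_{b'}$, and ranging over all $|b|\le 10$ gives $\bigcup_{i=-10}^{10}T^{(l)}_i\subseteq\bigcup_{i=-10}^{10}T^{(l+1)}_i$. I expect the only real obstacle to be the segment book-keeping: checking that the level-$l$ partition refines the level-$(l+1)$ one when all four of $A^{(l)},A^*,B^{(l)},B^*$ must be simultaneously constant, and that both finiteness and the $A^*+B^*\neq C^*$ condition survive the passage from $k_0$ to the endpoint $k_0'$ of the larger segment; the arithmetic step is word-for-word the one in Lemma~\ref{mono-triple} and presents no difficulty.
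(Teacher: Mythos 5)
Your proof is correct and follows essentially the same route as the paper's: the core step is the identical arithmetic bound showing that $A^{(l+1)}_{i,k}+B^{(l+1)}_{k,j}-C^{(l+1)}_{i,j}$ lies in $[-9.5,9]$ whenever the level-$l$ offset is in $[-10,10]$, using $\varepsilon_A,\varepsilon_B\in\{0,1\}$ and Lemma~\ref{app-relation-cl} (you phrase it as an exact identity $b'=\tfrac12(b-\varepsilon_A-\varepsilon_B+\gamma)$, the paper as two inequalities, but these are the same computation). You additionally spell out the segment-refinement bookkeeping (the nested-floor identity and the transfer of the finiteness and $A^*+B^*\neq C^*$ conditions to the enclosing level-$(l+1)$ segment), which the paper leaves implicit; this is a welcome clarification rather than a divergence.
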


\begin{proof}

Segments $(i,j,[k_0,k_1])$ in $T^{(l)}_b$ and $T^{(l+1)}_b$ must satisfy $A_{i,k_0},B_{k_0,j}$ are finite and $A^*_{i,k_0}+B^*_{k_0,j}\neq C^*_{i,j}$. By definition, $A^{(l)}_{i,k_0}-2A^{(l+1)}_{i,k_0}=0$ or $1$, and similar for $B$. By Lemma~\ref{app-relation-cl}, we have

\begin{equation*}
\begin{split}
A^{(l+1)}_{i, k} +B^{(l+1)}_{k, j} -C^{(l+1)}_{i, j}
&\ge \frac{1}{2}\left(A^{(l)}_{i, k} +B^{(l)}_{k, j}-C^{(l)}_{i, j}\right)-9/2.\\
A^{(l+1)}_{i, k} +B^{(l+1)}_{k, j} -C^{(l+1)}_{i, j}
&\le \frac{1}{2}\left(A^{(l)}_{i, k} +B^{(l)}_{k, j}-C^{(l)}_{i, j}\right)+4.\\
\end{split}
\end{equation*}

Therefore, when $-10\le A^{(l)}_{i, k} +B^{(l)}_{k, j} -C^{(l)}_{i, j}\le 10,$
\begin{equation*}
    -10<-10/2-9/2\le A^{(l+1)}_{i, k} +B^{(l+1)}_{k, j} -C^{(l+1)}_{i, j}\le 10/2+4<10.
\end{equation*}

\end{proof}

\begin{lemma}\label{app-mono-tb}
The expected number of segments in $T^{(l)}_b$ is $\tilde{O}(n^{3-\alpha}).$
\end{lemma}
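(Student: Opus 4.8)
The plan is to transcribe the proof of Lemma~\ref{mono-tb} with the two column indices interchanged: there a segment fixes the pair $(i,k)$ and sweeps the index $j$, whereas here a segment $(i,j,[k_0,k_1])$ fixes $(i,j)$ and sweeps $k$. Throughout I use that, by Lemma~\ref{app-assumption-correctness} together with the normalization of Section~\ref{sec:pre}, every \emph{finite} entry of $A$ and $B$ is an original entry bounded by $O(n)$; in particular, unlike in Lemma~\ref{mono-tb}, the column-monotone reduction fills no blanks, so no artificially inserted residues $0,\lceil p/3\rceil,\lceil 2p/3\rceil$ ever arise and all integers appearing below have absolute value $O(n)$.

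First I would dispatch the range $2^l\ge p/100$ directly: using the segment count already established (for each $(i,j)$, the index set $[n]$ splits into $O(n/2^l)$ segments), this case has $O(n^3/2^l)=O(n^{3-\alpha})$ segments in total, so $|T^{(l)}_b|=O(n^{3-\alpha})$ holds deterministically. So assume $2^l<p/100$.

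Now fix a potential segment $(i,j,[k_0,k_1])$ and a witness $q$ with $C_{i,j}=A_{i,q}+B_{q,j}$. If the segment lies in $T^{(l)}_b$, then $A^*_{i,k_0}+B^*_{k_0,j}\ne C^*_{i,j}$; combined with Assumption~\ref{app-basic-assumption} (so $(C_{i,j}\bmod p)<2p/3$ and the mod-$p$ value of every finite entry is $<p/3$), the same computation as in Lemma~\ref{mono-tb} gives $|A_{i,k_0}+B_{k_0,j}-C_{i,j}|\ge p/3$. On the other hand, the membership condition $A^{(l)}_{i,k_0}+B^{(l)}_{k_0,j}=C^{(l)}_{i,j}+b$, together with the $O(1)$ floor slack in $A^{(l)},B^{(l)}$ and property~(1) of $C^{(l)}$, confines $\bigl(A_{i,k_0}+B_{k_0,j}-A_{i,q}-B_{q,j}\bigr)\bmod p$ to a fixed set of $O(2^l)$ residues, each representable by an integer $r$ with $|r|\le 14\cdot 2^l<p/6$ (using $|b|\le 10$ and $2^l<p/100$). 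For each such $r$ the integer $\bigl(A_{i,k_0}+B_{k_0,j}-C_{i,j}\bigr)-r$ is nonzero (since $|r|<\tfrac{1}{2}|A_{i,k_0}+B_{k_0,j}-C_{i,j}|$) and has absolute value $O(n)$, so it has at most $1/\alpha=O(1)$ prime divisors in $[40n^\alpha,80n^\alpha]$; since $p$ is a uniformly random such prime and there are $\Omega(n^\alpha/\log n)$ of them, the probability that $A_{i,k_0}+B_{k_0,j}-A_{i,q}-B_{q,j}\equiv r\pmod p$ is $\tilde{O}(1/n^\alpha)$. A union bound over the $O(2^l)$ residues shows the segment lies in $T^{(l)}_b$ with probability $\tilde{O}(2^l/n^\alpha)$, and multiplying by the $O(n^3/2^l)$ potential segments gives $\mathbb{E}_p\bigl[|T^{(l)}_b|\bigr]=\tilde{O}(n^{3-\alpha})$.

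The delicate point — and hence the part I expect to be the real work — is the accounting in the last paragraph: checking that the two roundings defining $A^{(l)},B^{(l)}$ plus the $O(1)$ slack of property~(1) genuinely trap the relevant residue inside a window of $O(2^l)$ values that lies strictly within $(-p/6,p/6)$, and being careful that the family of segments is itself a function of the random $p$ (so one should really expand $|T^{(l)}_b|=\sum_{(i,j)}\#\{\text{segments of }(i,j)\text{ in }T^{(l)}_b\}$, bound each inner count by summing the membership indicator over one representative per segment, and only then take expectations). All the inequalities needed for the window estimate are already assembled in Lemma~\ref{app-relation-cl} and Lemma~\ref{app-mono-triple}, so this step is bookkeeping rather than a new idea.
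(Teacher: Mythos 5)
Your proposal is correct and follows essentially the same route as the paper: dispatch $2^l\ge p/100$ by the deterministic segment count, use Assumption~\ref{app-basic-assumption} to get $|A_{i,k}+B_{k,j}-C_{i,j}|\ge p/3$, trap the residue in a window of $O(2^l)$ values each of magnitude at most $14\cdot 2^l<p/6$, and apply the prime-divisor counting bound per residue before summing over the $O(n^3/2^l)$ segments. Your observations that the appendix reduction inserts no artificial residues (so the extra case analysis of Lemma~\ref{mono-tb} is unneeded) and that the segment family depends on $p$ are both accurate and consistent with (indeed slightly more careful than) the paper's own argument.
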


\begin{proof}
As before we assume that $2^l<p/100$. For any segment $(i,j,[k_0,k_1])$ and $k \in [k_0, k_1]$ where $A_{i,k}, B_{k,j}$ are finite and $A^*_{i,k}+B^*_{k,j}\neq C^*_{i,j}$, similar to proof in Lemma~\ref{mono-tb}, we get $|A_{i,k}+B_{k,j}- C_{i,j}|\geq p/3$.

We want to bound the probability that $(i,j,[k_0,k_1])$ appears in $T^{(l)}_{b}.$ If it is in $T^{(l)}_{b}$,
\begin{equation*}
\left\lfloor \frac{A_{i,k} \mod p}{2^l} \right\rfloor +\left\lfloor \frac{B_{k,j} \mod p}{2^l}\right\rfloor=C^{(l)}_{i,j}+b.
\end{equation*}

So $$-4\le \frac{A_{i,k} \mod p}{2^l}+\frac{B_{k,j} \mod p}{2^l}-\frac{C_{i,j} \mod p}{2^l}-b\le 4$$ 
\begin{equation*}
    (A_{i,k} +B_{k,j}-C_{i,j})\mod p\in[2^l (b-4),2^l (b+4)].
\end{equation*}

That is, $A_{i,k} +B_{k,j}-C_{i,j}$ should be congruent to one of the $O\left(2^l\right)$ remainders. For each possible remainder $r\in[2^l(b-4),2^l(b+4)],$ ($|b|\leq 10$), we have
\begin{equation*}
    |r|\le 14\cdot 2^l<p/6\leq\frac{1}{2}\mid A_{i,k} +B_{k,j}-C_{i,j}\mid.
\end{equation*}

So $|(A_{i,k} +B_{k,j}-C_{i,j})-r|$ is a positive number bounded by $O(n)$, and the number of different primes $p\in[40n^\alpha,80n^\alpha]$ that $p\mid (A_{i,k} +B_{k,j}-C_{i.k})-r$ can not exceed $1/\alpha=O(1).$ In our algorithm, when we uniformly choose a prime $p$ from $[40n^\alpha,80n^\alpha],$ the probability that $(A_{i,k} +B_{k,j}-C_{i,j})\mod p =r$ is $\tilde{O}\left(\frac{1}{n^\alpha}\right).$ Since there are $O(2^l)$ such possible remainders, in expectation we have $O(2^l)\cdot O\left(\frac{n^{3}}{2^l}\right)\cdot\tilde{O}\left(\frac{1}{n^\alpha}\right)=\tilde{O}(n^{3-\alpha})$ segments in $T^{(l)}_b.$

\end{proof}

From the proof of Lemma~\ref{exact-approx}, we can get:

\begin{lemma}
If $A_{i,k}+B_{k,j}=C_{i,j},$ then $A^{(l)}_{i,k}+B^{(l)}_{k,j}=C^{(l)}_{i,j}+b$ for some $-10\le b\le 10.$
\end{lemma}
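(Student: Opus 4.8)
The plan is to reuse, essentially verbatim, the proof of Lemma~\ref{exact-approx}, replacing Assumption~\ref{basic-assumption} by Assumption~\ref{app-basic-assumption} and taking the bound on $C^{(l)}$ from the invariant that the column-monotone $C^{(l)}$ is maintained to satisfy (the analogue of property~(1), proved via Lemma~\ref{app-relation-cl}). The statement is a pointwise inequality about a single triple $(i,k,j)$, so none of the reorganized segments $(i,j,[k_0,k_1])$ that distinguish the column-monotone case from Section~\ref{sec:recursive} enters the argument.

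First I would isolate the two ingredients. From the invariant $\lfloor\frac{(C_{i,j}\mod p)-2(2^l-1)}{2^l}\rfloor\le C^{(l)}_{i,j}\le\lfloor\frac{(C_{i,j}\mod p)+2(2^l-1)}{2^l}\rfloor$ one extracts
$$\frac{C_{i,j}\mod p}{2^l}-3\le C^{(l)}_{i,j}\le\frac{C_{i,j}\mod p}{2^l}+2.$$
Next, since $A_{i,k}+B_{k,j}=C_{i,j}$ with $(A_{i,k}\mod p)<p/3$ and $(B_{k,j}\mod p)<p/3$ by Assumption~\ref{app-basic-assumption}, the integer $(A_{i,k}\mod p)+(B_{k,j}\mod p)$ lies in $[0,2p/3)\subseteq[0,p)$, hence it is already reduced modulo $p$ and therefore equals $(A_{i,k}+B_{k,j})\mod p=C_{i,j}\mod p$; that is,
$$(A_{i,k}\mod p)+(B_{k,j}\mod p)-(C_{i,j}\mod p)=0.$$

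Then I would substitute $A^{(l)}_{i,k}=\lfloor(A_{i,k}\mod p)/2^l\rfloor$ and $B^{(l)}_{k,j}=\lfloor(B_{k,j}\mod p)/2^l\rfloor$ into $A^{(l)}_{i,k}+B^{(l)}_{k,j}-C^{(l)}_{i,j}$ and bound it on both sides using $x-1<\lfloor x\rfloor\le x$ together with the two displays above: dropping both floors and using $C^{(l)}_{i,j}\ge\frac{C_{i,j}\mod p}{2^l}-3$ gives an upper bound of $3$, while paying $-1$ per floor and using $C^{(l)}_{i,j}\le\frac{C_{i,j}\mod p}{2^l}+2$ gives a lower bound of $-4$. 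Hence $-4\le A^{(l)}_{i,k}+B^{(l)}_{k,j}-C^{(l)}_{i,j}\le 3$, and one may take $b:=A^{(l)}_{i,k}+B^{(l)}_{k,j}-C^{(l)}_{i,j}\in[-10,10]$, as claimed.

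I do not expect a genuine obstacle: once the two ingredients are in place, the estimate is routine. The only point that deserves care is that the invariant on $C^{(l)}$ used above is legitimately available at the moment this lemma is invoked, i.e., that there is no circular dependency. As in Section~\ref{monotone-main}, the correctness argument runs as a downward induction on $l$: the invariant for $C^{(l)}$ is established in the ``subtracting erroneous terms'' step from the level-$(l+1)$ form of this lemma together with the inductive hypothesis on $C^{(l+1)}$, and only afterwards is the level-$l$ form of this lemma (which consumes that invariant) used, namely in the next iteration. The column-monotone adaptation changes none of this, so the same ordering of the arguments carries over unchanged.
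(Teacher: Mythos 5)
Your proposal is correct and is essentially the paper's own argument: the paper proves this lemma simply by invoking the proof of Lemma~\ref{exact-approx}, which is exactly the computation you reproduce (the bounds $\frac{C_{i,j}\bmod p}{2^l}-3\le C^{(l)}_{i,j}\le\frac{C_{i,j}\bmod p}{2^l}+2$ from the maintained invariant, plus the observation that Assumption~\ref{app-basic-assumption} forces $(A_{i,k}\bmod p)+(B_{k,j}\bmod p)=C_{i,j}\bmod p$, yielding $-4\le A^{(l)}_{i,k}+B^{(l)}_{k,j}-C^{(l)}_{i,j}\le 3$). Your remark on the non-circularity of the downward induction is a correct reading of how the invariant is established before the level-$l$ lemma is consumed.
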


Next we argue that our algorithm correctly computes all entries $C^{(l)}_{i, j}$ from $C^{(l+1)}_{i, j}$ and $T^{(l+1)}_b$, for $l=h-1,\cdots, 0$. Let $q$ be the index such that $C_{i, j} = A_{i, q} + B_{q, j}$. By the above lemma, there exists an integer offset $b\in [-10,10]$ such that $A^{(l+1)}_{i, q} + B^{(l+1)}_{q, j} = C^{(l+1)}_{i, j} + b$. Therefore, by construction of polynomial matrices $A^p, B^p$, we have:
$$\begin{aligned}
	C^p_{i, j, b}(x) 
    &= \sum_{k\mid (A^*_{i,k}+B^*_{k,j}=C^*_{i,j}) \land (A^{(l+1)}_{i, k} + B^{(l+1)}_{k, j} = C^{(l+1)}_{i, j} + b)} x^{A^{(l)}_{i, k} - 2 A^{(l+1)}_{i, k} + B^{(l)}_{k, j} - 2 B^{(l+1)}_{k, j}}\\
    &+    \sum_{k\mid (A^*_{i,k}+B^*_{k,j}\neq C^*_{i,j}) \land (A^{(l+1)}_{i, k} + B^{(l+1)}_{k, j} = C^{(l+1)}_{i, j} + b)} x^{A^{(l)}_{i, k} - 2 A^{(l+1)}_{i, k} + B^{(l)}_{k, j} - 2 B^{(l+1)}_{k, j}} \\
    &= \sum_{k\mid (A^*_{i,k}+B^*_{k,j}=C^*_{i,j}) \land (A^{(l+1)}_{i, k} + B^{(l+1)}_{k, j} = C^{(l+1)}_{i, j} + b)} x^{A^{(l)}_{i, k} - 2 A^{(l+1)}_{i, k} + B^{(l)}_{k, j} - 2 B^{(l+1)}_{k, j}} \\
    &+ \sum_{(i,j,[k_0,k_1])\in T^{(l+1)}_b, k\in [k_0,k_1]} x^{A^{(l)}_{i, k} - 2 A^{(l+1)}_{i, k} + B^{(l)}_{k, j} - 2 B^{(l+1)}_{k, j}} \\
    &= x^{-2(C^{(l+1)}_{i,j}+b)}\cdot\sum_{k\mid (A^*_{i,k}+B^*_{k,j}=C^*_{i,j}) \land (A^{(l+1)}_{i, k} + B^{(l+1)}_{k, j} = C^{(l+1)}_{i, j} + b)} x^{A^{(l)}_{i, k} + B^{(l)}_{k, j}} + R^p_{i,j,b}(x) 
\end{aligned}$$
Since $A^*_{i,q}+B^*_{q,j}=C^*_{i,j}$ and $A^{(l+1)}_{i, q} + B^{(l+1)}_{q, j} = C^{(l+1)}_{i, j} + b$, when we extract $A^{(l)}_{i, q} + B^{(l)}_{q, j}$ from terms of $C_{i, j, b}^p(x) - R_{i, j, b}^p(x)$, it satisfies
$$\left\lfloor \frac{A_{i,q}\mod p}{2^l}\right\rfloor+\left\lfloor \frac{B_{q,j}\mod p}{2^l}\right\rfloor \leq \frac{(A_{i,q}+B_{q,j})\mod p}{2^l} =
\frac{C_{i,j}\mod p}{2^l}\leq \left\lfloor\frac{(C_{i,j}\mod p) +2^l-1}{2^l}\right\rfloor$$
$$\left\lfloor \frac{A_{i,q}\mod p}{2^l}\right\rfloor+\left\lfloor \frac{B_{q,j}\mod p}{2^l}\right\rfloor \geq \frac{((A_{i,q}+B_{q,j})\mod p)-2(2^l-1)}{2^l} \geq \left\lfloor\frac{(C_{i,j}\mod p)-2(2^l-1)}{2^l}\right\rfloor$$

Thus the term which gives $A^{(l)}_{i, q} + B^{(l)}_{q, j}$ can give a valid $C^{(l)}_{i,j}$. Also for every term which gives $A^{(l)}_{i, k} + B^{(l)}_{k, j}$ which satisfies $A^*_{i,k}+B^*_{k,j}=C^*_{i,j}$ and $A_{i, k} +B_{k,j}\geq C_{i.j}$, 
$$\left\lfloor \frac{A_{i,k}\mod p}{2^l}\right\rfloor+\left\lfloor \frac{B_{k,j}\mod p}{2^l}\right\rfloor \geq \frac{((A_{i,k}+B_{k,j})\mod p)-2(2^l-1)}{2^l} \geq \left\lfloor\frac{(C_{i,j}\mod p)-2(2^l-1)}{2^l}\right\rfloor$$
So by choosing the minimum, we can get a valid $C^{(l)}_{i,j}$.

\end{document}